\documentclass[12pt]{arxiv_version}

\usepackage{mathtools}
\usepackage{mathrsfs}
\usepackage{stmaryrd}
\usepackage{bm}
\usepackage{enumitem}
\usepackage{booktabs}
\usepackage{array}
\usepackage{longtable}
\usepackage{adjustbox}

\usepackage{algorithm}
\usepackage{algpseudocode}

\usepackage{graphicx}
\usepackage{subcaption}
\usepackage{tikz}
\usepackage{multirow}
\definecolor{deepjunglegreen}{rgb}{0.0, 0.29, 0.29}

\definecolor{mygreen}{RGB}{0, 100, 0}

\usepackage{cleveref}

\usepackage[most]{tcolorbox}

\DeclareMathOperator*{\argmax}{arg\,max}

\newcommand{\mSet}[1]{\{#1\}}

\NewDocumentCommand{\randactiv}{}{{\boldsymbol{\sigma}}}

\NewDocumentCommand{\randEdge}{}{{\boldsymbol{E}}}

\NewDocumentCommand{\randcirc}{}{{\boldsymbol{\hat{f}}}}

\NewDocumentCommand{\Bool}{}{{\mathbb{G}^{\text{Bool}}}}
\NewDocumentCommand{\Bary}{}{{\mathbb{G}^{\star}}}

\newcommand{\eqdef}{\mathrel{\stackrel{\scriptscriptstyle\mathrm{def}}{=}}}

\newcommand{\R}{\mathbb{R}}
\newcommand{\N}{\mathbb{N}}

\newcommand{\AND}{\texttt{AND}}

\definecolor{junglegreen}{rgb}{0.16, 0.67, 0.53}

\NewDocumentCommand{\AK}{moo}{
    \IfValueF{#2}{
        \IfValueF{#3}{%
                        {{%
                            \textcolor{deepjunglegreen}{%
                                \textbf{Annie:}%
                                \textit{{#1}}%
                            }%
                        }}%
        }%
    }
    \IfValueT{#2}{
        \IfValueF{#3}{
                        {{\scriptsize
                            \textcolor{deepjunglegreen}{
                                \hfill\\
                                    \textbf{Annie:}
                                    \textit{{#1}}
                                \hfill\\
                            }
                        }}
        }
        \IfValueT{#3}{
                        \marginnote{{\scriptsize
                            \textcolor{deepjunglegreen}{ 
                            \textbf{Annie:}
                            \textit{{#1}}
                            }
                        }}
                }
        }
                    }

\newcommand{\ModelName}{Stochastic Boolean Circuit}
\newcommand{\ModelNameShort}{SBC}

\title{Certifiable Boolean Reasoning Is Universal}

\coltauthor{
    \Name{Wenhao Li}\thanks{Equal contribution. Corresponding authors.}
    \Email{chriswenhao.li@mail.utoronto.ca}\\
    \addr Department of Mechanical \& Industrial Engineering, University of Toronto, Canada; 
    \AND 
    \Name{Anastasis Kratsios}\footnotemark[1]
    \Email{kratsioa@mcmaster.ca}\\
    \addr Department of Mathematics, McMaster University and Vector Institute, Canada; 
     \\
     The Ennio De Giorgi Mathematical Research Centre, Scuola Normale Superiore di Pisa, Italy.
    \AND           
    \Name{Hrad Ghoukasian} \Email{ghoukash@mcmaster.ca}\\
    \addr Department of Mathematics, McMaster University and Vector Institute, Canada; 
    \AND
    \Name{Dennis Zvigelsky} \Email{dennis.zvigelsky@mcmaster.ca}\\
    \addr Department of Mathematics, McMaster University and Vector Institute, Canada; 
}

\begin{document}
\maketitle

\begin{abstract}
The proliferation of agentic systems has thrust the reasoning capabilities of AI into the forefront of contemporary machine learning. 
While it is known that there \emph{exist} neural networks which can reason through any Boolean task $f:\{0,1\}^B\to\{0,1\}$, in the sense  that they emulate Boolean circuits with fan-in $2$ and fan-out $1$ gates, trained models have been repeatedly demonstrated to fall short of these theoretical ideals.
This raises the question: \textit{Can one exhibit a deep learning model which \textbf{certifiably} always reasons and can \textbf{universally} reason through any Boolean task?} Moreover, such a model should ideally require few parameters to solve simple Boolean tasks.

We answer this question affirmatively by exhibiting a deep learning architecture which parameterizes distributions over Boolean circuits with the guarantee that, for every parameter configuration, a sample is almost surely a valid Boolean circuit (and hence admits an intrinsic circuit-level certificate).  We then prove a universality theorem: for any Boolean $f:\{0,1\}^B\to\{0,1\}$, there exists a parameter configuration under which the sampled circuit computes $f$ with arbitrarily high probability.  When $f$ is an $\mathcal{O}(\log B)$-junta, the required number of parameters scales linearly with the input dimension $B$. 

Empirically, on a controlled truth-table completion benchmark aligned with our setting, the proposed architecture trains reliably and achieves high exact-match accuracy while preserving the predicted structure: every internal unit is Boolean-valued on $\{0,1\}^B$. Matched MLP baselines reach comparable accuracy, but only about $10\%$ of hidden units admit a Boolean representation; i.e.\ are two-valued over the Boolean cube.
\end{abstract}

\begin{keywords}%
Certifiable Reasoning, AI Safety, AI Reasoning, Boolean Circuits, Universal Approximation, Interpretability.
\end{keywords}

\section{Introduction}
\label{s:Intro}

With the recent boom in 
agentic models that act over extended horizons, e.g.~\cite{shinn2023reflexion,wang2024voyager}, 
AI-based reasoning systems that orchestrate tools and structured search, e.g.~\cite{shen2023hugginggpt,schick2023toolformer,yao2023treeofthoughts,yao2023react}, and 
iterative self-improvement techniques for refining solutions or bootstrapping rationales, e.g.~\cite{madaan2023self,wang2023selfconsistency,zelikman2022star,chen2025magicore,song2025thinking},
(purported) AI reasoning has become commonplace in modern machine-learning pipelines, increasingly influencing end-user decisions.
This burst of AI reasoning models rests on well-forged connections between deep learning and complexity theory from: the Turing machine lens~\cite{siegelmann2012neural,perez2018on,chung2021turing,mali2021neural,bournez2025universal} for recursive/looped networks; the Boolean~\cite{maass1991sigmoidvsbool,siu1993division,merrill2022saturated,merrill2023parallelism,chiang2024uniformtc0} and general circuit complexity lenses~\cite{lowd2008learning,kratsios2025quantifying} for non-recursive architectures (e.g., MLPs and transformers); as well as connections to mathematical logic~\cite{blumer1989learnability,karpinski1997polynomial,margulies2016polynomial,cook_nguyen_2010_logical_foundations}. Broadly speaking, these results show that there \textbf{exist} deep learning models which can, in principle, be configured to reason through arbitrary tasks; 
when reasoning is understood as
exactly emulating
an algorithm, or equivalently a \textit{circuit}, computing some (Boolean) function.  

In contrast, there is mounting evidence that most real-world \textit{trained} AIs \textit{fail} to reason through complex multi-step reasoning tasks~\cite{keysers2020measuring,wei2022chain,zhao2024mathtrap}, arithmetic~\citep{cobbe2021training} or more sophisticated mathematical reasoning~\citep{hendrycks2021measuring}, elementary deductive logic~\cite{berglund2023reversal}, inductive reasoning~\cite{olsson2022induction,zhao2024mathtrap}, and several other forms of illusory reasoning peacocking~\cite{song2024survey}. 
Experimental evidence increasingly indicates that trained deep learning pipelines produce effective memorizers of exorbitantly many training instances~\cite{bender2021stochastic,wu2024reasoning,shojaee2025the}.

Even seemingly innocuous training-data contamination~\cite{zou2025poisonedrag,chaudhari2026exploiting} or adversarial prompt attacks~\cite{liu2024formalizing,rajeev2025cats} are known to devastate these models’ reasoning.  This heaps on the evidence that AIs
\emph{trained} on \emph{data} often \emph{fail} to reason and, even when they do, their chain-of-thought can be broken by adversarial prompt injection.  Thus, there is very real \textbf{AI safety concern}; namely, how can real-world AIs (not those configured within a mathematical vacuum e.g.~\cite{kratsios2025quantifying}) be guaranteed to certifiably reason.
This leads to a natural question:
\begin{tcolorbox}
How can one construct a neural network architecture which always admits a \emph{certifiable}, \emph{universal}, and \emph{efficient} circuit-level interpretation, while remaining \emph{practically competitive} on modern propositional reasoning benchmarks?
\end{tcolorbox}
More precisely, we seek an architecture satisfying:
\begin{enumerate}
    \item[(i)] \textbf{Certifiable Reasoning:} For \emph{every} choice of network parameters, the model induces a meaningful distribution over ``simple and interpretable'' Boolean circuits, i.e. here, Boolean circuits whose gates have fan-in $2$ and fan-out $1$. In particular, any draw from this distribution yields, almost surely, a valid and interpretable Boolean circuit.
    \item[(ii)] \textbf{Universality:} With sufficiently many neurons, the parameters \emph{can be} configured so that the model samples a circuit of the form~(i), computing
    any Boolean function on $B$ bits; with arbitrarily high probability.
    \item[(iii)] \textbf{Efficiency:} If $f:\{0,1\}^B\to \{0,1\}$ is sparse,
    then achieving~(ii) requires only $\mathcal{O}(B)$ parameters.
    \item[(iv)] \textbf{Competitive Performance:} The model trains reliably and attains state-of-the-art performance on propositional logic tasks.
\end{enumerate}
Our main contribution is the construction of a simple neural network model which answers the theoretical desiderata~(i)--(iii) \textit{affirmatively}; while validating~(iv) experimentally.

\subsection{Contribution}
\label{s:Intro__ss:Contribution}
We answer this 
sensitive question in the foundations of reasoning-based AI safety \textbf{affirmatively} for Boolean tasks.
Our solution introduces trainable stochastic Boolean circuits with fan-in $2$ and fan-out $1$ gates; i.e.\ for any parametric configuration (outlined in Figure~\ref{fig:Model} and formally defined in~\eqref{eq:Model_Def}).  
We prove that our model samples a meaningful Boolean circuit almost surely (Theorem~\ref{thrm:StructureAware}).
We then show that one may select the model parameters so that a typically sampled circuit will compute any 
desired Boolean function with arbitrarily high probability, given enough neurons (Theorem~\ref{thrm:UniversalReasoning}), i.e., it is universal.
For sufficiently sparse Boolean functions, so-called juntas, the number of trainable parameters can be linear in the number of input bits.

All our results are verified experimentally: although multilayer perceptrons (MLPs) exhibit comparable reasoning performance when completing the truth tables of randomly sampled Boolean propositions, most (roughly $80\%$) of the neurons in these models are verifiably not computing any Boolean gate, casting doubt on their reasoning abilities (cf.~Definition~\ref{defn:BNR_Ratio}), whereas this cannot be the case for our AI.

\subsection{Organization of Paper}
\label{s:Intro__ss:Organization}
Section~\ref{s:Prelim} aggregates the notation, background, and terminology required to formulate and frame our results.
Section~\ref{s:Model} introduces our model and explains the theoretical principles motivating each architectural innovation.
Section~\ref{s:Theory} contains our main results; namely, a universal computation theorem (Theorem~\ref{thrm:UniversalReasoning}), its quantitative $\mathcal{O}(\log(B))$-junta version (Theorem~\ref{thrm:UniversalReasoning__Quantitative}), and the certified reasoning structure of our model, guaranteed for all parameter configurations (Theorem~\ref{thrm:StructureAware}).
Our theoretically principled architecture is then evaluated and compared against classical deep learning models experimentally in Section~\ref{s:Experiments}.
All proofs and additional experimental details are relegated to the appendices.

\section{Preliminaries}
\label{s:Prelim}

\subsection{Setting and Notation}
\label{s:Prelim__ss:Setting}
We henceforth fix a probability space $(\Omega,\mathcal{F},\mathbb{P})$ on which all random variables are defined.  All random variables will be denoted in boldface.
We write $\mathbb{N}$ for the set of non-negative integers, $\mathbb{N}_+$ for the set of positive integers and, for every $N\in \mathbb{N}$ we write $\mathbb{N}_N\eqdef \{n\in \mathbb{N}:\, n\ge N\}$, $[N]\eqdef \{n\in \mathbb{N}:\, n\le N\}$, and $[N]_+\eqdef [N]\cap [1,\infty)$.  

\noindent
For any $n,m\in \mathbb{N}_+$, each $i\in[n]_+$ and $j\in [m]_+$ and every $n\times m$ matrix $A$ we write $A_{i:}$ for the $i^{th}$ row of $A$ and $A_{:j}$ for its $j^{th}$ column.  All vectors are column vectors.  
For any matrix (resp.\ vector) $\|A\|_0$ counts the number of non-zero entries in $A$.
The set of \textit{all} binary Boolean gates of fan-in $2$ and fan-out $1$, cf.~Section~\ref{s:Prelim__ss:BooleanLogic}, is denoted by $\Bary$.

\noindent
Fixing $N\in \mathbb{N}_+$, $\mathbf{1}_N\in \mathbb{R}^N$ denotes the vector of $1$s.  We denote the $N$-simplex by $\Delta_N\eqdef \{w\in [0,1]^N:\, \mathbf{1}_N^{\top}w=1\}$, its relative interior is
$\dot{\Delta}_N \eqdef \{w\in(0,1)^N:\, \mathbf{1}_N^\top w = 1\}$.  We
define the \textit{softmax} function $\operatorname{SM}_N:\mathbb{R}^N\to \Delta_N$ as mapping any $x\in \mathbb{R}^N$ to $\operatorname{SM}_N(x)\eqdef \big(
{e^{x_n}}/{\sum_{m=1}^N\, e^{x_m}}\big)_{n=1}^N$.
When the dimension is clear from context, we simply write $\operatorname{SM}$.

\subsection{Boolean Circuits of Fan-in 2 and Fan-out 1, and the \texorpdfstring{$\operatorname{AC}^0$}{Alternating Constant Depth} Class}
\label{s:Prelim_AC}

\noindent
\textbf{Directed Acyclic Graphs}
A directed graph $G=(V,E)$ is a pair of a (non-empty) set $V$ of \textit{nodes} and a collection of ordered pairs
$E\subseteq V^2\setminus \{(v,v)\}_{v\in V}$ representing directed edges.  

Given two nodes $v,w\in V$, a directed path between $v$ and $w$ is a sequence
$(v,v_1)$, $(v_1,v_2)$, $\dots$, $(v_k,w)\in E$ ``connecting $v$ to $w$,'' and $k+1$ is the \textit{length} of this directed path.  
Such a $G$ is called \textit{connected} if there is a directed path between any distinct pair of nodes. $G$ is called \textit{acyclic} if there are no directed paths from a node to itself.  
If $G$ is connected and acyclic we call it a \textit{connected DAG}.    
Any two nodes $u,v\in V$ are partially ordered, written $u\lesssim v$,
cf.~\citep[Proposition~2.1.3]{bang2008digraphs}, 
if there is a directed path originating at $u$ and terminating at $v$.  
The cardinality of the largest set of unordered vertices in $G$ is the \textit{width} of $G$.  
The \textit{depth} of $G$ is the length of the longest directed path in $G$.\footnote{For the directed graph underlying the computational graph of any MLP these notions of width and depth coincide with the usual uses of the words in MLP.}

In contrast, an undirected graph $G_u = (V, E_u)$ is a pair of a (non-empty) set $V$ of \emph{nodes} and a collection of \emph{unordered} pairs $E_u \subseteq V^2 \setminus\mSet{(v, v)}_{v \in V}$ representing edges. An undirected graph is called a \emph{tree} if for every distinct $v_1,v_2\in V$ there exists a unique path from $v_1$ to $v_2$.
A \textit{DAG} is called a directed \textit{tree} if
its underlying undirected graph is a tree.
The set of nodes with no directed path terminating thereat are called \textit{roots/input nodes} and the set of nodes for which there is no directed path originating thereat are called \textit{leaves/output nodes}.  A binary tree is a tree in which every non-root node $v\in V$ has exactly two distinct nodes $u,w\in V$ such that $(u,v),(w,v)\in E$.

\noindent
\textbf{Boolean Circuits}
\textit{Boolean gate}s $g$ with fan-in $2$ and fan-out $1$ are ``simple'' elementary Boolean computations, i.e., Boolean functions compatible with the structure of a binary (directed) tree;
more precisely, $g:\{0,1\}^2\to \{0,1\}$; e.g.\ $\operatorname{AND}$, $\operatorname{OR}$, or the lifted not gate
$\operatorname{NOT}(x_1,x_2)\eqdef 1-x_1$.
There are exactly $16$ such gates; cf.~\ref{tab:all_possible_gates}
in Appendix~\ref{s:GateEnumeration}.

Fix $B,\Delta,\Upsilon\in\mathbb{N}_+$.
In this paper, a \textit{Boolean circuit} with $B$ input bits, depth $\Delta$, and width $\Upsilon$ refers to a triple
$\mathcal{C}=(V,E,g_{\cdot})$ of a DAG $G=(V,E)$ with
$B$ root/input nodes $v_{\operatorname{in}:1},\dots,v_{\operatorname{in}:B}\in V$ and
one terminal/output node $v_{\operatorname{out}}\in V$, {{such that: the in-degree of every
non-root node is exactly $2$; except for an initial \textit{bit-lifting channel} which selects bits and their negations.  Such a bit-lifting channel is formalized by a map $x\mapsto \Pi (x,\mathbf{1}_B-x)$ where $\Pi$ is a matrix with $1$-hot vectors as rows.
}}
The set of all nodes which are neither input nodes nor outputs of the bit-lifting channel
are called \textit{computational nodes} and are denoted by
$
V_{\operatorname{comp}}
\eqdef
V \setminus 
\{v_{\operatorname{in}:b}\}_{b=1}^B
$.
Additionally, $g_{\cdot}=(g_v)_{v\in V_{\operatorname{comp}}}$ 
is a family of Boolean gates of fan-in $2$ and fan-out $1$ .
The set of all such Boolean circuits%
\footnote{Any Boolean gate of fan-in $2$ and fan-out $1$ is allowable in this class; one may also consider the subclass where only $\operatorname{AND}$, $\operatorname{OR}$, and $\operatorname{NOT}$ gates are used, with no essential change to the theory.}~%
is denoted by $\operatorname{CIRCUIT}_{\Delta,\Upsilon}^B$.
Since $\mathcal{C}$ has depth $\Delta$, the vertex set admits a layering
$
V=\bigsqcup_{l=0}^{\Delta} V_l
$,
with $V_0=\{v_{\operatorname{in}:b}\}_{b=1}^B$, $V_\Delta=\{v_{\operatorname{out}}\}$, such that every directed edge $(u,v)\in E$ satisfies $u\in V_l$ and $v\in V_{l+1}$ for some $l\in\{0,1,\dots,\Delta-1\}$.
Furthermore, since $\mathcal{C}$ has fan-in $2$, every non-input vertex $v\in V_l$ with $l\ge 2$ has exactly two predecessors $v_{-1},v_{-2}\in V_{l-1}$ satisfying $(v_{-1},v),(v_{-2},v)\in E$.

\noindent
\textbf{Boolean Reasoning as Computation of Boolean Functions}
Applying the current AI reasoning theory lens~\cite{merrill2022saturated,chiang2024uniformtc0} we understand reasoning through a ``Boolean task'' to mean the computation of a Boolean function by a Boolean circuit in
$\cup_{\Delta,B,\Upsilon\in \mathbb{N}+} \operatorname{CIRCUIT}_{\Delta,\Upsilon}^B$; the latter being defined in the standard sense of circuit complexity theory, cf.~\cite{jukna2012boolean}, and resting on the classical correspondence between Boolean circuits and first-order propositional logic, cf.~\cite{cook_nguyen_2010_logical_foundations}.
Fix a choice of $B^{\uparrow}\in\mathbb{N}_+$ (with repetition allowed) input variables from $\{x_b\}_{b=1}^B$ and their negations $\{(1+x_b)\,\bmod{2}\}_{b=1}^B$, with $B^{\uparrow}\le \Upsilon$, equivalently encoded by a $B^{\uparrow}\times 2B$ row-stochastic matrix $\Pi$. 
Then every Boolean circuit $\mathcal{C}\in \operatorname{CIRCUIT}_{\Delta,\Upsilon}^B$ induces a Boolean function
$f_{\mathcal{C}}:\{0,1\}^B\to\{0,1\}$ defined recursively as follows:
\begin{equation}
\begin{aligned}\label{eq:recursive_defenition}
f_{\mathcal{C}}(x) &\eqdef f_\Delta(x)
\\
f_l(x) &\eqdef \bigl(g_v\bigl(
f_{l-1}(x)_{v_{-1}},f_{l-1}(x)_{v_{-2}}
\bigr)\bigr)_{v\in V_l},
\qquad l=2,\dots,\Delta,
\\
f_1(x) &\eqdef x_1
\\
x_1 &\eqdef \Pi 
\begin{pmatrix}x\\ \mathbf{1}_B-x\end{pmatrix},
\end{aligned}
\end{equation}
where $v_{-1}$ and $v_{-2}$ refer to the indices on the vector $f_{l-1}(x)$ corresponding the two parent nodes of $v$.  
We say that a Boolean function $f:\{0,1\}^B\to \{0,1\}$ is \textit{computable} if there exists a Boolean circuit $\mathcal{C}$ such that its induced function $f_{\mathcal{C}}$ satisfies
\[
f_{\mathcal{C}}(x)=f(x)\qquad \text{for every }x\in \{0,1\}^B.
\]
There are several \textit{complexity classes} in the theory of non-uniform models of computing/circuit complexity, e.g.\ $\operatorname{AC}^k$, $\operatorname{NC}^k$, $\operatorname{TC}^k$, etc., for $k\in \mathbb{N}$.
These do not describe which Boolean functions are, or are not, computable, but rather classify how hard it is to compute a given computable Boolean function by bounding the depth of the circuits computing it, depending on how ``powerful'' the allowed gates are.
For instance, $\operatorname{AC}^k$ allows unbounded fan-in $\operatorname{AND}$, $\operatorname{OR}$, and $\operatorname{NOT}$ gates; $\operatorname{NC}^k$ restricts to bounded fan-in $\operatorname{AND}/\operatorname{OR}/\operatorname{NOT}$ gates, corresponding to highly parallelizable circuit families; and $\operatorname{TC}^k$ allows the use of more powerful gates such as threshold (majority) gates, which are not computable by more elementary Boolean gate sets; cf.\ e.g.~\cite{Hastad86} in polynomial time.
\hfill\\
A key new set of properties which our model manages to ensure simultaneously is that it almost surely samples a random element of $\operatorname{CIRCUIT}_{\Delta,\Upsilon}^B$ for prescribed $\Delta$, $\Upsilon$, and $B$ (Theorem~\ref{thrm:StructureAware}), and that every such circuit is sampleable by our universal model (Theorem~\ref{thrm:UniversalReasoning}).
Moreover, we may train its weights and biases by differentiable optimization so as to guide it toward any given target circuit (Section~\ref{s:Experiments}).

\subsection{A Test for Latent Boolean Reasoning}
\label{s:BnR}
Before moving onto we ask ourselves: \textit{``What can it mean for a non-Boolean function to latently be performing Boolean reasoning?''} 
If we can formalize this, then we can easily test whether neurons in standard neural networks, e.g.\ MLPs, are performing a latent form of Boolean reasoning.  Whatever our definition may be, it should be aligned with the ideas behind reasoning probes, e.g.~\citep{Alain2016linearprobes,craven1996trepan,dai-etal-2022-knowledge}, and it should always be satisfied for our idealized reasoners; namely, Boolean circuits.

We understand a function $f$ from a Boolean cube $\mSet{0, 1}^B$ with values in some Euclidean space as admitting a ``Boolean neuronal representation", or interpretation, if each component of $f$ can be understood as assigning Boolean values to the inputs of $x$; thus, each component can be interpreted as a neuron.  For this to be possible, of course, each potential neuron of $f$, i.e.\ each component thereof, cannot assign more than two distinct values to the members of the Boolean cube; otherwise there have to be more than two truth values which is impossible.
\begin{definition}[Boolean Neuronal Representability]
\label{defn:BnB}
Let $B_{in},B_{out}\in \mathbb{N}_+$.  A function $f:\{0,1\}^{B_{in}}\to \mathbb{R}^{B_{out}}$ is said to admit a Boolean neuronal representation if: 
for every $i\in [B_{out}]_+$ there is a map $\mathcal{R}_i: \mathcal{X}_i\to \{0,1\}$ which is injective; where $\mathcal{X}_i\eqdef  p_i\circ f(\{0,1\}^{B_{in}})$ and $p_i$ denotes the $i^{\text{th}}$ coordinate projection.
Whenever such $\mathcal{R}_1,\dots,\mathcal{R}_{B_{out}}$ exist, $f$ is said to be
\textit{Boolean neuronally representable (BNR)}, and any map
$
\mathcal{R}:\mathbb{R}^{B_{out}}\to\{0,1\}^{B_{out}}
$
is called a Boolean neuronal representation of $f$, if
\[
\mathcal{R}(f(x))=\big(\mathcal{R}_i(f(x)_i)\big)_{i=1}^{B_{out}}
\qquad\mbox{for all } 
x\in\{0,1\}^{B_{in}}
.
\]
\end{definition}
By construction, every Boolean function is BNR.  More generally, $\{0,1\}$-valued functions such as MLPs with the neuroscience-inspired  \textit{threshold activation function}, e.g.~\cite{mcculloch1943logical}, also achieve a BNR density of $1$ (see Definition~\ref{defn:BNR_Ratio}).  
In contrast, even simple modern neurons, e.g.\ ReLU neurons, easily fail to be BNR (see Example~\ref{ex:BnR} and Proposition~\ref{thm:relu-not-bnr-w.h.p.}).

\section{The {\ModelName} Model}
\label{s:Model}
Our \ModelName\ model (Figure~\ref{fig:Model}) is assembled from three components: a parameterized distribution over fan-in $2$, fan-out $1$ Boolean gates; a distribution over pairs of edges; and a stochastic bit-lifting channel.

\begin{figure}[ht!]
    \centering
    \begin{minipage}[t]{0.49\linewidth}
        \centering
        \includegraphics[width=.35\linewidth]{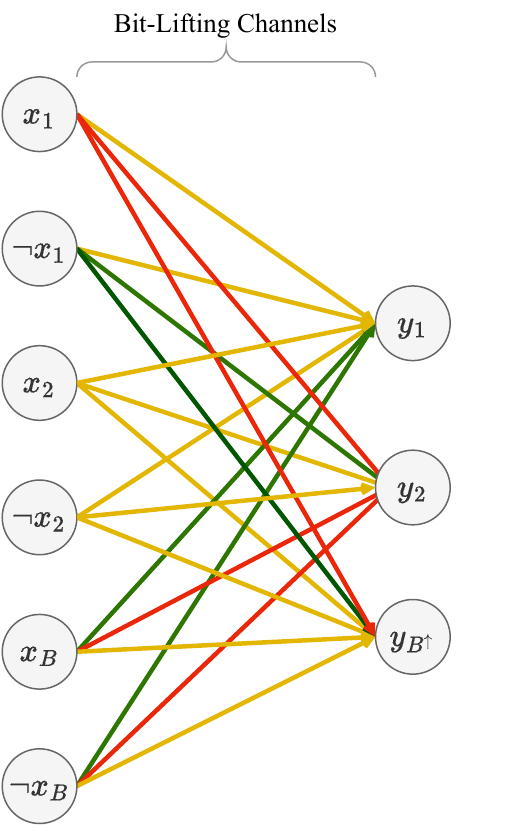}
        \caption*{(a) \textit{Stochastic Bit-Lifting Channels}:
        The first layer of our model employs stochastic bit-lifting channels (cf.~\eqref{eq:lifted_guy}), which select $B^{\uparrow}$ among the $B$ input bits and their negations (duplicates allowed).}
        \label{fig:Model__Bitliftingchannel}
    \end{minipage}\hfill
    \begin{minipage}[t]{0.49\linewidth}
        \centering
        \includegraphics[width=.4\linewidth]{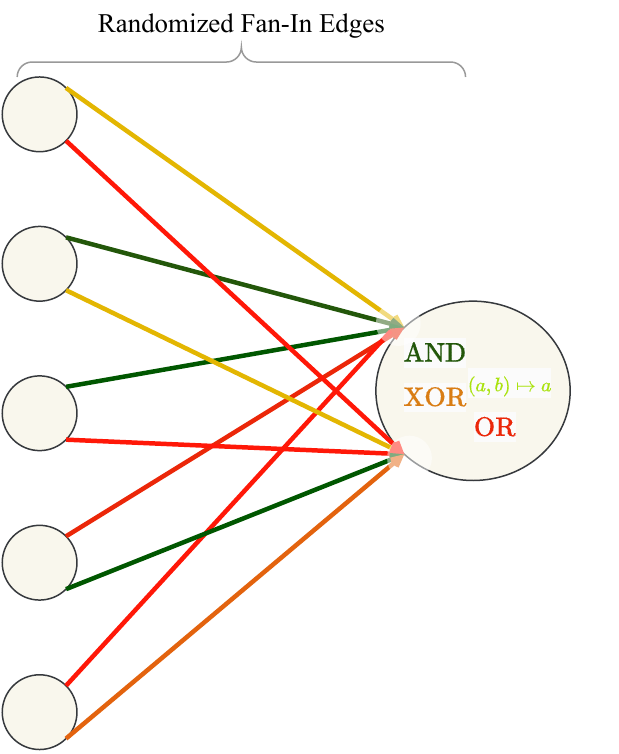}
        \caption*{(b) \textit{Randomized Boolean Neuron}:
        Exactly two edges (fan-in $2$) are selected from the nodes of the previous layer (cf.~\eqref{eq:NoCrossing_EdgeWeights}); then, a Boolean gate is randomly selected (cf.~\eqref{eq:EveryPossibleGateAllAtOnce}) and the output is computed.}
        \label{fig:Model__Computation}
    \end{minipage}
    \caption{Building blocks of the {\ModelName} model:
    Our feedforward model defines a distribution over Boolean circuits (Theorem~\ref{thrm:StructureAware}) whose parameters can be chosen so that it samples any desired circuit with arbitrarily large probability (Theorem~\ref{thrm:UniversalReasoning}).
    \hfill\\
    \textbf{Phase 1.} The \ModelNameShort\ uses stochastic bit-lifting channels to select the most relevant $B^{\uparrow}$ bits and their negations to be passed to the rest of the network.
    \hfill\\
    \textbf{Phase 2.} Each node in each subsequent layer randomly selects exactly two inputs and applies a random fan-in $2$, fan-out $1$ Boolean computation.
    \hfill\\
    \textbf{What Parameters Control:} The {\color{red}{red}} edges (resp.\ gates) illustrate edges which are less likely to be selected, and the {\color{green}{green}} edges (resp.\ gates) are more likely to be drawn. The probability of this selection occurring is determined by the trainable weight parameters of the bit-lifting channels and the neurons.
    }
    \label{fig:Model}
\end{figure}

We now rigorously define the components of the \ModelName\ model.

\subsection{Model Components: Stochastic Bit-Lifting Channels}
\label{s:LiftingChannels}

Let $B,B^{\uparrow}\in \mathbb{N}_+$, and let $W\in \mathbb{R}^{B^{\uparrow}\times 2B}$.  
Defining the stochastic bit-lifting channel $\mathcal{L}(\cdot|W):\mathbb{R}^B\to \mathbb{R}^{B^{\uparrow}}$ for every $x\in \mathbb{R}^B$ by
\begin{equation}
\label{eq:bit_lifting__channel}
        \mathcal{L}(x|W)
    \eqdef
        \operatorname{SM}(W)\,
        \left(
            \begin{pmatrix}
                I_B\\
                -I_B
            \end{pmatrix}
            x
            +
            \begin{pmatrix}
                \mathbf{0}_B\\
                \mathbf{1}_B
            \end{pmatrix}
        \right)
\end{equation}
where the softmax function $\operatorname{SM}$ is applied row-wise to the matrix $W$.  
We use $\mathcal{L}(\cdot|W)$ to parametrize the distribution of the following random variable.
Specifically, let $\mathbf{X}^{\uparrow:W}$ be a $C(\{0,1\}^B,
\{0,1\}^{\uparrow B})$-valued random variable whose $i^{th}$ coordinate $\mathbf{X}^{\uparrow:W}_i$ has distribution, given for any $x\in \{0,1\}^B$, by 
\begin{equation}
\label{eq:lifted_guy}
        \mathbf{X}^{\uparrow:W}_i(x)
    \sim
            \sum_{j=1}^B
                \big(\operatorname{SM}_{2B}(W_{i:})\big)_j
                \,
                \delta_{x_j}
        +
            \sum_{j=1}^{B}
                \big(\operatorname{SM}_{2B}(W_{i:})\big)_{j+B}
                \,
                \delta_{\neg x_j}.
\end{equation}

\subsection{Model Components: Randomized Boolean Neuron}
\label{s:Model__Neuron}

\subsubsection{Edges: Trainable Random Fan-in Edges}
\label{s:Model__ss:UncrossableEdges}
For every $B\in \mathbb{N}_+$ and every temperature parameter $\eta>0$, define the operation $\operatorname{S}_{\eta}:\mathbb{R}^B\times \mathbb{R}^B\to \mathbb{R}^B\times \mathbb{R}^B$ as mapping any $(W_1,W_2)\in \mathbb{R}^B\times \mathbb{R}^B$ to
\begin{equation}
\label{eq:NoCrossing_EdgeWeights}
        \operatorname{S}_{\eta}\big(
            W_1
            ,
            W_2
        \big)
    \eqdef 
        \Big(
            \operatorname{SM}(\eta W_1)
        ,
                \operatorname{SM}\big(
                    \eta(1-\operatorname{SM}(\eta W_1))
            \odot
                \operatorname{SM}(\eta W_2) \big)
        \Big)
\end{equation}
where $\odot$ denotes the componentwise (Hadamard) product on pairs of vectors in $\mathbb{R}^B$.
For $i = \{1, 2\}$, $S_\eta(W_1, W_2)_i$ represents the distribution in $\dot{\Delta}_B$ for input $i$ of a given Boolean fan-in 2 gate. 
\subsubsection{Gates: Trainable Stochastic Boolean Gates}
\label{s:Model__Neuron___GateSelector}
    Let
$\bar{x}_1,\ldots,\bar{x}_4$ be the elements of $\{0,1\}^2$ ordered as
$(0,0),(0,1),(1,0),(1,1)$. For every truth assignment $Z \in \{0,1\}^4$, define the quadratic interpolant  (note differentiable) 
$\tilde{\sigma} : \{0,1\}^2 \to \{0,1\}$ for each $x \in \{0,1\}^2$ by
    \begin{align}
    \label{eq:basic_activation}
         \tilde{\sigma}
         (x|Z) 
        \eqdef 
            \sum_{i=1}^4\,
                \varsigma(x-\bar{x}_i)Z_i,
    \quad\mbox{ where }\quad
            \varsigma(x)
        \eqdef
            \begin{cases}
                c
                \,
                e^{1/(4\|x\|_2^2-1)}
                & \mbox{ if } 
                \|x\|_2 < \tfrac{1}{2}
            \\
                0 & \mbox{ if } \|x\|_2\ge \tfrac{1}{2}
            \end{cases}
    \end{align}
    where $c\eqdef e$.
    Stacking this, we extend~\eqref{eq:basic_activation} to $\bar{\sigma}:\{0,1\}^2\to \{0,1\}^{16}$ sending any $\{0,1\}^2$ to
    \begin{equation}
    \label{eq:basic_activation__total}
            \bar{\sigma}
            (x)
        \eqdef
            \big(
                \tilde{\sigma}
                (x|Z)
            \big)_{Z\in \{0,1\}^4}
    .
    \end{equation}
    We now create an $\mathbb{R}^{16}$-valued non-linearity whose coordinates are simply $(\tilde{\sigma}(x|Z)$) with the $Z$ running through the different values of the four dimensional Boolean hypercube. Namely, we consider the non-linearity $\sigma:\mathbb{R}^2\to \mathbb{R}^{16}$ given for any $x\in \mathbb{R}^2$ by
    \begin{equation}
    \label{eq:EveryPossibleGateAllAtOnce}
            \quad \sigma(x) 
        =
            \big(
                \tilde{\sigma}
                (x|Z^{(i)})
            \big)_{i=1}^{16},
    \end{equation}
    where $Z^{(1)}=(0,0,0,0),\dots,Z^{(16)}=(1,1,1,1)$ is an enumeration of $\{0,1\}^4$.

Consider the stochastic Boolean gate selector defined for any trainable parameter
$W_{\Sigma}\in\mathbb{R}^{16}$ and any input $x\in\mathbb{R}^2$ by
$\Sigma(x\mid W_{\Sigma}) \eqdef \operatorname{SM}_{16}(W_{\Sigma})^{\top}\sigma(x)$.
The quantity is the pointwise mean, of any random variable $\randactiv^{W_\Sigma}:\Omega\to C^{\infty}(\mathbb{R}^2,\mathbb{R})$ given for any outcome $\omega \in \Omega$ by
\begin{equation}
\label{eq:Random_Gate}
\begin{aligned}
        \randactiv^{W_\Sigma}(\omega)(x)
    & \eqdef 
        \sum_{i=1}^{16}\,
            X_i^{W_\Sigma}(\omega)
            \, 
            \tilde{\sigma}(x|Z^{(i)})
\\
\mbox{ s.t. }
    \mathbb{P}(X^{W_\Sigma}=e_i) & \eqdef \big(\operatorname{SM}_{16}({W_\Sigma}) \big)_i
\end{aligned}
\end{equation}
for each $i\in [16]_+$, 
where $x$ is an arbitrary element of $\mathbb{R}^2$.

\subsubsection{Random Circuit Network Recursion}
\label{s:Model__ss:Model}
Fix a depth parameter $\Delta \in \mathbb{N}_+$, widths $(B_{in},\dots,B_{out})\in \mathbb{N}_+^{\Delta+1}$, and a temperature parameter $\eta\ge 0$.
For every layer  $l\in [\Delta-1]_+$ and every set of node weights $W^{(l:\sigma)}\in \mathbb{R}^{B_{l+1}\times 16}$, 
we define the \textbf{stochastic Boolean gate}/activation function $\randactiv^{W^{(l:\sigma)}}:\mathbb{R}^{B_{l+1}}\times 
\mathbb{R}^{B_{l+1}}
\to \mathbb{R}^{B_{l+1}}$ as mapping every  $(x,y)\in \mathbb{R}^{B_{l+1}}\times \mathbb{R}^{B_{l+1}}$ to
\begin{equation}
\label{eq:random_activation}
        \randactiv^{W^{(l:\sigma)}}(x, y)
    \eqdef 
        \big(
            \randactiv^{W^{(l:\sigma)}_j}
            (x_j,y_j)
        \big)_{j=1}^{B_{l+1}}
.
\end{equation}
Then, $\randactiv^{W^{(l:\sigma)}}$ specifies each gate at every node of the stochastic Boolean circuit implementing our model as parameterized, by the trainable matrix $W^{(l:\sigma)}$.  
At every layer $ l\in [\Delta - 1]_+$ the edges in our stochastic Boolean circuit are defined by the 
$B_{l+1}\times B_l$ 
\begin{equation}
\label{eq:edge_selector_mean}
        \randEdge^{(l:1)}
    \eqdef 
        \begin{bmatrix}
            \operatorname{SM}(\eta W^{(l:1)}_1)
        \\ 
        \vdots
        \\
            \operatorname{SM}(\eta W^{(l:1)}_{B_{l+1}}) 
        \end{bmatrix}
\mbox{ and }
        \randEdge^{(l:2)}
    \eqdef 
        \begin{bmatrix}
                \operatorname{SM}\Big(
                        \eta\big(1-\operatorname{SM}(
                            \eta W^{(l:1)}_1
                        )
                    \big)
                \odot
                    \operatorname{SM}(\eta\,W_1^{(l:2)})
            \Big)
        \\ 
        \vdots
        \\
            \operatorname{SM}\Big(
                        \eta\big(1-\operatorname{SM}(
                            \eta W^{(l:1)}_{B_{l+1}}
                        )
                    \big)
                \odot
                    \operatorname{SM}(\eta\,W_{B_{l+1}}^{(l:2)})
            \Big)
        \end{bmatrix}
,
\end{equation}
where $\odot$ denotes the componentwise (Hadamard) product on pairs of vectors in $\mathbb{R}^{B_l}$.
Let $\randEdge^{(l:1:e)}$ and $\randEdge^{(l:2:e)}$ denote the corresponding
sampled adjacency matrices whose row laws are specified by~\eqref{eq:edge_selector_mean}
(i.e., for each $j\in[B_{l+1}]_+$, the $j$-th row of $\mathbf{E}^{(l:1:e)}$ and
$\mathbf{E}^{(l:2:e)}$ is drawn as a one-hot vector in $\{e_1,\dots,e_{B_l}\}$
with mean given by the $j^{th}$ row of $E^{(l:1)}$ and $E^{(l:2)}$, respectively).
We may now formally define our \ModelNameShort\ model.
\begin{definition}[The \ModelName\ Model]
\label{defn:Model}
Fix $B_{in},B^{\uparrow},\Delta,B_{out}\in \mathbb{N}_+$.  The \ModelName\ is a map $\randcirc:\Omega\times \{0,1\}^{B_{in}}\to \{0,1\}^{B_{out}}$ defined by iteratively mapping any $x\in \{0,1\}^{B_{in}}$ to\footnote{Following standard probability theory convention, we suppress the dependence of the stochastic Boolean gates $\{\randactiv^{W^{(l:\sigma)}}\}_{l=1}^{\Delta-1}$, random output of the bit lifting channel  $\mathbf{X}^{\uparrow:W}$, and random adjacency matrices $\big\{\randEdge^{(l:j:e)}\big\}_{l=1,\; j=1}^{\Delta-1,\;2}$ on the outcome $\omega \in \Omega$.}
\begin{equation}
\label{eq:Model_Def}
\begin{aligned}
        \randcirc(x)
    & \eqdef 
        \boldsymbol{x}^{(\Delta)}
\\
        \boldsymbol{x}^{(l+1)}
    & \eqdef
        \randactiv^{W^{(l:\sigma)}}\big(
                \randEdge^{(l:1:e)}\,
                    \boldsymbol{x}^{(l)}
            ,
                \randEdge^{(l:2:e)}\,
                    \boldsymbol{x}^{(l)}
        \big)
    \mbox{ \qquad for } l=1,\dots,\Delta -1 
\\
        \boldsymbol{x}^{(1)}
    & \eqdef 
        \mathbf{X}^{\uparrow:W}(x) 
,
\end{aligned}
\end{equation}
where the random matrices $\randEdge^{(l:1:e)}$ and $\randEdge^{(l:2:e)}$ are as in~\eqref{eq:edge_selector_mean}, the stochastic Boolean gates $\randactiv^{W^{(l:\sigma)}}$ are as in~\eqref{eq:random_activation}, and the stochastic bit-lifting channel $\mathbf{X}^{\uparrow:W}$ is as in~\eqref{eq:lifted_guy}.  For $l=1,\dots,\Delta$, the dimensions of these objects are such that the composition defining $\randcirc$ is well-defined for every $(\omega,x)\in \Omega\times \{0,1\}^{B_{in}}$.
\end{definition}

\subsection{Explaining Each Component of the \ModelNameShort\ Model}
We now justify our parameterizations of each component of the proposed \ModelName\ model by briefly summarizing their key interpretability and expressivity properties.

For any choice of weight matrix $W$, the stochastic bit-lifting channel $\mathcal{L}(\cdot\,|\,W)$ always defines a random function taking values in the $B^{\uparrow}$-dimensional Boolean cube. Moreover, $W$ can be chosen, during training, so that any desired subset of variables in $\{0,1\}^B$, or their negations, is repeated or omitted an arbitrary number of times.  
\begin{proposition}[Stochastic Bit-Lifting Channels]
\label{prop:lifting_channels}
Let $B,B^{\uparrow}\in \mathbb{N}_+$, let $W\in \mathbb{R}^{B^{\uparrow}\times 2B}$, and let $\mathbf{X}^{\uparrow:W}$ be defined in~\eqref{eq:lifted_guy}.  
Then, $\mathbf{X}^{\uparrow:W}(x)\in \{0,1\}^{B^{\uparrow}}$
$\mathbb{P}$-a.s.
\hfill\\
Moreover, if $W$ is binary with only one non-zero entry per row then: for every $0<\delta \le 1$ there exists some $\eta>0$ such that 
\[
\resizebox{1\hsize}{!}{$
        \mathbb{P}\Biggl(
            \big(\forall i \in [B]_+\big)\,
                    \#\big\{
                        X_j^{\uparrow:
                        {\eta}
                        W}=x_i
                    \big\}_{j\in [B^{\uparrow}]_+}
                =   
                    \|W_{:i}\|_0
            \mbox{ and }
                    \#\big\{
                        X_j^{\uparrow:
                        {\eta}
                        W}=\neg\,x_i
                    \big\}_{j\in [B^{\uparrow}]_{+}}
                =   
            \|W_{:i+B}\|_0
        \Biggr)
    \ge
        1-\delta
$}
.
\]
\end{proposition}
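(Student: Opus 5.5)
The first claim follows directly from the form of the law in~\eqref{eq:lifted_guy}. For each $i\in[B^{\uparrow}]_+$ and each $x\in\{0,1\}^B$, the law of $\mathbf{X}^{\uparrow:W}_i(x)$ is a convex combination of Dirac masses at $x_j$ and $\neg x_j=1-x_j$, $j\in[B]_+$, with weights given by the probability vector $\operatorname{SM}_{2B}(W_{i:})\in\dot{\Delta}_{2B}$. Every atom lies in $\{0,1\}$, so $\mathbb{P}(\mathbf{X}^{\uparrow:W}_i(x)\in\{0,1\})=1$. Since $\{0,1\}^B$ and $[B^{\uparrow}]_+$ are finite, intersecting these finitely many probability-one events shows that $\mathbf{X}^{\uparrow:W}(x)\in\{0,1\}^{B^{\uparrow}}$ for every $x$ simultaneously, $\mathbb{P}$-a.s.

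For the second claim, I will realize the channel through latent one-hot selections, as in the model description. For each $i$, let $\mathbf{J}_i\in[2B]_+$ be a random index with $\mathbb{P}(\mathbf{J}_i=k)=\operatorname{SM}_{2B}(\eta W_{i:})_k$. The output is then $x_{\mathbf{J}_i}$ if $\mathbf{J}_i\le B$ and $\neg x_{\mathbf{J}_i-B}$ otherwise, and the rows are independent. Let $k_i$ be the unique column with $W_{ik_i}=1$. The key computation is
\[
\operatorname{SM}_{2B}(\eta W_{i:})_{k_i}=\frac{e^{\eta}}{e^{\eta}+(2B-1)}\xrightarrow[\eta\to\infty]{}1 .
\]
A union bound over the $B^{\uparrow}$ rows then gives
\[
\mathbb{P}\big(\exists i:\ \mathbf{J}_i\neq k_i\big)\le B^{\uparrow}\,\frac{2B-1}{e^{\eta}+2B-1}.
\]
Choosing $\eta\ge\log\!\big((2B-1)(B^{\uparrow}/\delta-1)\big)$ makes the right-hand side at most $\delta$. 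When $\delta\ge1$ or $B^{\uparrow}\le\delta$, the argument of the logarithm may be non-positive; any $\eta>0$ then works, and I handle this case separately.

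On the event $\{\mathbf{J}_i=k_i\ \forall i\}$, the number of rows selecting the literal $x_b$ is $\#\{i:k_i=b\}=\|W_{:b}\|_0$, and the number selecting $\neg x_b$ is $\|W_{:b+B}\|_0$. This gives the stated counts, provided the event in the statement is read as counting which literal each coordinate was wired to, i.e.\ the selection $\mathbf{J}_i$.

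The main obstacle is the interpretation of this event. If it is read literally as equality of values, $X_j=x_i$, then distinct literals can coincide in value for a specific $x$; for instance, $x_1=x_2$ would make the count for $x_1$ exceed $\|W_{:1}\|_0$. I will therefore state explicitly that the count refers to the sampled wiring $\mathbf{J}$, which is the channel's bit-lifting matrix $\Pi$. I will note that on this event the channel coincides with the deterministic map $x\mapsto\Pi(x,\mathbf{1}_B-x)$, where $\Pi_{i:}=e_{k_i}^\top$, uniformly in $x$.
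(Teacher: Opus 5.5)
Your proposal is correct and follows the paper's route: the paper calls the first claim trivial and proves the second by the same softmax-concentration computation used for Proposition~\ref{prop:random_weight_assignment}, which is exactly your per-row bound $\frac{2B-1}{e^{\eta}+2B-1}$ combined with a union bound over the $B^{\uparrow}$ rows (note that the union bound does not need independence across rows). Your caveat is well taken: the event must be read as counting the sampled wiring $\mathbf{J}$, not the value equalities $X_j=x_i$, since under the literal reading the claim fails whenever distinct literals coincide in value. The paper's one-line proof silently assumes your reading, which is also how the proposition is used later, where the channel ``coincides with $\Pi$''.
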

Next, we show that each neuron in \ModelNameShort\ selects exactly one pair of incoming edges from the previous layer and computes exactly one of the $16$ fan-in $2$ fan-out $1$ Boolean gates.
\begin{proposition}[Only, and Any, Two Edges Can Be Sampled]
\label{prop:Good_Properties_Uncrossable}
Let $B\in \mathbb{N}_+$. For all $\eta > 0$,
\begin{itemize}
    \item[(i)] \textbf{Non-Singular Probabilistic Interpretability:} 
    The range of $\operatorname{S}_{\eta}$ is $\dot{\Delta}_B\times \dot{\Delta}_B$,
    \item[(ii)] \textbf{Smoothness:} The map $\operatorname{S}_{\eta}$ is ($C^{\infty}$-)smooth.
\end{itemize}
Furthermore,   
For every $0<\varepsilon<\tfrac{1}{2}$, there exists a temperature threshold $\eta_{\varepsilon}>0$ such that: for each $i\in [B]_+$,
\begin{itemize}
    \item[(iii)] \textbf{No Doubled Edges:}
    $
        \sup\limits_{\substack{
    W_2 \in \mathbb{R}^B \\
    \exists\, j \neq i: \hspace{0.1cm} (W_2)_j > (W_2)_i
}}\,
            \operatorname{S}_{\eta_{\varepsilon}}\big(
                e_i
                ,
                W_2
            \big)_{i:2}
        \in 
            [0,\varepsilon].
    $
    \item[(iv)] \textbf{Recovery}: If $j\in [B]_+$ and $j\neq i$ then $
        \big\|
                \operatorname{S}_{\eta_{\varepsilon}}\big(
                    e_i
                    ,
                    e_j
                \big)
            -
                (e_i,e_j)
        \big\|_1
        <
        \varepsilon
    $.
\end{itemize}
In particular, if $(\mathbf{E}_{i:1},\mathbf{E}_{i:2})\sim S_{\eta_{\varepsilon}}(e_i,e_j)$ then
$
        \mathbb{P}\big(
            (\mathbf{E}_{i:1},\mathbf{E}_{i:2})
            =
            (e_i,e_j)
        \big)
    \ge 
        1-\varepsilon
$.
\end{proposition}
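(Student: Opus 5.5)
The four parts are of quite different difficulty, so I would handle them one at a time. Throughout, write $p\eqdef\operatorname{SM}(\eta W_1)$, $q\eqdef\operatorname{SM}(\eta W_2)$ and $a\eqdef \eta(1-p)\odot q$, so that $\operatorname{S}_\eta(W_1,W_2)=(p,\operatorname{SM}(a))$.

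\textbf{Smoothness (ii).} $\operatorname{S}_\eta$ is a composition of softmaxes, affine maps and Hadamard products, all of which are $C^\infty$, so it is smooth.

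\textbf{Range (i).} Every softmax takes values in $\dot{\Delta}_B$, which gives the inclusion of the range in $\dot{\Delta}_B\times\dot{\Delta}_B$. The first component is onto: take $W_1=\eta^{-1}\log p$. For the second component, fix $p$ and a target $r\in\dot\Delta_B$. I would look for $a=\log r+c\mathbf{1}_B$ with $c$ chosen so that $a>0$. The second component then equals $r$ provided $q_k\propto a_k/(\eta(1-p_k))$ is a genuine probability vector, which forces the normalization
\[
\sum_k \frac{\log r_k + c}{\eta(1-p_k)}=1 .
\]
The left-hand side is continuous and increasing in $c$, so an intermediate value argument produces a valid $c$ whenever the normalization is attainable with $a>0$. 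I expect this surjectivity to be the main obstacle, because for small $\eta$ the normalization may fail. If it does, I would weaken (i) to the inclusion together with density of the range, or restrict to $\eta$ sufficiently large.

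\textbf{No doubled edges (iii).} Here $W_1=e_i$, so
\[
p_i=\frac{e^\eta}{e^\eta+B-1},\qquad p_k=\frac{1}{e^\eta+B-1}\ (k\neq i).
\]
Hence $a_i\le \eta(B-1)/(e^\eta+B-1)\to 0$. The hypothesis $(W_2)_j>(W_2)_i$ gives $q_j>q_i$, so $q_i<\tfrac12$. It follows that $\sum_{k\ne i}q_k>\tfrac12$, and therefore $\max_{k\ne i}q_k\ge 1/(2(B-1))$ uniformly in $W_2$. Since $1-p_k\ge \tfrac12$ for $k\neq i$ once $\eta$ is moderately large, we get $\max_{k\neq i}a_k\ge \eta/(4(B-1))$. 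Combining,
\[
\operatorname{SM}(a)_i\le e^{a_i-\max_{k\ne i}a_k}\le \exp\!\Big(\tfrac{\eta(B-1)}{e^\eta+B-1}-\tfrac{\eta}{4(B-1)}\Big),
\]
which tends to $0$ uniformly over the admissible $W_2$. Choose $\eta$ so that this bound is at most $\varepsilon$.

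\textbf{Recovery (iv).} Again $W_1=e_i$, now with $W_2=e_j$ and $j\neq i$. First, $\|p-e_i\|_1=2(B-1)/(e^\eta+B-1)$. Second, $q=\operatorname{SM}(\eta e_j)$ gives
\[
a_j=\eta(1-p_j)q_j\ge \eta\Big(1-\tfrac{1}{e^\eta}\Big)\frac{e^\eta}{e^\eta+B-1},\qquad a_k\le \frac{\eta}{e^\eta+B-1}\ (k\ne j).
\]
Thus $a_j-a_k\ge \eta/2$ for large $\eta$. Consequently $\|\operatorname{SM}(a)-e_j\|_1\le 2(B-1)e^{-\eta/2}$. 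Take $\eta_\varepsilon$ large enough that the sum of the two errors is below $\varepsilon$ and (iii) also holds.

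\textbf{Probabilistic corollary.} An $\ell^1$-error below $\varepsilon$ gives $\mathbb{P}(\mathbf{E}_{i:1}=e_i)\ge 1-\varepsilon_1/2$ and $\mathbb{P}(\mathbf{E}_{i:2}=e_j)\ge 1-\varepsilon_2/2$, where $\varepsilon_1+\varepsilon_2<\varepsilon$ are the two component errors. A union bound then yields probability at least $1-\varepsilon$, and this step needs no independence between the two rows.
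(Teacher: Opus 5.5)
\textbf{Part (i): the surjectivity step cannot be completed.} Your treatment of (ii)--(iv) and of the corollary is sound. The genuine problem is in part (i), and your proposed patches do not work: for $B\ge 2$ surjectivity fails for \emph{every} $\eta>0$, not only for small $\eta$.

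\begin{itemize}
\item Since $0<1-p_k<1$ and $0<q_k<1$, each $a_k=\eta(1-p_k)q_k$ lies in $(0,\eta)$. So every second component $r=\operatorname{SM}(a)$ satisfies $r_k/r_l=e^{a_k-a_l}<e^{\eta}$ for all $k,l$.
\item Hence the range misses the nonempty open set $\{r\in\dot{\Delta}_B:\ r_1>e^{\eta}r_2\}$, so the range is not dense.
\item Restricting to large $\eta$ does not help, because this excluded set is nonempty for each fixed $\eta$.
\item Your own intermediate-value computation, carried through, gives the exact range: $(p,r)$ is attained iff $\sum_k\log(r_k/\min_l r_l)/(1-p_k)<\eta$.
\end{itemize}

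The provable content of (i) is therefore the inclusion $\operatorname{S}_\eta(\mathbb{R}^B\times\mathbb{R}^B)\subseteq\dot{\Delta}_B\times\dot{\Delta}_B$, with the first factor onto. This is exactly what the paper's one-line argument (``the range of $\operatorname{SM}$ is $\dot{\Delta}_B$'') establishes, and it is all that the interpretation as non-degenerate edge distributions requires. You should state (i) in that form and drop the surjectivity attempt together with both fallbacks.

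\textbf{Parts (iii), (iv) and the corollary: a different and useful route.}

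\begin{itemize}
\item \emph{Part (iii).} The paper only observes $a_i<a_j$ and then invokes the fixed-vector bound $\operatorname{SM}(\eta W)_i\le e^{-\eta(W_j-W_i)}$. But $a$ already depends on $\eta$, and the gap $a_j-a_i$ depends on $W_2$, so that argument does not by itself give the uniformity over $W_2$ that the supremum demands. Your chain does: $q_i<\tfrac12$, hence $\max_{k\ne i}q_k\ge 1/(2(B-1))$, hence $\max_{k\ne i}a_k\ge\eta/(4(B-1))$, while $a_i\le\eta(B-1)/(e^\eta+B-1)$. This yields an explicit bound depending only on $B$ and $\eta$, which is exactly what is needed.
\item \emph{Part (iv).} You follow the paper's computation, but with explicit rates in place of limits.
\item \emph{Corollary.} Your union bound is cleaner than the paper's step equating a total-variation distance of the joint law with an $\ell^1$ distance of marginal means. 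It is valid for any coupling of the two rows, and it in fact gives success probability at least $1-\varepsilon/2$.
\end{itemize}
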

We show that stochastic Boolean gates are meaningful; i.e., they are in $\mathbb{G}^\star$, and that we can set a stochastic Boolean gate to be a \emph{specific} gate in $\mathbb{G}^\star$ with arbitrarily high probability.
\begin{proposition}[Stochastic Boolean Gates: Sample Boolean Gates]
\label{prop:Valid_gate}
\hfill\\
For every $W\in \mathbb{R}^{16}$, $\randactiv^W|_{\{0,1\}^2}\in \Bary$.
\end{proposition}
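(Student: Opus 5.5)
The statement is pointwise in the outcome $\omega\in\Omega$, so the plan is to fix $\omega$ and identify the sampled function. By~\eqref{eq:Random_Gate}, the random vector $X^{W}$ takes values in $\{e_1,\dots,e_{16}\}$. Each value $e_i$ has probability $\big(\operatorname{SM}_{16}(W)\big)_i>0$, and these probabilities sum to one, so $X^{W}(\omega)=e_{i(\omega)}$ for a single index $i(\omega)$, almost surely. Outside this null set the sum in~\eqref{eq:Random_Gate} collapses, and we get $\randactiv^{W}(\omega)(x)=\tilde{\sigma}(x\mid Z^{(i(\omega))})$ for all $x\in\mathbb{R}^2$. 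The claim therefore reduces to a deterministic one: for every $Z\in\{0,1\}^4$, the restriction $\tilde{\sigma}(\cdot\mid Z)|_{\{0,1\}^2}$ is a Boolean function $\{0,1\}^2\to\{0,1\}$, and hence lies in $\Bary$. Recall that $\Bary$ is exactly the set of all $16$ maps $\{0,1\}^2\to\{0,1\}$.

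The key step is to evaluate $\tilde{\sigma}(\bar{x}_k\mid Z)=\sum_{i=1}^4\varsigma(\bar{x}_k-\bar{x}_i)Z_i$ at the four corners $\bar{x}_k$. For $i\neq k$, the distinct points of $\{0,1\}^2$ satisfy $\|\bar{x}_k-\bar{x}_i\|_2\in\{1,\sqrt2\}$, which is at least $\tfrac12$, so $\varsigma(\bar{x}_k-\bar{x}_i)=0$. For $i=k$, we have $\varsigma(0)=c\,e^{1/(0-1)}=e\cdot e^{-1}=1$, and this is precisely why the normalization $c=e$ was chosen. Hence $\tilde{\sigma}(\bar{x}_k\mid Z)=Z_k\in\{0,1\}$.

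It follows that $\tilde{\sigma}(\cdot\mid Z)|_{\{0,1\}^2}$ is the gate whose truth table is $Z$. Moreover, $Z\mapsto\tilde{\sigma}(\cdot\mid Z)|_{\{0,1\}^2}$ is a bijection from $\{0,1\}^4$ onto $\Bary$, which also shows that every gate is reachable. This last fact is not needed here, but it is worth recording for the later universality results.

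The only point requiring care is the bookkeeping in the first step. The claim should be read $\mathbb{P}$-almost surely, or for every $\omega$ once $X^W$ is realized as a one-hot vector; it fails only on the null event where $X^W$ is not one-hot. There is no real obstacle beyond this: the proof is a direct computation.
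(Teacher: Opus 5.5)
Your proof is correct and follows essentially the paper's route. The paper likewise observes that each realization of $\randactiv^W$ is one of the $\tilde{\sigma}(\cdot\mid Z^{(i)})$, and then invokes the corner-interpolation identity $\tilde{\sigma}(\bar{x}_k\mid Z)=Z_k$ from Proposition~\ref{prop:Enumeration_of_Fanin2_Fanout1__BooleanGates}; you verify that identity inline via $\varsigma(0)=1$ and $\varsigma=0$ at distances $\ge 1$, and your explicit treatment of the null event where $X^W$ is not one-hot is slightly more careful than the paper's one-line argument.
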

\begin{proposition}[Stochastic Boolean Gates: Universally Sample Boolean Gates]
\label{prop:random_weight_assignment}
For every failure probability $\delta\in (0,1]$ and each $i\in [16]_+$, if $\eta\ge \ln\big( \frac{15}{\delta} - 15\big)$ then
\begin{equation}
\label{eq:prop:random_weight_assignment__TVBound}
        \operatorname{TV}\big(
            \operatorname{Law}(\randactiv^{\eta e_i}|_{\{0,1\}^2})
            ,
                \delta_{g_i}
        \big)
    \le 
        \delta
.
\end{equation}
In particular, $
        \mathbb{P}\big(
            \randactiv^{\eta e_i}|_{\{0,1\}^2}  = g_i
        \big)
    \ge 
        1-\delta
$.
\end{proposition}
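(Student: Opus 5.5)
The plan is to reduce both claims to an explicit softmax computation, after first identifying the restricted random gate with a deterministic gate on each outcome. By Proposition~\ref{prop:Valid_gate} and the construction in~\eqref{eq:Random_Gate}, for each outcome $\omega$ the random vector $X^{\eta e_i}(\omega)$ is one of the basis vectors $e_k$, $k\in[16]_+$. On that event
\[
\randactiv^{\eta e_i}(\omega)|_{\{0,1\}^2}=\tilde\sigma(\cdot\mid Z^{(k)})|_{\{0,1\}^2}.
\]
I will check that this restriction equals the gate $g_k$ with truth table $Z^{(k)}$. The check is short: for $x=\bar x_m\in\{0,1\}^2$, the bump $\varsigma(\bar x_m-\bar x_n)$ vanishes when $n\neq m$, because $\|\bar x_m-\bar x_n\|_2\ge 1\ge \tfrac12$. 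When $n=m$ it equals $c\,e^{-1}=1$, since $c=e$. Hence $\tilde\sigma(\bar x_m\mid Z)=Z_m$.

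The sixteen functions $g_k$ are pairwise distinct, since their truth tables differ. It follows that the pushforward $\operatorname{Law}(\randactiv^{\eta e_i}|_{\{0,1\}^2})$ is exactly the discrete distribution $\sum_k \operatorname{SM}_{16}(\eta e_i)_k\,\delta_{g_k}$ on $\Bary$.

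Next I compute the total variation distance to a point mass. For any discrete law $\mu$ and atom $g_i$, $\operatorname{TV}(\mu,\delta_{g_i})=1-\mu(\{g_i\})$. This gives
\[
\operatorname{TV}=1-\frac{e^{\eta}}{e^{\eta}+15}=\frac{15}{e^{\eta}+15}.
\]
Requiring this to be at most $\delta$ is equivalent to $e^\eta\ge 15/\delta-15$, which is exactly the hypothesis $\eta\ge\ln(15/\delta-15)$.

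There is one degenerate case. When $\delta=1$ the logarithm is $\ln 0$, so the condition is vacuous (or read as $\eta$ arbitrary). The bound still holds there, since TV is always at most $1$. The ``in particular'' statement then follows from $\mathbb{P}(\randactiv^{\eta e_i}|_{\{0,1\}^2}=g_i)=\mu(\{g_i\})=1-\operatorname{TV}\ge1-\delta$.

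The only step needing care is the identification of the index $i$ with the gate $g_i$. This means fixing the enumeration of $\Bary$ so that $g_i$ is the gate with truth table $Z^{(i)}$, consistent with Appendix~\ref{s:GateEnumeration}, and verifying the normalization $c=e$ so that the bump evaluates to exactly $1$ at the grid points. Everything else is a direct computation.
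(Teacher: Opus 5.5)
Your proposal is correct and takes essentially the same approach as the paper. Both identify the law of the restricted gate as $\sum_k \operatorname{SM}_{16}(\eta e_i)_k\,\delta_{g_k}$, compute its TV distance to $\delta_{g_i}$ as $15/(e^{\eta}+15)$, and invert the inequality. The only cosmetic difference is that you use $\operatorname{TV}(\mu,\delta_{g_i})=1-\mu(\{g_i\})$ directly, whereas the paper passes through the isometry with $(\Delta_{16},\tfrac12\|\cdot\|_1)$; this yields the same quantity.
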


\section{Main Results}
\label{s:Theory}

We are now prepared to present our first main result which \textit{certifies} that any possible configuration of the parameters of \ModelName \ must yield a distribution over Boolean circuits with fan-in $2$ and fan-out $1$.  Consequently, a sample from such a configuration must a.s.\ be such a Boolean circuit of the form~(i); page 2.
\begin{theorem}[Certifiable Boolean Reasoning]
\label{thrm:StructureAware}
Fix a depth $\Delta\in\mathbb{N}_+$, a width $\Upsilon\in \mathbb{N}_+$, 
bits $B_{in}=B_0,\dots,B_{\Delta}=B_{out}=1\in \mathbb{N}_+$.  Then, for any matrices $W$, $\{W^{(l:\sigma)}\}_{l=1}^{\Delta-1}$  and $\{\mathbf{E}^{(l:1:e)}\}_{l=1}^{\Delta-1}$, $\{\mathbf{E}^{(l:2:e)}\}_{l=1}^{\Delta-1}$ as in~\eqref{eq:Model_Def} the (random) function  $\randcirc$
is such that 
\[
    \mathbb{P}\big(
     \randcirc
    \in \operatorname{CIRCUIT}_{\Delta,\Upsilon}^{B_{in}}\big)
    =
    1
.
\]
\end{theorem}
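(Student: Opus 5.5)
The plan is to show that, outside a $\mathbb{P}$-null set, every random ingredient in~\eqref{eq:Model_Def} is a discrete object of the kind used in the definition of $\operatorname{CIRCUIT}_{\Delta,\Upsilon}^{B_{in}}$. We then read off a circuit $\mathcal{C}(\omega)=(V,E,g_\cdot)$ whose induced function from~\eqref{eq:recursive_defenition} equals $\randcirc(\omega,\cdot)$ on $\{0,1\}^{B_{in}}$. There are only finitely many random choices: the $B^{\uparrow}$ channel rows, the $2\sum_l B_{l+1}$ edge rows, and the $\sum_l B_{l+1}$ gate indices. Each is a categorical variable with finitely many atoms, so a finite union of null sets is null. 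It therefore suffices to work on the full-measure event $\Omega_0$ on which every sampled row is one-hot and every gate selector $X^{W^{(l:\sigma)}_j}$ equals some $e_i$.

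\textbf{Step 1 (lifting layer).} By Proposition~\ref{prop:lifting_channels}, on $\Omega_0$ each $\mathbf{X}^{\uparrow:W}_i(x)$ equals either $x_j$ or $\neg x_j$, with the choice of $j$ and of negation determined by $\omega$ and not by $x$. We realize this as $x\mapsto \Pi(x,\mathbf{1}_B-x)$, where the rows of $\Pi$ are the sampled one-hot vectors. This is exactly a bit-lifting channel, and it gives $V_1$ together with $f_1$.

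\textbf{Step 2 (computational layers).} Fix $l\in[\Delta-1]_+$ and $j\in[B_{l+1}]_+$. On $\Omega_0$ the $j$-th rows of $\randEdge^{(l:1:e)}$ and $\randEdge^{(l:2:e)}$ are $e_{a}$ and $e_{b}$, so $(\randEdge^{(l:1:e)}\boldsymbol{x}^{(l)})_j=\boldsymbol{x}^{(l)}_a$ and $(\randEdge^{(l:2:e)}\boldsymbol{x}^{(l)})_j=\boldsymbol{x}^{(l)}_b$. We place the edges $(v^{(l)}_a,v^{(l+1)}_j)$ and $(v^{(l)}_b,v^{(l+1)}_j)$, so the fan-in is $2$. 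Proposition~\ref{prop:Valid_gate} gives $\randactiv^{W^{(l:\sigma)}_j}|_{\{0,1\}^2}=g\in\Bary$. Explicitly, $\varsigma(0)=c\,e^{-1}=1$ and $\varsigma(\bar x_k-\bar x_i)=0$ for $k\neq i$ because the points are at distance at least $1$; hence $\tilde\sigma(\bar x_k|Z)=Z_k$. Inducting on $l$, $\boldsymbol{x}^{(l)}(x)\in\{0,1\}^{B_l}$, so the gate is only ever evaluated on $\{0,1\}^2$. This reproduces the recursion~\eqref{eq:recursive_defenition} with $g_{v_j^{(l+1)}}=g$.

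\textbf{Step 3 (circuit axioms).} The layered structure makes the graph acyclic, with depth $\Delta$ and output node $v_{\operatorname{out}}$ since $B_\Delta=1$. The width is at most $\max_l B_l$, which we take to be bounded by $\Upsilon$, reading the hypothesis that way.

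The main obstacle is definitional rather than probabilistic, and it arises in two places. First, if $a=b$ the sampled edges are doubled. Proposition~\ref{prop:Good_Properties_Uncrossable}(i) only makes this unlikely at large $\eta$; it never gives probability zero. I would handle this by allowing parallel edges, i.e.\ reading $E$ as a multiset. Equivalently, I would note that $g(y,y)$ is itself a fan-in-$2$ gate applied to the pair $(y,y)$, which is consistent with~\eqref{eq:recursive_defenition}, since that recursion only refers to the indices $v_{-1},v_{-2}$. Second, there are nodes not connected to the output, and there is the paper's ``connected'' requirement. I would prune the vertices with no directed path to $v_{\operatorname{out}}$. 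This does not change the induced function, so the circuit is identified with the function it computes.
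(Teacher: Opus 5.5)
Your proposal is correct and follows essentially the same route as the paper's proof. Both intersect the full-measure events on which the lifting rows, edge rows and gate selectors are one-hot, read off the realized layered circuit, and check by induction on the layer that it computes $\randcirc(\omega,\cdot)$, with $B_l\le\Upsilon$ taken as an implicit hypothesis. You are more careful than the paper on doubled edges, which have positive probability for every parameter choice so that the paper's claim of in-degree exactly $2$ tacitly needs a multiset reading of $E$, and on nodes with no path to the output, which the paper does not address.
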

Having guaranteed that the \ModelName \ a.s. samples Boolean circuits of the form (i), we now show that it is \textit{universal} in the sense that, provided the model uses sufficiently many neurons, its parameters can be configured so as to sample any prescribed Boolean circuit with arbitrarily high probability.
\begin{theorem}[Universal Boolean Reasoning]
\label{thrm:UniversalReasoning}
Let $B_{in}\in \mathbb{N}_+$ and let $f:\{0,1\}^{B_{in}}\to \{0,1\}$.  For every failure probability $0<\delta \le 1$, there exists an $
 \randcirc :\Omega \times \{0,1\}^{B}\to \{0,1\}$ with representation~\eqref{eq:Model_Def} such that
\begin{equation}
\label{eq:universality}
    \mathbb{P}\Big(
        \randcirc(x) = f(x)
        \:\,
        \forall x\in \{0,1\}^{B_{in}}
        \,
    \Big)
\ge 
    1-\delta
.
\end{equation}
\end{theorem}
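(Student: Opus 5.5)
The plan is to exhibit a deterministic layered fan-in-$2$ circuit computing $f$, encode it into the parameters of~\eqref{eq:Model_Def} by scaling one-hot weights with large temperatures, and then apply a union bound over the finitely many random choices.

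\textbf{Step 1: a layered target circuit.} Write $f$ in disjunctive normal form, $f(x)=\bigvee_{a\in f^{-1}(1)}\bigwedge_{b=1}^{B_{in}}\ell_{a,b}(x)$, where $\ell_{a,b}(x)=x_b$ if $a_b=1$ and $\ell_{a,b}(x)=\neg x_b$ otherwise. If $f\equiv 0$, use the constant-$0$ gate $Z^{(1)}$ applied to any two lifted bits.

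\begin{itemize}
\item The literals are supplied by the bit-lifting channel. Take $B^{\uparrow}=2^{B_{in}}\cdot 2^{\lceil\log_2 B_{in}\rceil}$ and pad each clause to a power of two by repeating a literal; since $\operatorname{AND}(y,y)=y$, this is harmless.
\item Each clause is computed by a balanced binary tree of $\operatorname{AND}$ gates.
\item The clause outputs are combined by a balanced binary tree of $\operatorname{OR}$ gates, padding the number of clauses to $2^{B_{in}}$ with clauses equal to constant $0$ (duplicated copies also work).
\end{itemize}

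This gives a circuit in which every layer has even width, each node of layer $l+1$ reads two \emph{distinct} nodes $(2j-1,2j)$ of layer $l$, and the depth is $\Delta=1+\lceil\log_2 B_{in}\rceil+B_{in}$. Using distinct parents matters, because Proposition~\ref{prop:Good_Properties_Uncrossable}(iv) only recovers pairs $(e_i,e_j)$ with $i\neq j$.

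\textbf{Step 2: encoding into parameters.}
\begin{itemize}
\item For the lifting channel, let $W$ have row $i$ equal to the one-hot indicator of the desired literal. Proposition~\ref{prop:lifting_channels} then provides a scaling $\eta_0$ such that, with probability at least $1-\delta/3$, all lifted coordinates equal the prescribed literals. I will use the row-wise form of that argument, since softmax of $\eta_0 e_k$ concentrates on $k$ for each row, and a union bound over the $B^{\uparrow}$ rows gives this.
\item For each neuron $j$ in layer $l$, set $W^{(l:1)}_j=e_{2j-1}$ and $W^{(l:2)}_j=e_{2j}$. With temperature $\eta\ge\eta_{\varepsilon}$ and $\varepsilon=\delta/(3N)$, where $N$ is the total number of neurons, Proposition~\ref{prop:Good_Properties_Uncrossable} shows that both edges are selected correctly with probability at least $1-\varepsilon$.
\item Set $W^{(l:\sigma)}_j=\eta' e_{i(j)}$, where $i(j)$ indexes $\operatorname{AND}$ or $\operatorname{OR}$. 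Proposition~\ref{prop:random_weight_assignment} with failure probability $\delta/(3N)$ gives the correct gate.
\end{itemize}

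\textbf{Step 3: union bound and correctness.} On the intersection of the good events, every sampled lifted bit, edge and gate matches the target circuit. Since the events are taken over parameters and not over inputs, all the randomness is drawn once per sample of $\randcirc$, i.e.\ per $\omega$. The sampled function then equals $f_{\mathcal C}=f$ simultaneously for every $x$, by the recursion~\eqref{eq:recursive_defenition} and the fact that $\tilde\sigma(\cdot|Z)$ restricted to $\{0,1\}^2$ is the gate with truth table $Z$: at a lattice point only the matching bump is nonzero, and it takes the value $c e^{-1}=1$. The failure probability is at most $\delta/3+N\cdot\delta/(3N)+N\cdot\delta/(3N)=\delta$.

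\textbf{Main obstacle.} The delicate point is Step 2's use of the edge selector. The second edge's law is $\operatorname{SM}\bigl(\eta(1-\operatorname{SM}(\eta W_1))\odot\operatorname{SM}(\eta W_2)\bigr)$, and its concentration on $e_{2j}$ must be quantified uniformly across neurons. I will rely on Proposition~\ref{prop:Good_Properties_Uncrossable}(iv) as a black box, which gives $\ell^1$-closeness at most $\varepsilon$ for a single $\eta_\varepsilon$, independent of the indices $i\neq j$ by permutation symmetry but depending on the layer width $B_l$. I will therefore take $\eta$ to be the maximum of the thresholds over the finitely many widths.

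A secondary subtlety is that the model in~\eqref{eq:Model_Def} uses a single temperature $\eta$ for all layers. This is harmless, since it suffices to take the maximum of the finitely many required thresholds.
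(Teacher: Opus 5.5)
Your proposal is correct and takes essentially the same route as the paper. Both arguments build a DNF-based complete binary tree of fan-in-$2$ gates over lifted literals, encode it with one-hot parameters scaled by a large temperature, and conclude with a union bound over lifting rows, edges, and gates; the only real difference is that you equalize leaf depths by repeating literals and clauses (using idempotence of $\operatorname{AND}$/$\operatorname{OR}$), where the paper inserts $\operatorname{PROJ}_A$ padding gates.
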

Theorem~\ref{thrm:UniversalReasoning} is a direct analogue of classical universal approximation theorems for MLPs for continuous function approximation, e.g.~\cite{funahashi1989approximate,cybenko1989approximation,hornik1989multilayer,pinkus1999approximation,kidger2020universal}, in that it provides an existence guarantee independent of the “hardness” of representing (resp.\ approximating) the ground-truth function. Naturally, one wonders whether there are instances in which favourable rates are achievable; which we now confirm is possible.

\subsection{Efficient Reasoning}

Indeed, analogously to quantitative universal approximation results for classical deep learning models, explicit approximation rates, while optimal in the worst-case cf.~\cite{yarotsky2018optimal}, are often not the most compelling. Rather, the central question is whether the number of parameters can scale favourably with the intrinsic degrees of freedom of the problem. In classical neural network approximation theory, such favourable scaling is typically obtained under structural assumptions on the target function, notably some form of smoothness~\cite{yarotsky2017error,suzuki2018adaptivity,lu2021deep,guhring2021approximation,JMLR:v24:23-0025}. In our Boolean setting, an analogous notion of ``nice structure'' is sparsity, formalized by the target function being an $S$-junta with $S\ll B$; i.e.\ the target function only depends on $S$ input bits out of the total $B$.  We now show that, when the sparsity parameter $S$ scales logarithmically with the number of inputs $B$, the number of neurons required by the \ModelName~in Theorem~\ref{thrm:UniversalReasoning} is polynomial; and in cases even linear, up to logarithmic factors.

\begin{theorem}[Efficient Boolean Reasoning for $\mathcal{O}(\log(B))$-Juntas]
\label{thrm:UniversalReasoning__Quantitative}
Let $B,S\in \mathbb{N}_+$, $C>0$, with $S\le C\log(B)$ and let $f:\{0,1\}^B\to \{0,1\}$ be an $S$-junta. 
For every 
$0<\delta \le 1$, there is an $
\randcirc :\Omega \times \{0,1\}^{B}\to \{0,1\}$ of depth $\Delta$, width $\Upsilon$, and with at-most $N$ neurons satisfying~\eqref{eq:universality}; where
\[
        \Delta
    \in 
        \mathcal{O}\big(
            \log(B) + \log(\log(B))
        \big)
,\,\,
    \Upsilon \in \mathcal{O}(B)
,\mbox{ and}\,\,
        N,B^{\uparrow}
    \in 
        \mathcal{O}\big(
            B^p
            \log(B)
        \big)
\]
where $p=C>0$.
In particular, if $0<C\le 1$, then $
N,B^{\uparrow}
    \in 
        \mathcal{O}\big(
            B
            \log(B)
        \big)
$.
\end{theorem}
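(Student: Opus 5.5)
The plan is to first build a deterministic fan-in-$2$, fan-out-$1$ circuit (in the layered sense of $\operatorname{CIRCUIT}^{B}_{\Delta,\Upsilon}$) that computes the $S$-junta $f$ with the stated depth, width and size. I would then realize it inside the \ModelName\ by choosing sharp parameters, and control the total failure probability with a union bound. Let $J\subseteq[B]_+$ with $|J|=S$ be the relevant coordinates, so that $f(x)=h(x_J)$ for some $h:\{0,1\}^S\to\{0,1\}$. Write $h$ in disjunctive normal form over the minterms $a\in h^{-1}(1)$; there are at most $2^S\le B^{C\log 2}\le B^{p}$ of them. Each minterm is an $\operatorname{AND}$ of $S$ literals $x_j$ or $\neg x_j$, $j\in J$.

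The bit-lifting layer supplies these literals. For each minterm and each position in it, one row of $W$ is one-hot on the column of $x_j$ or of $\neg x_j$. This gives $B^{\uparrow}\le S\,2^S\in\mathcal O(B^p\log B)$, and Proposition~\ref{prop:lifting_channels} produces them with probability $\ge 1-\delta/3$ after scaling by $\eta$. Each minterm is then computed by a balanced binary $\operatorname{AND}$ tree of depth $\lceil\log_2 S\rceil\in\mathcal O(\log\log B)$, and the $2^S$ minterms are combined by a balanced $\operatorname{OR}$ tree of depth $S\in\mathcal O(\log B)$. Layers need not shrink uniformly, so whenever a node has no partner in a layer I would pass it forward with the lifted projection gate $(x_1,x_2)\mapsto x_1$, which is one of the $16$ gates in $\Bary$. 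Its second input can be any distinct node; if a layer has a single node, I would duplicate it in the previous layer, which costs only $\mathcal O(1)$ extra neurons per layer. The depth is then $\Delta\in\mathcal O(\log B+\log\log B)$ and the total neuron count is dominated by the input and AND layers, i.e.\ $N\in\mathcal O(S2^S)=\mathcal O(B^p\log B)$.

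For the width, I read $\Upsilon$ in the sense of the paper's definition, namely the largest antichain among computational nodes after the bit-lifting channel. If this must literally be $\mathcal O(B)$, I would instead sequentialize the OR over minterms. Each minterm is evaluated in its own block of width $\mathcal O(S)$, and a running $\operatorname{OR}$ accumulator is carried forward with projection gates. This trades width for depth, so there is a tension with the stated $\Delta$. The main obstacle is therefore this bookkeeping: reconciling $\Upsilon\in\mathcal O(B)$ with $\Delta\in\mathcal O(\log B)$ when $p>1$. I would first check whether the intended width counts only layer widths up to constants, with $2^S\le B$ when $C\le1$. If it does not, I would accept the parallel construction, which attains width $\mathcal O(B)$ exactly in the regime $C\le 1$.

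Finally, I would make the construction stochastic. There are $N$ neurons, and each needs the correct edge pair and the correct gate. By Proposition~\ref{prop:Good_Properties_Uncrossable}(iv), setting $W^{(l:1)}_v=e_{i}$ and $W^{(l:2)}_v=e_{j}$ at temperature $\eta_{\varepsilon}$ with $\varepsilon=\delta/(3N)$ selects the edges $(e_i,e_j)$ with probability $\ge 1-\varepsilon$. By Proposition~\ref{prop:random_weight_assignment}, $W^{(l:\sigma)}_v=\eta e_{k}$ with $\eta\ge\ln(45N/\delta-15)$ selects gate $g_k$ with probability $\ge1-\delta/(3N)$. All sampling is independent, so a union bound over the lifting rows, the edges and the gates gives a failure probability $\le\delta$. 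On the complementary event, the sampled circuit equals the deterministic circuit, which computes $f$ for every $x\in\{0,1\}^B$ simultaneously; this yields~\eqref{eq:universality}.
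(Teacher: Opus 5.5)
Your construction and its probabilistic realization are essentially the paper's. The paper, via Lemma~\ref{lem:JuntaTreeification}, also takes the canonical DNF of the restriction to the relevant coordinates, with balanced $\operatorname{AND}$ trees of depth $\lceil\log_2 S\rceil$ feeding a balanced $\operatorname{OR}$ tree of depth at most $S$. It differs only in the padding: it extends every short branch with $\operatorname{PROJ}_A$ gates and dummy literal leaves into a complete binary tree of depth $\Delta=S+\lceil\log_2 S\rceil$. Its layer widths are therefore $2^{\Delta},2^{\Delta-1},\dots,1$, with $B^{\uparrow}=2^{\Delta}\le 2S2^{S}$. It then sets all logits to scaled one-hot vectors and takes the same union bound. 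Your lighter padding gives the same bounds on $\Delta$, $N$ and $B^{\uparrow}$. You should add the case $f\equiv 0$ (no minterms), which the paper handles with $x_1\wedge\neg x_1$.

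You are right to be suspicious of the width, and no cleverer layout will fix it. Neither your argument nor the paper's establishes $\Upsilon\in\mathcal{O}(B)$. The paper's widest layer has $2^{\Delta}\ge S2^{S}$ nodes, so what both constructions give is $\Upsilon\le\max\{B,B^{\uparrow}\}\in\mathcal{O}(B+B^{p}\log B)$.

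For $C>1$ and $\delta<1$ the stated width is in fact impossible together with $\Delta\in\mathcal{O}(\log B)$. This uses base-$2$ logarithms, as in the proof of Lemma~\ref{lem:JuntaTreeification}; for natural logarithms the threshold is $C>1/\ln 2$. An SBC of depth $\Delta$ and width $\Upsilon$ has at most $\Upsilon^{\Delta}(2B)^{\Upsilon}(16\Upsilon^{2})^{\Delta\Upsilon}$ realizable circuits. With $\Upsilon\in\mathcal{O}(B)$ and $\Delta\in\mathcal{O}(\log B)$, at most $2^{\mathcal{O}(B\log^{2}B)}$ functions can therefore be computed with positive probability. By contrast, there are $2^{2^{S}}\ge 2^{B^{C}/2}$ juntas on $S=\lfloor C\log_{2}B\rfloor$ fixed coordinates. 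Your sequential accumulator cannot rescue the claim (it also breaks the depth bound), so the width bound should be stated as above.

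Your $C\le 1$ fallback is correct under your natural-log reading, since $2^{S}\le B^{C\ln 2}$ gives $S2^{S}=o(B)$. With base-$2$ logarithms at $C=1$, however, the literal layer of either DNF tree has about $S2^{S}\approx B\log_{2}B$ nodes in the worst case. Reaching $\mathcal{O}(B)$ there requires sharing subterms. For example, generate all $2^{S}$ minterms as a DAG by recursively halving the variable set, with width $\mathcal{O}(2^{S}+S)$ and depth $\mathcal{O}(\log S)$, before the $\operatorname{OR}$ tree. The edge sampler allows this, since several neurons may select the same predecessor.
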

Having verified our theoretical desiderata (i)--(iii), we now confirm the experimental performance (iv) of our \ModelName~model.
\section{Experimental Validation}
\label{s:Experiments}

We empirically validate two claims: (i) the proposed model can be trained to achieve competitive accuracy on propositional reasoning tasks, and (ii) its internal computation admits a faithful circuit-level interpretation.

\paragraph{Setup.}
Each instance is a Boolean function $f:\{0,1\}^B\to\{0,1\}$ specified by a syntactic formula. We train on the full truth table (all $2^B$ assignments) and evaluate \emph{exact match} (EM), i.e., the predicted truth table must agree with the target on all $2^B$ inputs. We compare against ReLU MLP baselines under three matching regimes controlling architectural or parameter budgets (neuron-match and two parameter-matching variants). Dataset generation, matching rules, and training details are provided in Appendix~\ref{app:exp_impl}.

\paragraph{Predictive performance.}
Table~\ref{tab:em_main} reports EM across five seeds. Matched MLPs achieve slightly higher EM in this regime; nonetheless, our model remains competitive while enforcing a circuit-structured computation. Since EM is computed on the full truth table, the reported differences reflect exact correctness rather than partial-credit metrics.

\begin{table}[t]
\centering
\small
\setlength{\tabcolsep}{6pt}
\renewcommand{\arraystretch}{1.15}
\begin{tabular}{l c c c}
\toprule
MLP match & Avg.\ MLP params & MLP EM & \ModelNameShort\ EM \\
\midrule
neuron & 368 & 0.987 $\pm$ 0.002 & \multirow{3}{*}{0.971 $\pm$ 0.003} \\
param\_soft & 683 & 0.992 $\pm$ 0.001 & \\
param\_total & 1154 & 0.999 $\pm$ 0.002 & \\
\bottomrule
\end{tabular}
\caption{Exact-match (EM) on truth-table evaluation (mean $\pm$ std across 5 seeds). \ModelNameShort\ EM is reported once under the canonical \ModelNameShort\ setting; MLP EM is shown for each matching regime.}
\label{tab:em_main}
\end{table}

\paragraph{Neuronwise Booleanity (BNR).}
\hspace{-0.25cm} We measure \emph{Boolean Neuronal Representability} (BNR), introduced in
Definition~\ref{defn:BnB} and characterized in Proposition~\ref{prop:BNR_Equivalence}.
Intuitively, a unit is BNR if it takes at most two distinct values on the Boolean cube and can therefore be
recoded injectively into $\{0,1\}$.
For continuous baselines (ReLU MLPs), we compute practical diagnostics by enumerating each unit over the full
truth table and applying (i) a rounded two-valued check (BNR$_{\mathrm{exact}}$) and (ii) a tolerance-based
two-cluster check (BNR$_\varepsilon$); see Appendix~\ref{app:interpretability}.

\paragraph{Certifiable interpretability gap.}
Table~\ref{tab:bnr_main} shows a pronounced separation: our model attains BNR$=1$ for all internal
units, whereas matched MLP baselines exhibit low neuronwise Booleanity (roughly $10\%$ at the first
hidden layer and $\approx 20\%$ on average across hidden layers).
This captures the intended certificate-level distinction: our model implements neuronwise Boolean
computation by design, while standard MLP activations typically are not two-valued on the Boolean cube.
Full experimental protocol details (including matching regimes, decoding, and evaluation) are provided
in Appendix~\ref{app:exp_impl}.

\begin{table}[t]
\centering
\small
\setlength{\tabcolsep}{6pt}
\renewcommand{\arraystretch}{1.15}
\begin{tabular}{lcccc}
\toprule
Model / match
& BNR$_{\mathrm{exact}}$(L1)
& BNR$_{\mathrm{exact}}$(all)
& BNR$_{\varepsilon}$(L1)
& BNR$_{\varepsilon}$(all)
\\
\midrule
\ModelNameShort\ & 1.000 & 1.000 & 1.000 & 1.000 \\
\midrule
neuron & 0.114 $\pm$ 0.003 & 0.209 $\pm$ 0.002 & 0.115 $\pm$ 0.004 & 0.209 $\pm$ 0.002 \\
param\_soft & 0.107 $\pm$ 0.002 & 0.199 $\pm$ 0.001 & 0.107 $\pm$ 0.002 & 0.199 $\pm$ 0.001 \\
param\_total & 0.105 $\pm$ 0.002 & 0.195 $\pm$ 0.002 & 0.105 $\pm$ 0.002 & 0.196 $\pm$ 0.002 \\
\bottomrule
\end{tabular}
\caption{BNR diagnostics (mean $\pm$ std across 5 seeds). ``all'' denotes the average across hidden layers.
Formal definitions and numerical procedures are detailed in Appendix~\ref{app:BNR_detail} and Appendix~\ref{app:interpretability}.}
\label{tab:bnr_main}
\end{table}

\noindent Additional gate-recoverability probes and gate-type histograms are reported in
Appendix~\ref{app:prim_table} and Appendix~\ref{app:gate_fig}; they provide complementary context for
interpreting BNR and EM.

\section{Conclusion}
\label{s:Conclusion}

\noindent
We proposed the \ModelName\ (\ModelNameShort), a trainable distribution over fan-in $2$, fan-out $1$ Boolean circuits.  
The model provides \emph{certifiable reasoning} for all parameters (Theorem~\ref{thrm:StructureAware}) and \emph{universal Boolean reasoning} via suitable parameter choices (Theorem~\ref{thrm:UniversalReasoning}).  
In sparse regimes, when the target function is an $\mathcal{O}(\log B)$-junta, this universality can be achieved with essentially linear scaling in the ambient dimension (Theorem~\ref{thrm:UniversalReasoning__Quantitative}).  

\noindent
Experiments validate the \ModelName\ architecture, showing that it represents diverse Boolean functions while almost surely sampling fan-in $2$, fan-out $1$ Boolean circuits. The model attains perfect neuron-wise Booleanity (BNR$=1$), whereas matched ReLU MLPs achieve similar accuracy but exhibit low BNR, raising concerns about circuit-level reasoning in classical architectures.

\subsection*{Future work} 
Future work will explore trainable distributions over richer circuit and program classes with comparable structural guarantees.


\acks{
A.\ Kratsios, H.\ Ghoukasian, and D.\ Zvigelsky acknowledge the financial support from the NSERC Discovery Grant No.\ RGPIN-2023-04482 and No.\ DGECR-2023-00230.  A.\ Kratsios was also supported by the project Bando PRIN 2022 named ``Qnt4Green - Quantitative Approaches for Green Bond Market: Risk Assessment, Agency Problems and Policy Incentives'', codice 2022JRY7EF, CUP E53D23006330006, funded by European Union – NextGenerationEU, M4c2.
}

\bibliography{Bookkeaping/0_refs}

@inproceedings{cobbe2021training,
  title={Training Verifiers to Solve Math Word Problems},
  author={Cobbe, Karl and Kosaraju, Vineet and Bavarian, Mohammad and Chen, Mark and Jun, Heewoo and Kaiser, Lukasz and Plappert, Matthias and Tworek, Jerry and Hilton, Jacob and Nakano, Reiichiro and others},
  booktitle={arXiv preprint arXiv:2110.14168},
  year={2021}
}

@inproceedings{hendrycks2021measuring,
  title={Measuring Mathematical Problem Solving With the MATH Dataset},
  author={Hendrycks, Dan and Burns, Collin and Kadavath, Saurav and Arora, Akul and Basart, Steven and Tang, Eric and Song, Dawn and Steinhardt, Jacob},
  booktitle={arXiv preprint arXiv:2103.03874},
  year={2021}
}

@article{yarotsky2017error,
  title={Error bounds for approximations with deep ReLU networks},
  author={Yarotsky, Dmitry},
  journal={Neural Networks},
  volume={94},
  pages={103--114},
  year={2017},
  publisher={Elsevier}
}

@article{lu2021deep,
  title={Deep network approximation for smooth functions},
  author={Lu, Jianfeng and Shen, Zuowei and Yang, Haizhao and Zhang, Shijun},
  journal={SIAM Journal on Mathematical Analysis},
  volume={53},
  number={5},
  pages={5465--5506},
  year={2021},
  publisher={SIAM}
}

@article{guhring2021approximation,
  title={Approximation rates for neural networks with encodable weights in smoothness spaces},
  author={G{\"u}hring, Ingo and Raslan, Mones},
  journal={Neural Networks},
  volume={134},
  pages={107--130},
  year={2021},
  publisher={Elsevier}
}

@inproceedings{suzuki2018adaptivity,
title={Adaptivity of deep Re{LU} network for learning in Besov and mixed smooth Besov spaces: optimal rate and curse of dimensionality},
author={Taiji Suzuki},
booktitle={International Conference on Learning Representations},
year={2019},
url={https://openreview.net/forum?id=H1ebTsActm},
}

@article{JMLR:v24:23-0025,
  author  = {Jonathan W. Siegel},
  title   = {Optimal Approximation Rates for Deep ReLU Neural Networks on Sobolev and Besov Spaces},
  journal = {Journal of Machine Learning Research},
  year    = {2023},
  volume  = {24},
  number  = {357},
  pages   = {1--52},
  url     = {http://jmlr.org/papers/v24/23-0025.html}
}

@inproceedings{yarotsky2018optimal,
  title={Optimal approximation of continuous functions by very deep ReLU networks},
  author={Yarotsky, Dmitry},
  booktitle={Conference on learning theory},
  pages={639--649},
  year={2018},
  organization={PMLR}
}

@book{pinkus1999approximation,
  title     = {Approximation Theory of the {MLP} Model in Neural Networks},
  author    = {Pinkus, Allan},
  publisher = {Cambridge University Press},
  year      = {1999},
  series    = {Cambridge Monographs on Applied and Computational Mathematics},
}

@inproceedings{kidger2020universal,
  title={Universal approximation with deep narrow networks},
  author={Kidger, Patrick and Lyons, Terry},
  booktitle={Conference on learning theory},
  pages={2306--2327},
  year={2020},
  organization={PMLR}
}

@article{hornik1989multilayer,
  title={Multilayer feedforward networks are universal approximators},
  author={Hornik, Kurt and Stinchcombe, Maxwell and White, Halbert},
  journal={Neural networks},
  volume={2},
  number={5},
  pages={359--366},
  year={1989},
  publisher={Elsevier}
}

@article{cybenko1989approximation,
  title={Approximation by superpositions of a sigmoidal function},
  author={Cybenko, George},
  journal={Mathematics of control, signals and systems},
  volume={2},
  number={4},
  pages={303--314},
  year={1989},
  publisher={Springer}
}

@article{funahashi1989approximate,
  title   = {On the approximate realization of continuous mappings by neural networks},
  author  = {Funahashi, Kenji},
  journal = {Neural Networks},
  volume  = {2},
  number  = {3},
  pages   = {183--192},
  year    = {1989},
  doi     = {10.1016/0893-6080(89)90003-8}
}

@article{mcculloch1943logical,
  title={A logical calculus of the ideas immanent in nervous activity},
  author={McCulloch, Warren S. and Pitts, Walter},
  journal={The Bulletin of Mathematical Biophysics},
  volume={5},
  number={4},
  pages={115--133},
  year={1943}
}

@inproceedings{dai-etal-2022-knowledge,
  title = {Knowledge Neurons in Pretrained Transformers},
  author = {Dai, Damai and Dong, Li and Hao, Yaru and Sui, Zhifang and Chang, Baobao and Wei, Furu},
  booktitle = {Proceedings of the 60th Annual Meeting of the Association for Computational Linguistics (Volume 1: Long Papers)},
  year = {2022},
  pages = {8493--8502},
  address = {Dublin, Ireland},
  publisher = {Association for Computational Linguistics}
}

@article{Alain2016linearprobes,
  title = {Understanding intermediate layers using linear classifier probes},
  author = {Alain, Guillaume and Bengio, Yoshua},
  journal = {arXiv preprint arXiv:1610.01644},
  year = {2016}
}

@book{cook_nguyen_2010_logical_foundations,
  title     = {Logical Foundations of Proof Complexity},
  author    = {Cook, Stephen A. and Nguyen, Phuong},
  year      = {2010},
  publisher = {Cambridge University Press},
  series    = {Perspectives in Logic},
  address   = {Cambridge},
  isbn      = {9780521517225}
}

@misc{chaudhari2026exploiting,
    title={Exploiting Reasoning Patterns in Language Models for Indirect Targeted Poisoning},
    author={Harsh Chaudhari and Ethan Rathbun and Hanna Foerster and Jamie Hayes and Matthew Jagielski and Milad Nasr and Ilia Shumailov and Alina Oprea},
    year={2026},
    url={https://openreview.net/forum?id=hSLopCTOtT}
}

@inproceedings{zou2025poisonedrag,
  title={$\{$PoisonedRAG$\}$: Knowledge corruption attacks to $\{$Retrieval-Augmented$\}$ generation of large language models},
  author={Zou, Wei and Geng, Runpeng and Wang, Binghui and Jia, Jinyuan},
  booktitle={34th USENIX Security Symposium (USENIX Security 25)},
  pages={3827--3844},
  year={2025}
}

@inproceedings{liu2024formalizing,
  title={Formalizing and benchmarking prompt injection attacks and defenses},
  author={Liu, Yupei and Jia, Yuqi and Geng, Runpeng and Jia, Jinyuan and Gong, Neil Zhenqiang},
  booktitle={33rd USENIX Security Symposium (USENIX Security 24)},
  pages={1831--1847},
  year={2024}
}

@inproceedings{rajeev2025cats,
title={Cats Confuse Reasoning {LLM}: Query Agnostic Adversarial Triggers for Reasoning Models},
author={Meghana Arakkal Rajeev and Rajkumar Ramamurthy and Prapti Trivedi and Vikas Yadav and Oluwanifemi Bamgbose and Sathwik Tejaswi Madhusudhan and James Zou and Nazneen Rajani},
booktitle={Second Conference on Language Modeling},
year={2025},
url={https://openreview.net/forum?id=VrEPiN5WhM}
}

@article{song2025thinking,
  title={Thinking Isn't an Illusion: Overcoming the Limitations of Reasoning Models via Tool Augmentations},
  author={Song, Zhao and Yue, Song and Zhang, Jiahao},
  journal={arXiv preprint arXiv:2507.17699},
  year={2025}
}

@article{olsson2022induction,
  title   = {In-Context Learning and Induction Heads},
  author  = {Olsson, Catherine and Elhage, Nelson and Nanda, Neel and Joseph, Nicholas and DasSarma, Nova and Henighan, Tom and Mann, Ben and Askell, Amanda and Bai, Yuntao and Chen, Anna and others},
  journal = {arXiv preprint arXiv:2209.11895},
  year    = {2022},
  url     = {https://arxiv.org/abs/2209.11895}
}

@inproceedings{shojaee2025the,
title={The Illusion of Thinking: Understanding the Strengths and Limitations of Reasoning Models via the Lens of Problem Complexity},
author={Parshin Shojaee and Seyed Iman Mirzadeh and Keivan Alizadeh and Maxwell Horton and Samy Bengio and Mehrdad Farajtabar},
booktitle={The Thirty-ninth Annual Conference on Neural Information Processing Systems},
year={2025},
url={https://openreview.net/forum?id=YghiOusmvw}
}

@inproceedings{wu2024reasoning,
  title     = {Reasoning or Reciting? Exploring the Capabilities and Limitations of Language Models through Counterfactual Tasks},
  author    = {Wu, Zhaofeng and Qiu, Linlu and Ross, Alexis and Aky{\"u}rek, Ekin and Chen, Boyuan and Wang, Bailin and Kim, Najoung and Andreas, Jacob and Kim, Yoon},
  booktitle = {Proceedings of the Association for Computational Linguistics (ACL)},
  year      = {2024},
  publisher = {Association for Computational Linguistics}
}

@inproceedings{keysers2020measuring,
  title     = {Measuring Compositional Generalization: A Comprehensive Method on Realistic Data},
  author    = {Keysers, Daniel and Sch{\"a}rli, Nathanael and Scales, Nathan and Buisman, Hylke and Furrer, Daniel and Kashubin, Sergey and Momchev, Nikola and Sinopalnikov, Danila and Stafiniak, Lukasz and Tihon, Tomas and others},
  booktitle = {International Conference on Learning Representations (ICLR)},
  year      = {2020},
  url       = {https://arxiv.org/abs/1912.09713}
}

@inproceedings{bender2021stochastic,
  title     = {On the Dangers of Stochastic Parrots: Can Language Models Be Too Big?},
  author    = {Bender, Emily M. and Gebru, Timnit and McMillan-Major, Angelina and Shmitchell, Shmargaret},
  booktitle = {Proceedings of the ACM Conference on Fairness, Accountability, and Transparency (FAccT)},
  year      = {2021},
  pages     = {610--623},
  url       = {https://arxiv.org/abs/2101.11770}
}

@inproceedings{wei2022chain,
  title     = {Chain-of-Thought Prompting Elicits Reasoning in Large Language Models},
  author    = {Wei, Jason and Wang, Xuezhi and Schuurmans, Dale and Bosma, Maarten and Ichter, Brian and Xia, Fei and Chi, Ed and Le, Quoc and Zhou, Denny},
  booktitle = {Advances in Neural Information Processing Systems (NeurIPS)},
  year      = {2022},
  url       = {https://arxiv.org/abs/2201.11903}
}

@article{berglund2023reversal,
  title   = {The Reversal Curse: LLMs Trained on ``A is B'' Fail to Learn ``B is A''},
  author  = {Berglund, Leo and et al.},
  journal = {arXiv preprint arXiv:2309.12288},
  year    = {2023},
  url     = {https://arxiv.org/abs/2309.12288}
}

@inproceedings{zhao2024mathtrap,
  title     = {MathTrap: Evaluating Mathematical Reasoning of Large Language Models via Trap Problems},
  author    = {Zhao, Jun and Li, Yifan and Zhang, Zhen and Chen, Yiming},
  booktitle = {Proceedings of the Conference on Empirical Methods in Natural Language Processing (EMNLP)},
  year      = {2024},
  url       = {https://aclanthology.org/2024.emnlp-main.915}
}

@article{song2024survey,
  title   = {A Survey on Large Language Model Reasoning Failures},
  author  = {Song, Peiyang and Han, Pengrui and Goodman, Noah D.},
  journal = {arXiv preprint arXiv:2410.08944},
  year    = {2024},
  url     = {https://arxiv.org/abs/2410.08944}
}

@article{mali2021neural,
  title={A neural state pushdown automata},
  author={Mali, Ankur Arjun and Ororbia II, Alexander G and Giles, C Lee},
  journal={IEEE Transactions on Artificial Intelligence},
  volume={1},
  number={3},
  pages={193--205},
  year={2021},
  publisher={IEEE}
}

@inproceedings{lowd2008learning,
  title={Learning arithmetic circuits},
  author={Lowd, Daniel and Domingos, Pedro},
  booktitle={Proceedings of the Twenty-Fourth Conference on Uncertainty in Artificial Intelligence},
  pages={383--392},
  year={2008}
}

@inproceedings{bournez2025universal,
  title     = {A Universal Uniform Approximation Theorem for Neural Networks},
  author    = {Olivier Bournez and Johanne Cohen and Adrian Wurm},
  booktitle = {Proceedings of the International Symposium on Mathematical Foundations of Computer Science (MFCS)},
  year      = {2025},
  address   = {Palaiseau, France},
  institution = {Institut Polytechnique de Paris, Université Paris-Saclay, BTU Cottbus-Senftenberg}
}

@article{karpinski1997polynomial,
  title={Polynomial bounds for VC dimension of sigmoidal and general Pfaffian neural networks},
  author={Karpinski, Marek and Macintyre, Angus},
  journal={Journal of Computer and System Sciences},
  volume={54},
  number={1},
  pages={169--176},
  year={1997},
  publisher={Elsevier}
}

@article{margulies2016polynomial,
  title={Polynomial-time solvable\# CSP problems via algebraic models and Pfaffian circuits},
  author={Margulies, Susan and Morton, Jason},
  journal={Journal of Symbolic Computation},
  volume={74},
  pages={152--180},
  year={2016},
  publisher={Elsevier}
}

@article{blumer1989learnability,
  title={\protect{Learnability and the Vapnik-Chervonenkis dimension}},
  author={Blumer, Anselm and Ehrenfeucht, Andrzej and Haussler, David and Warmuth, Manfred K},
  journal={Journal of the ACM (JACM)},
  volume={36},
  number={4},
  pages={929--965},
  year={1989},
  publisher={ACM New York, NY, USA}
}

@inproceedings{maass1991sigmoidvsbool,
  author    = {Wolfgang Maass and Georg Schnitger and Eduardo D. Sontag},
  title     = {On the Computational Power of Sigmoid versus Boolean Threshold Circuits},
  booktitle = {Proceedings of the 32nd Annual Symposium on Foundations of Computer Science (FOCS)},
  year      = {1991},
  pages     = {767--776}
}

@article{siu1993division,
  author  = {Kai-Yeung Siu and Jehoshua Bruck and Thomas Kailath and Thomas Hofmeister},
  title   = {Depth Efficient Neural Networks for Division and Related Problems},
  journal = {IEEE Transactions on Information Theory},
  volume  = {39},
  number  = {3},
  pages   = {946--956},
  year    = {1993},
  doi     = {10.1109/18.256501}
}

@article{merrill2022saturated,
  author  = {William Merrill and Ashish Sabharwal and Noah A. Smith},
  title   = {Saturated Transformers are Constant-Depth Threshold Circuits},
  journal = {Transactions of the Association for Computational Linguistics},
  year    = {2022}
}

@article{merrill2023parallelism,
  author  = {William Merrill and Ashish Sabharwal},
  title   = {The Parallelism Tradeoff: Limitations of Log-Precision Transformers},
  journal = {Transactions of the Association for Computational Linguistics},
  year    = {2023}
}

@article{chiang2024uniformtc0,
  author  = {Chiang, David},
  title   = {Transformers in Uniform {TC}$^0$},
  journal = {arXiv preprint},
  volume  = {arXiv:2409.13629},
  year    = {2024}
}

@book{siegelmann2012neural,
  title={Neural networks and analog computation: beyond the Turing limit},
  author={Siegelmann, Hava T},
  year={2012},
  publisher={Springer Science \& Business Media}
}

@inproceedings{perez2018on,
    title={On the Turing Completeness of Modern Neural Network Architectures},
    author={Jorge Pérez and Javier Marinković and Pablo Barceló},
    booktitle={International Conference on Learning Representations},
    year={2019},
    url={https://openreview.net/forum?id=HyGBdo0qFm},
}

@article{chung2021turing,
  title={Turing completeness of bounded-precision recurrent neural networks},
  author={Chung, Stephen and Siegelmann, Hava},
  journal={Advances in neural information processing systems},
  volume={34},
  pages={28431--28441},
  year={2021}
}

@inproceedings{chen2025magicore,
  title     = {MAGICORE: Multi-Agent, Iterative, Coarse-to-Fine Refinement for Reasoning},
  author    = {Chen, Justin Chih-Yao and Prasad, Archiki and Saha, Swarnadeep and Stengel-Eskin, Elias and Bansal, Mohit},
  booktitle = {Proceedings of EMNLP},
  year      = {2025},
  url       = {https://aclanthology.org/2025.emnlp-main.1660/}
}

@inproceedings{wang2023selfconsistency,
  title     = {Self-Consistency Improves Chain of Thought Reasoning in Language Models},
  author    = {Wang, Xuezhi and Wei, Jason and Schuurmans, Dale and Le, Quoc V. and Chi, Ed H. and Narang, Sharan and Chowdhery, Aakanksha and Zhou, Denny},
  booktitle = {International Conference on Learning Representations (ICLR)},
  year      = {2023},
  url       = {https://openreview.net/forum?id=1PL1NIMMrw}
}

@inproceedings{yao2023react,
  title     = {ReAct: Synergizing Reasoning and Acting in Language Models},
  author    = {Yao, Shunyu and Zhao, Jeffrey and Yu, Dian and Du, Nan and Shafran, Izhak and Narasimhan, Karthik R. and Cao, Yuan},
  booktitle = {International Conference on Learning Representations (ICLR)},
  year      = {2023},
  url       = {https://arxiv.org/abs/2210.03629}
}

@inproceedings{yao2023treeofthoughts,
  title     = {Tree of Thoughts: Deliberate Problem Solving with Large Language Models},
  author    = {Yao, Shunyu and Yu, Dian and Zhao, Jeffrey and Shafran, Izhak and Narasimhan, Karthik R. and Cao, Yuan},
  booktitle = {Advances in Neural Information Processing Systems (NeurIPS)},
  year      = {2023},
  url       = {https://arxiv.org/abs/2305.10601}
}

@inproceedings{schick2023toolformer,
  title     = {Toolformer: Language Models Can Teach Themselves to Use Tools},
  author    = {Schick, Timo and Dwivedi-Yu, Jane and Dess{\`i}, Roberto and Raileanu, Roberta and Lomeli, Maria and Zettlemoyer, Luke and Cancedda, Nicola and Scialom, Thomas},
  booktitle = {Advances in Neural Information Processing Systems (NeurIPS)},
  year      = {2023},
  url       = {https://arxiv.org/abs/2302.04761}
}

@inproceedings{shen2023hugginggpt,
  title     = {HuggingGPT: Solving AI Tasks with ChatGPT and Its Friends in HuggingFace},
  author    = {Shen, Yongliang and Chen, Lianmin and Tang, Xinyu and Li, Zhilin and Wang, Hui and Qian, Rui and Li, Chenliang and Jiang, Wenbo and Xiong, Yujie and Zhang, Wei and Liu, Zhuang and Liu, Peng},
  booktitle = {Advances in Neural Information Processing Systems (NeurIPS)},
  year      = {2023},
  url       = {https://arxiv.org/abs/2303.17580}
}

@inproceedings{shinn2023reflexion,
  title     = {Reflexion: Language Agents with Verbal Reinforcement Learning},
  author    = {Shinn, Noah and Labash, Beck and Gopinath, Ashwin and Narasimhan, Karthik R.},
  booktitle = {Advances in Neural Information Processing Systems (NeurIPS)},
  year      = {2023},
  url       = {https://arxiv.org/abs/2303.11366}
}

@article{madaan2023self,
  title={Self-refine: Iterative refinement with self-feedback},
  author={Madaan, Aman and Tandon, Niket and Gupta, Prakhar and Hallinan, Skyler and Gao, Luyu and Wiegreffe, Sarah and Alon, Uri and Dziri, Nouha and Prabhumoye, Shrimai and Yang, Yiming and others},
  journal={Advances in Neural Information Processing Systems},
  volume={36},
  pages={46534--46594},
  year={2023}
}

@inproceedings{wang2024voyager,
  title     = {Voyager: An Open-Ended Embodied Agent with Large Language Models},
  author    = {Wang, Guanzhi and Wang, Yu and Liu, Xinyi and Liu, Shibin and Chen, Yizhou and Zhang, Yiming and Zhao, Jian and Zhang, Tong},
  booktitle = {International Conference on Learning Representations (ICLR)},
  year      = {2024},
  url       = {https://arxiv.org/abs/2305.16291}
}

@inproceedings{zelikman2022star,
  title={Star: Self-taught reasoner},
  author={Zelikman, Eric and Wu, Yuhuai and Goodman, Noah D},
  booktitle={Proceedings of the NIPS},
  volume={22},
  year={2022}
}

@inproceedings{Hastad86,
  author    = {Johan H{\aa}stad},
  title     = {Almost Optimal Lower Bounds for Small Depth Circuits},
  booktitle = {Proceedings of the 18th Annual ACM Symposium on Theory of Computing (STOC)},
  year      = {1986},
  pages     = {6--20},
  publisher = {ACM}
}

@book{jukna2012boolean,
  title     = {Boolean Function Complexity: Advances and Frontiers},
  author    = {Jukna, Stasys},
  year      = {2012},
  publisher = {Springer},
  series    = {Algorithms and Combinatorics},
  volume    = {27},
  address   = {Berlin, Heidelberg},
  isbn      = {978-3-642-24507-5}
}

@article{kratsios2025quantifying,
  title={Quantifying The Limits of AI Reasoning: Systematic Neural Network Representations of Algorithms},
  author={Kratsios, Anastasis and Zvigelsky, Dennis and Hart, Bradd},
  journal={arXiv preprint arXiv:2508.18526},
  year={2025}
}

@book{bang2008digraphs,
  title={Digraphs: theory, algorithms and applications},
  author={Bang-Jensen, J{\o}rgen and Gutin, Gregory Z},
  year={2008},
  publisher={Springer Science \& Business Media}
}

@inproceedings{craven1996trepan,
 author = {Craven, Mark and Shavlik, Jude},
 booktitle = {Advances in Neural Information Processing Systems},
 editor = {D. Touretzky and M.C. Mozer and M. Hasselmo},
 pages = {},
 publisher = {MIT Press},
 title = {Extracting Tree-Structured Representations of Trained Networks},
 url = {https://proceedings.neurips.cc/paper_files/paper/1995/file/45f31d16b1058d586fc3be7207b58053-Paper.pdf},
 volume = {8},
 year = {1995}
}

@article{Badreddine2022LTN,
  title={Logic tensor networks},
  author={Badreddine, Samy and Garcez, Artur d'Avila and Serafini, Luciano and Spranger, Michael},
  journal={Artificial Intelligence},
  volume={303},
  pages={103649},
  year={2022},
  publisher={Elsevier}
}

@article{manhaeve2018deepproblog,
  title={Deepproblog: Neural probabilistic logic programming},
  author={Manhaeve, Robin and Dumancic, Sebastijan and Kimmig, Angelika and Demeester, Thomas and De Raedt, Luc},
  journal={Advances in neural information processing systems},
  volume={31},
  year={2018}
}

@inproceedings{PoonDomingos2011SPN,
  title={Sum-product networks: A new deep architecture},
  author={Poon, Hoifung and Domingos, Pedro},
  booktitle={2011 IEEE International Conference on Computer Vision Workshops (ICCV Workshops)},
  pages={689--690},
  year={2011},
  organization={IEEE}
}

@article{ProbCirc20,
  title={Probabilistic circuits: A unifying framework for tractable probabilistic models},
  author={Choi, Y and Vergari, Antonio and Van den Broeck, Guy},
  journal={UCLA. URL: http://starai. cs. ucla. edu/papers/ProbCirc20. pdf},
  pages={6},
  year={2020}
}
\newpage

\appendix

\section{Table of all 16 Boolean Gates with Fan-in 2 and Fan-out 1}
\label{s:GateEnumeration}
Table~\ref{tab:all_possible_gates} lists all possible Boolean gates with fan-in two and fan-out one.
\begin{table}[htp!]
    \centering
    \begin{tabular}{c l l| c c c c}
    & &  & \multicolumn{4}{c}{$Z^{(i)}$} \\ \hline
    \midrule
    Gate number & Gate & Expression & (0,0) & (0,1) & (1,0) & (1,1) \\ 
    \midrule
    $g_1$ & Constant $\operatorname{FALSE}$ & $0$ & 0 & 0 & 0 & 0 \\
    $g_2$ & AND & $A \land B$ & 0 & 0 & 0 & 1 \\
    $g_3$ & A AND NOT B & $A \land \lnot B$ & 0 & 0 & 1 & 0 \\
    $g_4$ & PROJ$_A$ & $A$ & 0 & 0 & 1 & 1 \\
    $g_5$ & NOT A AND B & $\lnot A \land B$ & 0 & 1 & 0 & 0 \\
    $g_6$ & PROJ$_B$ & $B$ & 0 & 1 & 0 & 1 \\
    $g_7$ & XOR & $A \oplus B$ & 0 & 1 & 1 & 0 \\
    $g_8$ & OR & $A \lor B$ & 0 & 1 & 1 & 1 \\
    $g_9$ & NOR & $\lnot(A \lor B)$ & 1 & 0 & 0 & 0 \\
    $g_{10}$ & XNOR & $A \leftrightarrow B$ & 1 & 0 & 0 & 1 \\
    $g_{11}$ & NOT B & $\lnot B$ & 1 & 0 & 1 & 0 \\
    $g_{12}$ & A OR NOT B & $A \lor \lnot B$ & 1 & 0 & 1 & 1 \\
    $g_{13}$ & NOT A & $\lnot A$ & 1 & 1 & 0 & 0 \\
    $g_{14}$ & NOT A OR B & $\lnot A \lor B$ & 1 & 1 & 0 & 1 \\
    $g_{15}$ & NAND & $\lnot(A \land B)$ & 1 & 1 & 1 & 0 \\
    $g_{16}$ & Constant $\operatorname{TRUE}$ & $1$ & 1 & 1 & 1 & 1 \\
    \bottomrule
    \end{tabular}
    \caption{Enumeration of all elementary Boolean gates; we denote the set $\mathbb{G}^{\star}\eqdef \{g_i\}_{i=1}^{16}$.}
    \label{tab:all_possible_gates}
\end{table}

\subsection{Elementary Properties of Boolean Gates}
\subsection{Boolean and Binary Logical Gates}
\label{s:Prelim__ss:BooleanLogic}

Note that there are two $0$-ary binary gates; namely the constant $\operatorname{TRUE}\eqdef 1$ or $\operatorname{FALSE}\eqdef 0$ gates.  Furthermore, there can only be two ``truly'' unary gates namely the identity gate sending any $x\in \{0,1\}$ to itself and the negation gate $\operatorname{NOT}$ sending $x$ to $1+x\,\mod(2)$.  Our first lemma observes that every $0$-ary and each unary gate can be realized as one of the binary gates in $\Bary$ upon composing with the projection $\pi:\{0,1\}^2\ni (x_1,x_2)\to x_1\in \{0,1\}$.
\begin{lemma}[{Embedding of Unary Gates}]
\label{lem:lift2binary}
There exist two constant binary gates; namely $g_1$ and $g_{16}$ in Table~\ref{tab:all_possible_gates}.  Moreover, let $g:\{0,1\} \to \{0,1\}$ be a unary Boolean gate.  Then, there exists a binary Boolean gate $g^{\uparrow}$ such that $g^{\uparrow}= g \circ \pi$. Consequently, we may study only binary Boolean gates and subsume our study of constant (0-ary) and unary Boolean gates.
\end{lemma}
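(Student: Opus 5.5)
The lemma is proved by direct inspection of Table~\ref{tab:all_possible_gates}; no earlier result is needed beyond the enumeration of $\Bary$ as all $2^4=16$ maps $\{0,1\}^2\to\{0,1\}$.

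First I will identify the constant gates. A binary gate $g:\{0,1\}^2\to\{0,1\}$ is determined by its truth vector $(g(0,0),g(0,1),g(1,0),g(1,1))\in\{0,1\}^4$. It is constant exactly when this vector is $(0,0,0,0)$ or $(1,1,1,1)$. In the enumeration of the table these rows are $g_1$ ($\operatorname{FALSE}$) and $g_{16}$ ($\operatorname{TRUE}$), so there are exactly two constant binary gates.

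Next I will handle the unary gates. There are exactly four maps $g:\{0,1\}\to\{0,1\}$: the two constants, the identity, and $\operatorname{NOT}$. For each one I define $g^{\uparrow}\eqdef g\circ\pi$, so that $g^{\uparrow}(x_1,x_2)=g(x_1)$. This is a map $\{0,1\}^2\to\{0,1\}$, hence an element of $\Bary$, and the identity $g^{\uparrow}=g\circ\pi$ holds by construction. Only the existence of $g^{\uparrow}$ needs checking, and that is automatic. For concreteness I will still match each case to the table by reading off truth vectors in the order $(0,0),(0,1),(1,0),(1,1)$:
\begin{itemize}
\item the constants give $g_1$ and $g_{16}$;
\item the identity gives $(0,0,1,1)$, which is $g_4=\operatorname{PROJ}_A$;
\item $\operatorname{NOT}$ gives $(1,1,0,0)$, which is $g_{13}=\lnot A$.
\end{itemize}
This also agrees with the lifted $\operatorname{NOT}(x_1,x_2)=1-x_1$ from Section~\ref{s:Prelim_AC}.

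Finally I will justify the ``consequently'' clause. Whenever a circuit uses a $0$-ary or unary gate at some node, that gate can be replaced by the corresponding $g^{\uparrow}$. The node is fed the required input as the first argument and any available node as the dummy second argument, which $g^{\uparrow}$ ignores. This leaves the computed function unchanged, so it suffices to study binary gates.

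There is no real obstacle. The only care needed is to keep the row-ordering convention of the table consistent (first coordinate is $A$), so that the identification with $g_4$ and $g_{13}$ is correct.
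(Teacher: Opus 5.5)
Your proposal is correct and follows the same route as the paper's proof. The paper likewise reads off from Table~\ref{tab:all_possible_gates} that $g_1$ and $g_{16}$ are the constant gates, that $g_4=\operatorname{ID}\circ\pi$ and $g_{13}=\operatorname{NOT}\circ\pi$, and that the unary constants lift to $g_1$ and $g_{16}$. Your dummy-second-input justification of the ``consequently'' clause is a harmless elaboration that the paper leaves implicit.
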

\begin{proof}[{Proof of Lemma~\ref{lem:lift2binary}}]
By definition, the gates $g_1$ and $g_{16}$ in Table~\ref{tab:all_possible_gates} are constant.  We also have that $g_4=\operatorname{ID}\circ \, \pi$ and $g_{13}=\operatorname{NOT}\circ \, \pi$; the unary constant true and false gates are also given by $g_{16}$ and $g_{1}$ by composition with $\pi$; respectively.
\end{proof}
We denote the set of all standard Boolean gates $\Bool\eqdef \{\operatorname{AND},\operatorname{OR},\operatorname{NOT}\}$ as a subset of $\Bary$, however as the next result shows; there exists exactly $13$ other possible binary Boolean gates not included in $\Bool$ and these are listed in Table~\ref{tab:all_possible_gates}.
\begin{lemma}[{Enumeration of $\Bary$}]
\label{lem:Bool_Gates}
There exists exactly $16$, $2$-ary binary Boolean gates.  Moreover, these are enumerated in Table~\ref{tab:all_possible_gates}.
\end{lemma}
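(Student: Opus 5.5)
The plan is to identify binary Boolean gates with their truth tables. A map $g:\{0,1\}^2\to\{0,1\}$ is completely determined by its values on the four points $\bar{x}_1,\dots,\bar{x}_4$ of $\{0,1\}^2$, taken in the order $(0,0),(0,1),(1,0),(1,1)$. So the first step is to define the evaluation map
\[
T:\{g:\{0,1\}^2\to\{0,1\}\}\to\{0,1\}^4,\qquad T(g)\eqdef \big(g(\bar{x}_1),g(\bar{x}_2),g(\bar{x}_3),g(\bar{x}_4)\big).
\]
The second step is to check that $T$ is a bijection. It is injective because two functions that agree at every point of their (finite) domain are equal. It is surjective because any $Z\in\{0,1\}^4$ is realized by the gate $g(\bar{x}_i)\eqdef Z_i$. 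Hence the number of gates is $|\{0,1\}^4| = 2^{2^2}=16$, which proves the first claim.

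For the enumeration claim, I would check that the sixteen rows of Table~\ref{tab:all_possible_gates}, read as vectors $Z^{(i)}$ in the columns $(0,0),(0,1),(1,0),(1,1)$, are pairwise distinct. This is easiest to see by noting that the rows are listed in increasing binary order $0000,0001,\dots,1111$. Since there are sixteen distinct truth tables, the table exhausts $\{0,1\}^4$, and by bijectivity of $T$ it lists every gate in $\Bary$ exactly once.

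Finally, I would verify row by row that each displayed expression (for example $A\land\lnot B$, XOR, NAND) evaluates to the listed truth values. This is a routine finite check and the only real obstacle, which is purely clerical. As a consequence, since $\Bool$ contains $3$ of the gates (with NOT realized as $g_{13}$ via Lemma~\ref{lem:lift2binary}), the remaining $13$ are exactly the other rows of the table.
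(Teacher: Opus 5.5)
Your proposal is correct and follows essentially the same route as the paper: you count the maps $\{0,1\}^2\to\{0,1\}$ through their truth tables as $2^{4}=16$, then note that the sixteen rows of Table~\ref{tab:all_possible_gates} are pairwise distinct truth assignments and so exhaust every gate. Your extra row-by-row check that each listed expression matches its truth values (it does) is a harmless refinement that the paper leaves implicit.
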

\begin{proof}[{Proof of Lemma~\ref{lem:Bool_Gates}}]
Every Boolean gate is a map $g:\{0,1\}^2\to \{0,1\}$.  Since $\#\{0,1\}=2$, then $\#\{0,1\}^2=4$; whence the cardinality of $\{0,1\}^{\{0,1\}^2}$ is $2^4=16$.  For the second statement, note that each gate in Table~\ref{tab:all_possible_gates} has a distinct truth assignment.
\end{proof}

\section{Boolean Neuronal Representability (BNR)}
\label{s:BNR_Appendix}
We now include a few more details on the BNR property of real-valued functions.

\begin{example}
\label{ex:Example}
Let $a>0$, the map $f:\{0,1\}^2\ni x \to a\cdot x\in \mathbb{R}^2$ is BNR.
\end{example}

\begin{example}
The map $f:\{0,1\}^2\ni x \to  (x_1+x_2,x_1)\in  \mathbb{R}^2$ is not BNR since $f((0,0))_1=0$, $f((0,1))_1=1$, and $f((1,1))_1=2$; meaning it is impossible to assign two distinct truth values to these three distinct values.
\end{example}

\begin{example}[ReLU neurons need not be BNR]
\label{ex:BnR}
Let $2\le B\in \mathbb{N}_+$.  Let $0<a_1<a_2$ and define $a\eqdef (a_i)_{i=1}^B$.  Thus
$
    \operatorname{ReLU}(a^{\top}0)= 0
<
    \operatorname{ReLU}(a^{\top}e_1)=a_1
<
    \operatorname{ReLU}(a^{\top}e_2)=a_2
$ 
so $f(\cdot)\eqdef \operatorname{ReLU}(a^{\top}\cdot + 0)$ fails Proposition~\ref{prop:BNR_Equivalence} (ii); whence $f$ is not BNR.
\end{example}

\begin{proposition}[Random ReLU neurons fail to be BNR w.h.p.]\label{thm:relu-not-bnr-w.h.p.}
Let $3 \le B\in\N_{+}$ and let $a\eqdef(a_i)_{i=1}^B\in\R^B$ have i.i.d.\ coordinates, where each $a_i$ is drawn from an atom-free distribution symmetric about $0$.
Define $f(\cdot)\eqdef \operatorname{ReLU}(a^\top \cdot + 0)$.
Then, with probability at least $1-\frac{B+2}{2^{B}}$,
 there exist \emph{distinct} indices $i,j,k\in[B]_+$ such that
\[
\operatorname{ReLU}(a^\top e_i) = 0
<
\operatorname{ReLU}(a^\top e_j)=a_j
<
\operatorname{ReLU}(a^\top e_k)=a_k,
\]
and hence $f(\cdot)$ fails Proposition~\ref{prop:BNR_Equivalence}~(ii); whence $f$ is not BNR.
\end{proposition}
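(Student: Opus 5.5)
The proposition to prove is Proposition~\ref{thm:relu-not-bnr-w.h.p.}. The plan is to reduce it to the existence of one strictly negative coordinate of $a$ and two strictly positive, distinct ones. On the standard basis vectors we have $\operatorname{ReLU}(a^\top e_i)=\max(a_i,0)$. So if some $a_i<0$ we get $\operatorname{ReLU}(a^\top e_i)=0$. If there are also two further indices $j\neq k$ with $a_j,a_k>0$ and $a_j\neq a_k$, we relabel so that $a_j<a_k$. This gives three distinct values $0<a_j<a_k$ of $f$ on $\{0,1\}^B$, and by Proposition~\ref{prop:BNR_Equivalence}~(ii) (a BNR neuron takes at most two values on the cube) $f$ is not BNR, exactly as in Example~\ref{ex:BnR}. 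The remaining work is to show that this configuration fails with probability at most $(B+2)/2^B$.

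First, I would discard null events. Because the law is atom-free, $\mathbb{P}(a_i=0)=0$ for each $i$. Because the coordinates are independent, $\mathbb{P}(a_j=a_k)=\int \mathbb{P}(a_j=t)\,d\mu(t)=0$ for every $j\neq k$. Since there are only finitely many such events, almost surely every coordinate is nonzero and all coordinates are pairwise distinct. Symmetry about $0$ together with the absence of an atom at $0$ gives $\mathbb{P}(a_i>0)=\mathbb{P}(a_i<0)=1/2$. Hence the sign vector $(\operatorname{sgn} a_i)_{i=1}^B$ is almost surely uniform on $\{\pm1\}^B$, with independent coordinates.

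The failure event is then, up to a null set, the event that the number of negative coordinates is $0$, or the number of positive coordinates is $0$ or $1$. Its probability is $\big(1+1+B\big)/2^B=(B+2)/2^B$: one sign pattern has all coordinates positive, one has all negative, and $B$ have exactly one positive coordinate. For $B\ge3$ these $B+2$ patterns are distinct. On the complement there is at least one negative coordinate and at least two positive ones, and the positive ones are almost surely distinct. So the desired $i,j,k$ exist with probability at least $1-(B+2)/2^B$. The requirement $B\ge 3$ is exactly what is needed to fit one negative and two positive indices.

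The only step needing some care is the measure-zero argument for ties $a_j=a_k$, which uses atom-freeness of the common law through Fubini. Everything else is counting sign patterns. One should also check that the form of Proposition~\ref{prop:BNR_Equivalence}~(ii) being invoked is \emph{three distinct values on the cube preclude BNR}; this in any case follows directly from Definition~\ref{defn:BnB}, since an injection from a set of size at least $3$ into $\{0,1\}$ is impossible.
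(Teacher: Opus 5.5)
Your proposal is correct and follows essentially the same route as the paper, which likewise restricts to the event $2\le N_+\le B-1$ with $N_+\sim\mathrm{Binomial}(B,\tfrac12)$ and computes the complement's probability $(B+2)/2^B$, matching your count of sign patterns. Your Fubini argument that ties $a_j=a_k$ have probability zero is a slightly more explicit version of the paper's bare appeal to atom-freeness.
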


\begin{proof}[Proof of Proposition \ref{thm:relu-not-bnr-w.h.p.}]
Since each $a_i$ is drawn i.i.d.\ from an atom-free distribution symmetric about $0$, we have
\[
\mathbb{P}(a_i>0)=\mathbb{P}(a_i<0)=\tfrac12,
\qquad
\mathbb{P}(a_i=0)=0,
\]
and moreover, by atom-freeness,
$\mathbb{P}(\exists\, i\neq j:\ a_i=a_j)=0$,
so that, almost surely, the values $a_1,\dots,a_B$ are pairwise distinct.

We have $a^\top e_i=a_i$ and hence $\operatorname{ReLU}(a^\top e_i)=\max\{a_i,0\}$.
Let $N_+\eqdef |\{i\in[B]_+:a_i>0\}|$, so $N_+\sim\mathrm{Binomial}(B,\tfrac12)$.
If $2\le N_+\le B-1$, then there exist distinct $i,j,k\in[B]_+$ with $a_i<0<a_j<a_k$
(note that $a_1,\dots,a_B$ are a.s.\ distinct), and therefore
\[
\operatorname{ReLU}(a^\top e_i)=0<\operatorname{ReLU}(a^\top e_j)=a_j<\operatorname{ReLU}(a^\top e_k)=a_k.
\]
Hence
\[
\mathbb{P}\bigl(\exists\, i,j,k\text{ distinct s.t.\ }0=\operatorname{ReLU}(a^\top e_i)
<\operatorname{ReLU}(a^\top e_j)
<\operatorname{ReLU}(a^\top e_k)\bigr)
\;\ge\;
\mathbb{P}(2\le N_+\le B-1).
\]
Finally,
\[
\mathbb{P}(2\le N_+\le B-1)
=
1-\mathbb{P}(N_+=0)-\mathbb{P}(N_+=1)-\mathbb{P}(N_+=B)
=
1-\frac{B+2}{2^B}.
\]
Thus, with probability at least $1-\frac{B+2}{2^B}$, the above inequalities hold. Thus, $f(\cdot)$ fails Proposition~\ref{prop:BNR_Equivalence} (ii), and hence $f(\cdot)$ is not BNR.
\end{proof}

The following provides a simple characterization of the BNR property which can be used as an efficient criterion for verifying if a neuron is Boolean neuronally representable (property (ii)) as canonical choice of a BNR.
\begin{proposition}[{Characterization of BNR - Extended Version of Proposition~\ref{prop:BNR_Equivalence}}]
\label{prop:BNR_Equivalence__EXTENDED}
Let $B_{in},B_{out}\in \mathbb{N}_+$ and $f:\{0,1\}^{B_{in}}\to \mathbb{R}^{B_{out}}$.  Then, the following are equivalent:
\begin{enumerate}
    \item[(i)] $f$ is BNR,
    \item[(ii)] For every $i\in [B_{out}]_+$, 
    $
        \#\{ f(x)_i : x\in\{0,1\}^{B_{in}}\}\le 2
    $,
    \item[(iii)] The map $\mathcal{R}_f:\mathbb{R}^{B_{out}}\to \{0,1\}^{B_{out}}$ sending any $x\in \mathbb{R}^{B_{out}}$ to 
    \[
            \mathcal{R}_f(x)
        \eqdef 
            \big(
                I(
                    x_i = f(\mathbf{0})_i
                )
            \big)_{i=1}^{B_{out}},
    \]
    where $\mathbf{0}\eqdef (0\,\dots\,0)\in\{0,1\}^{B_{in}}$,
    is a Boolean neuronal representation for $f$.
\end{enumerate}
We refer to $\mathcal{R}_f$ as the canonical BNR, whenever $f$ is BNR.
\end{proposition}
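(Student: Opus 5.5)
The plan is to prove the cycle (i)$\Rightarrow$(ii)$\Rightarrow$(iii)$\Rightarrow$(i), working one output coordinate at a time. By Definition~\ref{defn:BnB}, BNR is a coordinatewise condition: each $p_i\circ f$ must separately admit an injective recoding into $\{0,1\}$. Throughout, fix $i\in[B_{out}]_+$ and write $\mathcal{X}_i\eqdef p_i\circ f(\{0,1\}^{B_{in}})=\{f(x)_i:x\in\{0,1\}^{B_{in}}\}$.

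\textbf{(i)$\Rightarrow$(ii).} Suppose $f$ is BNR. Then there is an injective map $\mathcal{R}_i:\mathcal{X}_i\to\{0,1\}$. An injection into a two-element set forces $\#\mathcal{X}_i\le\#\{0,1\}=2$, which is exactly condition (ii) for coordinate $i$.

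\textbf{(ii)$\Rightarrow$(iii).} Assume $\#\mathcal{X}_i\le 2$. Note that $f(\mathbf{0})_i\in\mathcal{X}_i$, so $\mathcal{X}_i$ is either $\{f(\mathbf{0})_i\}$ or $\{f(\mathbf{0})_i,c_i\}$ for a single value $c_i\ne f(\mathbf{0})_i$. Define $\mathcal{R}_i(t)\eqdef I(t=f(\mathbf{0})_i)$ restricted to $\mathcal{X}_i$.
\begin{itemize}
\item In the singleton case, $\mathcal{R}_i$ is trivially injective.
\item In the two-point case, $\mathcal{R}_i$ sends $f(\mathbf{0})_i\mapsto 1$ and $c_i\mapsto 0$, so it is injective.
\end{itemize}
By construction, $\mathcal{R}_f(f(x))=(\mathcal{R}_i(f(x)_i))_{i=1}^{B_{out}}$ for every $x\in\{0,1\}^{B_{in}}$. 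This is precisely the defining identity in Definition~\ref{defn:BnB}, so $\mathcal{R}_f$ is a Boolean neuronal representation of $f$.

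\textbf{(iii)$\Rightarrow$(i).} If $\mathcal{R}_f$ is a Boolean neuronal representation, then by definition the required injective coordinate maps $\mathcal{R}_i$ exist. Hence $f$ is BNR.

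\textbf{Main subtlety.} There is no real obstacle. The one point needing care is reading (iii) as an assertion that includes injectivity of the induced coordinate maps on $\mathcal{X}_i$, not merely that $\mathcal{R}_f$ is well defined. The key observation is that the indicator of the reference value $f(\mathbf{0})_i$ separates the at most two attained values. This also makes the criterion algorithmic: evaluate each neuron once at $\mathbf{0}$ and compare.
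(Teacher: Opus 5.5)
Your proposal is correct and follows the paper's proof essentially verbatim: the same cycle (i)$\Rightarrow$(ii)$\Rightarrow$(iii)$\Rightarrow$(i). As in the paper, you get (i)$\Rightarrow$(ii) from pigeonhole on an injection into $\{0,1\}$, and (ii)$\Rightarrow$(iii) from the observation that the indicator of the reference value $f(\mathbf{0})_i$ separates the at most two attained values; (iii)$\Rightarrow$(i) is immediate from the definition.
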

\begin{definition}[BNR Density]
\label{defn:BNR_Ratio}
Let $L\in\mathbb{N}_+$ and let $B_1,\dots,B_{L+1}\in \mathbb{N}_+$.  For each $l\in [L]_+$ let
$f_l:\mathbb{R}^{B_l}\to \mathbb{R}^{B_{l+1}}$ and consider the composite map
$f\eqdef f_L\circ \dots \circ f_1$.    
Define the sets $\mathcal{D}^f_0\eqdef \{0,1\}^{B_1}$ and, for every $l\in[L]_+$, iteratively define
\[
        \mathcal{D}^f_l
    \eqdef
        f_l(\mathcal{D}^f_{l-1}).
\]
The BNR density of $f$ is 
$
        \operatorname{BNR}(f)
    \eqdef 
        \sum_{l=1}^L\,
        \sum_{i=1}^{B_{l+1}}\,
            \frac{
                I\big(
                    \#\{ (f_l(x))_i : x\in \mathcal{D}^f_{l-1}\}\le 2
                \big)
            }{LB_{l+1}}
$.        
\end{definition}
\begin{example}[Boolean Functions (and Gates) Have BNR Density $1$]
\label{ex:Booleanfunctionstrivial_BNR}
Let $L=1$, $B_1=B_{in}$, and $B_2=B_{out}$.  For every
$f:\{0,1\}^{B_{in}}\to \{0,1\}^{B_{out}}$ we have $\operatorname{BNR}(f)=1$.
\end{example}

\subsection{Proofs Surrounding BNR}
\label{app:BNR_detail}
\begin{proof}[{Proof of Proposition~\ref{prop:BNR_Equivalence__EXTENDED}}]
Assume (i) and for every $i\in[B_{out}]_+$ write
\[
    S_i
\eqdef
    \{ f(x)_i : x\in\{0,1\}^{B_{in}} \}.
\]
If there exists some $i\in [B_{out}]_+$ such that $\#S_i\ge 3$, then the map
$\mathcal{R}_i:S_i\to\{0,1\}$ cannot be injective by the pigeonhole principle,
which contradicts the definition of BNR.  This proves (ii).  

\noindent
Next, assume (ii).  Fix $i\in[B_{out}]_+$.  Since $\#S_i\le 2$, there is at most one
element of $S_i$ different from $f(\mathbf{0})_i$.  Hence the predicate
$x_i=f(\mathbf{0})_i$ separates the elements of $S_i$, implying that
$(\mathcal{R}_f)_i\big|_{S_i}$ is injective.  Since $i$ was arbitrary, $\mathcal{R}_f$
is a Boolean neuronal representation for $f$, proving (iii).  

\noindent Finally, the implication (iii)$\Rightarrow$(i) is immediate.
\end{proof}

\begin{example}
The map $f:\{0,1\}^2\ni x \to  (x_1+x_2,x_1)\in  \mathbb{R}^2$ is not BNR since $f((0,0))_1=0$, $f((0,1))_1=1$, and $f((1,1))_1=2$; meaning it is impossible to assign two distinct truth values to these three distinct values.
\end{example}

The following provides a simple characterization of the BNR property which can be used as an efficient criterion for verifying if a neuron is Boolean neuronally representable (property (ii)) as canonical choice of a BNR.
\begin{proposition}[Characterization of BNR]
\label{prop:BNR_Equivalence}
Let $B_{in},B_{out}\in \mathbb{N}_+$ and $f:\{0,1\}^{B_{in}}\to \mathbb{R}^{B_{out}}$.  Then, the following are equivalent:
\begin{enumerate}
    \item[(i)] $f$ is BNR,
    \item[(ii)] For every $i\in [B_{out}]_+$, 
    $
        \#\{ f(x)_i : x\in\{0,1\}^{B_{in}}\}\le 2
    $.
\end{enumerate}
\end{proposition}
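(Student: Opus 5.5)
The statement to prove is Proposition~\ref{prop:BNR_Equivalence}, the two-item version of Proposition~\ref{prop:BNR_Equivalence__EXTENDED}. The plan is to prove the two implications directly from Definition~\ref{defn:BnB}, working one coordinate at a time.

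For each $i\in[B_{out}]_+$, write $S_i\eqdef \mathcal{X}_i=p_i\circ f(\{0,1\}^{B_{in}})=\{f(x)_i: x\in\{0,1\}^{B_{in}}\}$. This is a finite set, since it is the image of the finite set $\{0,1\}^{B_{in}}$.

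\textbf{(i)$\Rightarrow$(ii).} Suppose $f$ is BNR. Then each $\mathcal{R}_i:S_i\to\{0,1\}$ is injective, so $\#S_i\le \#\{0,1\}=2$ by the pigeonhole principle. This is exactly (ii).

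\textbf{(ii)$\Rightarrow$(i).} Suppose $\#S_i\le 2$ for each $i$. We build the injections explicitly, using the canonical choice of Proposition~\ref{prop:BNR_Equivalence__EXTENDED}(iii): set $\mathcal{R}_i(s)\eqdef I(s=f(\mathbf{0})_i)$.
\begin{itemize}
\item The value $f(\mathbf{0})_i$ lies in $S_i$, and by assumption at most one other element of $S_i$ exists.
\item So $\mathcal{R}_i$ sends $f(\mathbf{0})_i$ to $1$ and the possible other element to $0$, and is therefore injective on $S_i$.
\end{itemize}
Now define $\mathcal{R}:\mathbb{R}^{B_{out}}\to\{0,1\}^{B_{out}}$ by $\mathcal{R}(y)\eqdef(I(y_i=f(\mathbf{0})_i))_{i=1}^{B_{out}}$. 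It satisfies $\mathcal{R}(f(x))=(\mathcal{R}_i(f(x)_i))_i$ for every $x$, so $f$ is BNR.

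There is no real obstacle here. The one point needing care is that $S_i$ is nonempty and contains $f(\mathbf{0})_i$, so the indicator rule separates its at most two elements. Alternatively, the proposition follows immediately from the extended version already proved in Appendix~\ref{app:BNR_detail}, since its items (i) and (ii) are exactly this statement.
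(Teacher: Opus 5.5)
Your proof is correct and matches the paper's proof of the extended version: pigeonhole for (i)$\Rightarrow$(ii), and the canonical indicator $\mathcal{R}_i(s)=I\big(s=f(\mathbf{0})_i\big)$ to build the injections for the converse. The only difference is that the paper passes through item (iii) of the extended proposition, while you go directly back to (i).
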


\section{Proofs}
\label{s:Proofs}

Our proof will follow a variant of the surgery approach of~\cite{kratsios2025quantifying}.  Namely, we first exhibit a worst-case Boolean circuit on the gates $\{\operatorname{AND},\operatorname{OR},\operatorname{PROJ}_B,\operatorname{NOT}\}$ which implement any given $S$-Junta.  We then swap the gates, this is the surgery technique of~\cite{kratsios2025quantifying}, for networks which perform the same computation but now with high probability.  Together, this will give universal reasoning.

\begin{proposition}\textbf{\emph{(Smooth Compactly Supported Exhaustion of  Elementary Boolean Gates)}}\label{prop:Enumeration_of_Fanin2_Fanout1__BooleanGates}
    For any $x\in \{0,1\}^2$ we have
    \[
            \sigma(x) 
        =
            \big(
                g_i(x)
            \big)_{i=1}^{16}
    \]
    where for each $i\in [16]_+$ $g_i$ is the elementary Boolean gate in Table~\ref{tab:all_possible_gates}.
    \hfill\\
    Additionally, $\sigma$ is ($C^{\infty}$-)smooth and compactly supported.
    \end{proposition}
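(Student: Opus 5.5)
The plan is to verify the identity pointwise on the four points of $\{0,1\}^2$ by reducing it to the fact that each bump $\varsigma(\cdot-\bar{x}_j)$ in~\eqref{eq:basic_activation} acts as an indicator of $\bar{x}_j$ on the cube. Smoothness and compact support then follow from standard facts about the classical bump function.

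\textbf{Step 1 (bumps are indicators on the cube).} At the center we have $\varsigma(0)=c\,e^{1/(0-1)}=e\cdot e^{-1}=1$; this is exactly why $c=e$ was chosen. For distinct $\bar{x}_k\neq\bar{x}_j$ in $\{0,1\}^2$, the difference $\bar{x}_k-\bar{x}_j$ is a nonzero integer vector. Hence $\|\bar{x}_k-\bar{x}_j\|_2\ge 1>\tfrac12$, and so $\varsigma(\bar{x}_k-\bar{x}_j)=0$. Consequently $\varsigma(\bar{x}_k-\bar{x}_j)=I(k=j)$.

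\textbf{Step 2 (each coordinate reads off the truth table).} Substituting Step 1 into~\eqref{eq:basic_activation} gives
\[
\tilde{\sigma}(\bar{x}_k\mid Z)=\sum_{j=1}^4 I(k=j)\,Z_j=Z_k .
\]
So the function $\tilde{\sigma}(\cdot\mid Z)$, restricted to $\{0,1\}^2$, is precisely the gate whose truth table in the ordering $(0,0),(0,1),(1,0),(1,1)$ is $Z$.

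\textbf{Step 3 (matching the enumeration).} It remains to check that the $i$-th element $Z^{(i)}$ of the enumeration in~\eqref{eq:EveryPossibleGateAllAtOnce} equals the truth-table row of $g_i$ in Table~\ref{tab:all_possible_gates}. The enumeration runs from $(0,0,0,0)$ to $(1,1,1,1)$ in lexicographic (binary counting) order. Reading the four columns of the table as a 4-bit binary number gives $g_1=0000$, $g_2=0001$, $g_3=0010$, $\dots$, $g_9=1000$, $\dots$, $g_{16}=1111$, which is the same order. Therefore $\sigma(x)_i=\tilde{\sigma}(x\mid Z^{(i)})=g_i(x)$ for every $x\in\{0,1\}^2$. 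This also recovers Lemma~\ref{lem:Bool_Gates}'s exhaustiveness.

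\textbf{Step 4 (smoothness and compact support).} Write $h(t)=e^{-1/t}$ for $t>0$ and $h(t)=0$ for $t\le 0$; this is the standard $C^\infty$ function whose derivatives all vanish at $0$. Then $\varsigma(x)=c\,h(1-4\|x\|_2^2)$, since for $\|x\|_2<\tfrac12$ the exponent $1/(4\|x\|_2^2-1)$ equals $-1/(1-4\|x\|_2^2)$. Thus $\varsigma$ is a composition of $h$ with a polynomial, hence $C^\infty$. Each coordinate of $\sigma$ is a finite sum of translates of $\varsigma$, so it is smooth. The support of $\sigma$ lies in the union of the four closed balls of radius $\tfrac12$ centered at the points of $\{0,1\}^2$, which is compact.

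The only delicate point is the $C^\infty$ behavior across the sphere $\|x\|_2=\tfrac12$. Writing $\varsigma$ through $h$ reduces this to the classical fact that $h$ is smooth at $t=0$, which is proved by induction: every derivative of $h$ has the form $p(1/t)\,e^{-1/t}$ for a polynomial $p$, and this tends to $0$ as $t\downarrow0$.
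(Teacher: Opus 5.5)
Your proof is correct and follows essentially the same route as the paper: each translated bump $\varsigma(\cdot-\bar{x}_j)$ is the indicator of $\bar{x}_j$ on the cube, which gives the interpolation identity $\tilde{\sigma}(\bar{x}_k\mid Z)=Z_k$, and smoothness and compact support follow from writing $\varsigma$ as the classical bump composed with a polynomial. Your version is slightly more careful than the paper's in three places: you check $\varsigma(0)=1$ from $c=e$; you write the bump with the correct sign, whereas the paper's $e^{-1/(|x|^2-1)}$ is a sign slip; and you explicitly match $Z^{(i)}$ to the rows of Table~\ref{tab:all_possible_gates}, which the paper's enumeration specifies only at its endpoints and the table header effectively fixes.
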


    First, we confirm that any stochastic Boolean gate as in $\randactiv^W$ is almost surely a valid Boolean gate when sampled.
\begin{proof}[{Proof of Proposition~\ref{prop:Enumeration_of_Fanin2_Fanout1__BooleanGates}}]
    Since $\varsigma$ is a composition of the polynomial function $x\mapsto \|2x\|^2$ with the smooth compactly supported bump function $x\mapsto I_{|x|<1} e^{-\tfrac{1}{|x|^2-1}}$ then it is smooth.  Since $\tilde{\sigma}(\cdot|Z)$ is a linear combination of shifts of these maps so is each $\tilde{\sigma}(\cdot|Z)$ where $Z\in \{0,1\}^4$.  Since $\sigma$ is the direct sum of compactly supported smooth functions it is also smooth and compactly supported.

    By construction, for each $\bar{x}_i\in \{0,1\}^2$ and every $x\in \{0,1\}^2$ we have $\varsigma(x-\bar{x}_i)=1$ if and only if $x=\bar{x}_i$ and $\varsigma(x-\bar{x}_i)=0$ otherwise.  Consequently, for every $x\in \{0,1\}^2$ and each $i\in [4]_+$
    \begin{equation}
    \label{eq:interpolation_property}
            \tilde{\sigma}(x|Z) 
    = 
        \sum_{i=1}^4\, \varsigma(x-\bar{x}_i)Z_i
    =
        \sum_{i=1}^4 \, I_{x=\bar{x}_i}Z_i
    .
    \end{equation}
     By the interpolation identity in~\eqref{eq:interpolation_property} we have that: for each $i\in [4]_+$, $\tilde{\sigma}(x|Z)=Z_i$ if and only if $x=\bar{x}_i$.  
    \end{proof}

    \begin{proof}[{Proof of Proposition~\ref{prop:Valid_gate}}]
For any $W\in \mathbb{R}^{16}$, by definition of $\randactiv^W$, we have that $\randactiv^W \in \{\tilde{\sigma}(\cdot|Z_i)\}_{i\in [16]_+}$.  The result then follows from Proposition~\ref{prop:Enumeration_of_Fanin2_Fanout1__BooleanGates}.
\end{proof}

In the following proof, we use $\mathcal{P}(\mathcal{X})$ to denote the space of Borel probability measures on a metric space $\mathcal{X}$.  We metrize $\mathcal{P}(\mathcal{X})$ with the total variation metric $\operatorname{TV}$, which assigns to any probability measures $\mathbb{P},\mathbb{Q}\in \mathcal{P}(\mathcal{X})$ 
\[
        \operatorname{TV}(\mathbb{P},\mathbb{Q})
    \eqdef
        \sup_{A}\, 
            |\mathbb{P}(A)-\mathbb{Q}(A)|
\]
where the supremum is taken overall Borel subsets $A$ of $\mathcal{X}$.
We write $X\sim Y$ for two random variables, defined on the same probability space, whose laws coincide.  
\begin{proof}[{Proof of Proposition~\ref{prop:random_weight_assignment}}]
Fix $\eta>0$, to be specified retroactively.
By Proposition~\ref{prop:Enumeration_of_Fanin2_Fanout1__BooleanGates}, $\randactiv^W|_{\{0,1\}^2}\sim \mathbb{P}^W\eqdef \sum_{i=1}^{16}\, \operatorname{SM}(W)_i\delta_{g_i}$ where, $W=\eta e_i$. 
In particular, $\mathbb{P}^W\in \mathcal{P}\big(\{g_i\}_{i\in [16]_+}\big)$ equipped with the metric $\operatorname{TV}$ isometric to $(\Delta_{16},\frac{1}{2}\|\cdot\|_1)$ with isometry $\varphi:\Delta_{16}\to \mathcal{P}(\Bary)$ sending any $w\in \Delta_{16}$ to $\mathbb{P}_w\eqdef \sum_{i=1}^{16}\,w_i\,\delta_{g_i}$.  consequently, and by definition of $\mathbb{P}^W$ we have, $\operatorname{TV}(\mathbb{P}^W,\delta_{g_i})= \frac{1}{2}\|\operatorname{SM}(\eta e_i)-e_i\|_1$.
Now, since $\|\operatorname{SM}(\eta e_i)-e_i\|_1 = 
\frac{2(G-1)}{e^{\eta} + (G-1)}$; where $G=16$ then,
\begin{equation}
\label{eq:TV_Bound}
    \operatorname{TV}(\mathbb{P}^{\eta e_i},\delta_{g_i})
=
    \frac{1}{2}\|\operatorname{SM}(\eta e_i) -e_i\|_1
=
    \frac{15}{e^{\eta} + 15}
.
\end{equation}
Therefore, if $\eta \ge \ln\big(
\tfrac{15}{\delta} - 15
\big)$ then~\eqref{eq:TV_Bound} implies that $\operatorname{TV}(\mathbb{P}^{\eta e_i},\delta_{g_i})\le \delta$.  By definition of the total variation distance, we have
\[
        \mathbb{P}\big(
                \randactiv^{\eta e_i}|_{\{0,1\}^2}  = g_i
        \big)
    =
        1-\operatorname{TV}(\mathbb{P}^{\eta e_i},\delta_{g_i})
    =
        1-\frac{1}{2}\|\operatorname{SM}(\eta e_i) -e_i\|_1
    \ge 
        1-\delta
\]
completing our proof.
\end{proof}

\begin{example}
\label{ex:cube_sampling}
Let $B=2$ and $B^{\uparrow}=3$.  Fix any $x\in\{0,1\}^{2}$.
For every $\eta>0$, define $W^{\eta}\in\R^{B^{\uparrow}\times 2B}$ by $
W^\eta \;=\; \eta\,\big(e_1,\,e_3,\,e_2\big)^\top,
$
where $(e_1,e_2,e_3,e_4)$ denotes the standard basis of $\R^{4}$.
Then, as $\eta\to\infty$, the random variables $(\mathbf{X}^{\uparrow:W^{\eta}})_{\eta>0}$ tends to the degenerate random variable taking value $
(x_1,\neg x_1,x_2)$ $\mathbb{P}$-a.s.
\end{example}
\begin{proof}[{Proof of Proposition~\ref{prop:lifting_channels}}]
The first claim is trivial.  For the second, the argument is nearly analogous to the proof of Proposition~\ref{prop:random_weight_assignment}, mutatis mutandis.
\end{proof}

\begin{proof}[{Proof of Proposition~\ref{prop:Good_Properties_Uncrossable}}]
Now observe that Property (i) holds since the range of $\operatorname{SM}$ is $\dot{\Delta}_B$.

Property (ii) holds since $\operatorname{SM}$ is analytic, as well as the Hadamard product, multiplication and addition by constants; as well as the composition and concatenation of ($C^{\infty}$-)smooth functions.

To show (iii), we first prove two properties of $\operatorname{SM}$. First we will show that $\operatorname{SM}$ is monotone. That is, given $W \in \mathbb{R}^B$ with $(W)_i < (W)_j$ for $i, j \in B$ and $\eta > 0$,
\begin{equation}
\label{eq:part-i-lemma-1}
\operatorname{SM}(\eta W)_i < \operatorname{SM}(\eta W)_j.
\end{equation}
This follows from the definition of $\operatorname{SM}$ and the monotonicity of $e$:
\begin{equation}
\operatorname{SM}(\eta W)_i = \frac{e^{\eta (W)_i}}{\sum_{k=1}^{B} e^{\eta (W)_k}} < \frac{e^{\eta (W)_j}}{\sum_{k=1}^{B} e^{\eta (W)_k}} = \operatorname{SM}(\eta W)_j.
\end{equation}

Towards the second property, assume again that $(W)_i < (W)_j$. Then we will show that for all $\varepsilon > 0$, there exists some $\eta > 0$ such that
\begin{equation}
\label{eq:part-i-lemma-2}
\operatorname{SM}(\eta W)_i < \epsilon.
\end{equation}

Now it follows by the definition of $\operatorname{SM}$ that
\begin{equation}
    \operatorname{SM}(\eta W)_i = \frac{e^{\eta (W)_i}}{\sum_{k=1}^{B} e^{\eta (W)_k}}
    < \frac{e^{\eta (W)_i}}{e^{\eta (W)_j}} = e^{-\eta((W)_j - (W)_i)}.
\end{equation}
Setting $e^{-\eta((W)_j - (W)_i)} < \varepsilon$ yields
\begin{equation}
\label{eq:part-i-lemma-2-result}
\eta > \frac{\ln{(1/\varepsilon)}}{(W)_j - (W)_i}.
\end{equation}

Now towards (iii): Notice that the definition of $\operatorname{SM}$ and (\ref{eq:part-i-lemma-1}) necessitate that
\begin{equation}
    \eta(1 - \operatorname{SM}(\eta e_i)_i)\operatorname{SM}(\eta (W_2))_i < \eta(1 - \operatorname{SM}(\eta e_i)_j)\operatorname{SM}(\eta (W_2))_j,
\end{equation}
and so by (\ref{eq:part-i-lemma-2}), given $\varepsilon > 0$, there exists $\eta^{\star:1} > 0$ such that for all $\eta > \eta^{\star:1}$ the claim holds.

We now prove (iv). For every $k\in [B]_+\setminus\{i\}$,
\allowdisplaybreaks
\begin{align}
\label{eq:remove_k}
    \lim\limits_{\eta\to \infty}
    \,
        \operatorname{SM}(\eta e_i)_k 
    = 
    \lim\limits_{\eta\to \infty}
    \,
    \frac{
            1
        }{
            (B-1)e^{\eta 0} + e^{\eta}
        }
    =
    \lim\limits_{\eta\to \infty}
    \,
    \frac{
            1
        }{
            (B-1) + e^{\eta}
        }
    =
        0
.
\end{align}
Since the range of $\operatorname{SM}$ belongs to the $B$-simplex then
\begin{equation}
\label{eq:remove_1}
        \lim\limits_{\eta\to \infty}
    \,
        \operatorname{SM}(\eta e_i)_i
    =
    \lim\limits_{\eta\to \infty}
    \,
        1-\sum_{k>1}^B\,
            \operatorname{SM}(\eta e_i)_k 
    =
    \lim\limits_{\eta\to \infty}
    \,
        1-0
    =
        1
.
\end{equation}
Together~\eqref{eq:remove_1} and~\eqref{eq:remove_k} imply that there exists some $\eta^{\star:2}>0$ such that: for all $\eta\ge \eta^{\star:2}$ we have
\begin{equation}
\label{eq:Conv1}
        \|\operatorname{SM}(\eta e_i) - e_i\|_1 
    < 
        \tfrac{\varepsilon}{2}
.
\end{equation}
Now take $w=e_j$ such that $j \not= i$. Simplifying
$\operatorname{SM}\Big(\eta 
    \big(
            (1-\operatorname{SM}(\eta e_i))
        \odot 
            \operatorname{SM}(\eta e_j)
    \big)
    \Big)_j$ we find that it equals to
\allowdisplaybreaks
\begin{align}
\label{eq:ugly_expression_second_coordinate}
    \frac{
            \exp\Big( 
                \eta \frac{e^\eta (e^\eta + B - 2)}{(e^\eta + (B-1))^2} 
            \Big)
        }{
            \exp\Big( 
                \eta \frac{B-1}{(e^\eta + (B-1))^2} 
            \Big)
        + 
            \exp\Big( 
                \eta \frac{e^\eta (e^\eta + B - 2)}{(e^\eta + (B-1))^2} 
            \Big)
        + (B-2) 
            \exp\Big( \eta \frac{e^\eta + B - 2}{(e^\eta + (B-1))^2} \Big)
        }
.
\end{align}
Now, taking the limit of~\eqref{eq:ugly_expression_second_coordinate} as $\eta$ tends to infinity, we find that
\begin{align}
\label{eq:ugly_expression_second_coordinate2}
&
\lim\limits_{\eta \to \infty}\,
    \operatorname{SM}\Big(\eta 
    \big(
            (1-\operatorname{SM}(\eta e_i))
        \odot 
            \operatorname{SM}(\eta e_j)
    \big)
    \Big)_j
\\
=
&
\lim\limits_{\eta \to \infty}\,
    \frac{
            \exp\Big( 
                \eta \frac{e^\eta (e^\eta + B - 2)}{(e^\eta + (B-1))^2} 
            \Big)
        }{
            \exp\Big( 
                \eta \frac{B-1}{(e^\eta + (B-1))^2} 
            \Big)
        + 
            \exp\Big( 
                \eta \frac{e^\eta (e^\eta + B - 2)}{(e^\eta + (B-1))^2} 
            \Big)
        + (B-2) 
            \exp\Big( \eta \frac{e^\eta + B - 2}{(e^\eta + (B-1))^2} \Big)
        }
= 1
.
\end{align}
Since $\operatorname{SM}$ is continuous and maps into $\dot{\Delta}_B$ which is a closed set; then its limit must belong to the $B$-simplex also.  However, if 
$
\lim\limits_{\eta \to \infty}\,
    \operatorname{SM}\Big(\eta 
    \big(
            (1-\operatorname{SM}(\eta e_i))
        \odot 
            \operatorname{SM}(\eta e_j)
    \big)
    \Big)_j$ and each coordinate of the limit is non-negative, we have that
\[
    \operatorname{SM}\Big(\eta 
    \big(
            (1-\operatorname{SM}(\eta e_i))
        \odot 
            \operatorname{SM}(\eta e_j)
    \big)
    \Big)_i
\]
for every $i\in [B]_+\setminus\{j\}$.  Therefore, for all $\varepsilon > 0$, there exists some $\eta^{\star:3}>0$ such that, if $\eta>\eta^{\star}\eqdef \max\{\eta^{\star:1},\eta^{\star:2},\eta^{\star:3}\}$ we have
\begin{equation}
\label{eq:second_coordinate_handled__eq:Conv2}
        \big\|
            \operatorname{SM}\Big(\eta 
            \big(
                    (1-\operatorname{SM}(\eta e_i))
                \odot 
                    \operatorname{SM}(\eta e_j)
            \big)
            \Big)
        -
            e_j
        \big\|_1
<
    \tfrac{\varepsilon}{2}
.
\end{equation}
Combining~\eqref{eq:Conv1} and~\eqref{eq:second_coordinate_handled__eq:Conv2} yields (ii) and completes our proof.

Let $(\mathbf{E}_{i:1},\mathbf{E}_{i:2})\sim S_{\eta}(e_i,e_j)$.
Define the disagreement set $A^{\star}$ as:
\[A^{\star}\eqdef \{\omega \in \Omega:\, 
(\mathbf{E}_{i:1}(\omega),\mathbf{E}_{i:2}(\omega))
\neq (e_i,e_j)
\}.\] 
Note that $A^{\star}$ is measurable since $\mathbf{E}_{i:j}$ are for $j\in \{1,2\}$ and since $\mathbb{R}^2\ni (x,y)\mapsto I_{x=y}\in \{0,1\}$ is Borel. 
Now, for the last statement, we have by the dual representation of the TV-distance and by (ii)
\allowdisplaybreaks
\begin{align}
\label{eq:little_fun_bound}
        \big\|
            \operatorname{Law}\big((\mathbf{E}_{i:1},\mathbf{E}_{i:2})\big)(A^{\star})
            -
            \delta_{(e_i,e_j)}(A^{\star})
        \big\|
    & \le 
        \sup_{A}\,
            \big\|
                \operatorname{Law}\big((\mathbf{E}_{i:1},\mathbf{E}_{i:2})\big)(A)
                -
                \delta_{(e_i,e_j)}(A)
            \big\|
    \\
\nonumber
    & =
        \big\|
            \operatorname{Law}\big((\mathbf{E}_{i:1},\mathbf{E}_{i:2})\big)
            -
            \delta_{(e_i,e_j)}
        \big\|_{TV}
    \\
\nonumber
    & =
        \big\|
            (\mathbf{E}_{i:1},\mathbf{E}_{i:2})
            -
            (e_i,e_j)
        \big\|_{1}
    \le 
        \varepsilon
\end{align}
where the supremum is taken over all measurable sets $A$.  Consequently, taking complements across~\eqref{eq:little_fun_bound} yields the final claim.
\end{proof}

\begin{proof}[{Proof of Theorem~\ref{thrm:StructureAware}}]
We recall that the recursion in \eqref{eq:Model_Def} is indexed by
\(
l = 1,\dots,\Delta-1,
\)
so that for each \(l\in\{1,\dots,\Delta-1\}\) the layer representations satisfy
\(
\boldsymbol{x}^{(l)} \in \{0,1\}^{B_l}
\)
and
\(
\boldsymbol{x}^{(l+1)} \in \{0,1\}^{B_{l+1}}.
\)
Accordingly, for each \(l \in \{1,\dots,\Delta-1\}\) and \(j \in \{1,2\}\), the random adjacency matrix
\(
\randEdge^{(l:j:e)}
\)
is a \(B_{l+1} \times B_l\) matrix whose rows are one-hot almost surely; equivalently,
\[
\randEdge^{(l:j:e)}_{i:} \in \{e_1,\dots,e_{B_l}\}
\quad \forall i \in [B_{l+1}]_+, \ \mathbb{P}\text{-a.s.}
\]
In contrast, the transition from the raw input layer \(l=0\) to the lifted layer \(l=1\)
is produced by the stochastic bit-lifting channel \(\mathbf{X}^{\uparrow:W}\) (rather than by the matrices
\(\randEdge^{(l:j:e)}\)).

Fix $\Delta\in\mathbb{N}_+$, $\Upsilon\in\mathbb{N}_+$, and widths
$B_{in}=B_0,B_1,\dots,B_\Delta=B_{out}=1$ with $B_l\le \Upsilon$ for $l \in [\Delta]$.
Let $\randcirc$ be defined by \eqref{eq:Model_Def}.
We show that, $\mathbb{P}$-a.s.\ in $\omega$, there exists
$\mathcal{C}_\omega \in \operatorname{CIRCUIT}_{\Delta,\Upsilon}^{B_{in}}$
such that $\randcirc(\omega,x)=f_{\mathcal{C}_\omega}(x)$ for all
$x\in\{0,1\}^{B_{in}}$.

\paragraph{Step 1: A full-probability event on which all modules are Boolean.}

All random objects in the model are defined on a fixed probability space
$(\Omega,\mathcal{F},\mathbb{P})$.
In particular:
the lifted-input map takes values in the finite set $\{0,1\}^{B_1}$. Also,
for each \(l \in \{1,\dots,\Delta-1\}\) and \(j \in \{1,2\}\), each row of every adjacency matrix \(
\randEdge^{(l:j:e)}
\) takes values in the finite set
$\{e_1,\dots,e_{B_l}\}$,
and each gate restriction belongs to the finite set $\Bary$.
Hence all random variables involved in the definitions below are finite-valued,
and any event defined by constraints on their values is $\mathcal{F}$-measurable.

\smallskip

Let $\Omega_{\mathrm{lift}}\subseteq\Omega$ be the event from
Proposition~\ref{prop:lifting_channels} such that for every
$\omega\in\Omega_{\mathrm{lift}}$ and every input $x\in\{0,1\}^{B_{in}}$,
\[
\mathbf{X}^{\uparrow:W}(\omega)(x)\in\{0,1\}^{B_1}.
\]
Since the lifted-input map is finite-valued, $\Omega_{\mathrm{lift}}\in\mathcal{F}$,
and Proposition~\ref{prop:lifting_channels} implies that
$\mathbb{P}(\Omega_{\mathrm{lift}})=1$.

\smallskip

Let $\Omega_{\mathrm{gate}}\subseteq\Omega$ be the event from
Proposition~\ref{prop:Valid_gate} such that for every
$\omega\in\Omega_{\mathrm{gate}}$, every $l\in\{1,\dots,\Delta-1\}$,
and every node index $j\in\{1,\dots,B_{l+1}\}$,
the restriction of the sampled activation map
\[
\randactiv^{W^{(l:\sigma)}_j}(\omega)\big|_{\{0,1\}^2}
\]
belongs to $\Bary$, i.e., defines a meaningful Boolean gate on $\{0,1\}^2$.
Since this restriction takes values in the finite set $\Bary$,
the event $\Omega_{\mathrm{gate}}$ is $\mathcal{F}$-measurable.
By Proposition~\ref{prop:Valid_gate}, we have
$\mathbb{P}(\Omega_{\mathrm{gate}})=1$.

\smallskip

Finally, for each $l\in\{1,\dots,\Delta-1\}$, the random adjacency matrices
$\randEdge^{(l:1:e)}$ and $\randEdge^{(l:2:e)}$ select exactly one predecessor
in layer $l$ for every node in layer $l+1$, i.e., each row is one-hot.
Let $\Omega_{\mathrm{edge}}\subseteq\Omega$ be the event on which, for every
$l\in\{1,\dots,\Delta-1\}$, $j\in\{1,2\}$, and every $i\in[B_{l+1}]_+$,
\[
\randEdge^{(l:j:e)}_{i:}(\omega)\in\{e_1,\dots,e_{B_l}\}.
\]
Since each adjacency row takes values in a finite set,
$\Omega_{\mathrm{edge}}\in\mathcal{F}$.
By the assumptions on the adjacency matrices,
$\mathbb{P}(\Omega_{\mathrm{edge}})=1$.

\smallskip

Define
\[
\Omega_{\mathrm{all}}
\eqdef
\Omega_{\mathrm{lift}}
\cap
\Omega_{\mathrm{gate}}
\cap
\Omega_{\mathrm{edge}}.
\]
Then $\Omega_{\mathrm{all}}\in\mathcal{F}$ and
$\mathbb{P}(\Omega_{\mathrm{all}})=1$.
Fix any $\omega\in\Omega_{\mathrm{all}}$ for the remainder of the proof.
We construct a circuit
$\mathcal{C}_\omega\in\operatorname{CIRCUIT}_{\Delta,\Upsilon}^{B_{in}}$
computing $x\mapsto \randcirc(\omega,x)$.

\paragraph{Step 2: Construct the realized circuit $\mathcal{C}_\omega$.}

We build a layered DAG with layers
\begin{equation}\label{eqn:DAG_nodes}
    V_0=\{v_{0,1},\dots,v_{0,B_0}\},\ 
V_1=\{v_{1,1},\dots,v_{1,B_1}\},\ \dots,\
V_{\Delta}=\{v_{\Delta,1}\},
\end{equation}

so that layer $V_0$ corresponds to the input $x \in \{0,1\}^{B_0}$, and for $l \in \{1,\dots,\Delta\}$ layer $V_{l}$ corresponds to the model state $\boldsymbol{x}^{(l)}\in\{0,1\}^{B_l}$.

Let $\operatorname{val}_{\mathcal{C}_\omega}(v_{l,i};x)$, for $l\in[\Delta]$ and $i\in [B_l]_+$, denote the output value of the $i$-th node in layer $l$ of the layered DAG, evaluated at the input $x$, in the induced Boolean function $f_{\mathcal{C}_\omega}$ (corresponding to the node $v_{l,i}$ as in~\eqref{eqn:DAG_nodes}).
 In particular, $\operatorname{val}_{\mathcal{C}_\omega}(v_{\Delta,1};x)$ corresponds to the output value of the induced Boolean function $f_{\mathcal{C_{\omega}}}(x)=f_\Delta(x)$ as in~\eqref{eq:recursive_defenition}. We next define these values starting from the input layer.

\smallskip
\noindent\textbf{Input values.}
For \(x\in\{0,1\}^{B_{in}}\), define the value at the raw input nodes by
\begin{equation}\label{eqn:raw_input_layer_value}
\operatorname{val}_{\mathcal{C}_\omega}(v_{0,i};x)\eqdef x_i
\qquad \forall i\in\{1,\dots,B_0\}.
\end{equation}

\smallskip
\noindent\textbf{Lifted-layer values.}
For \(x\in\{0,1\}^{B_{in}}\), define the value at \(v_{1,j}\) to be
\begin{equation}\label{eqn:input_layer_value}
\operatorname{val}_{\mathcal{C}_\omega}(v_{1,j};x)\eqdef \boldsymbol{x}^{(1)}_j(\omega,x)
= \mathbf{X}^{\uparrow:W}(\omega)(x)_j\in\{0,1\} \qquad \forall j \in \{1,\cdots,B_1\}, 
\end{equation}
using $\omega\in\Omega_{\mathrm{lift}}$.

\smallskip
\noindent\textbf{Lifting edges (layer \(0\to 1\)).}
Fix \(\omega\in\Omega_{\mathrm{lift}}\).
By definition of the lifting channel \(\mathbf{X}^{\uparrow:W}\),
there exists a (random) binary matrix
\(
W(\omega)\in\{0,1\}^{B_1\times 2B_0}
\)
such that, for every input \(x\in\{0,1\}^{B_0}\),
the lifted representation \(\boldsymbol{x}^{(1)}(\omega,x)\) is obtained
by selecting coordinates of the extended input vector $\begin{pmatrix}x\\ \mathbf{1}_{B_0}-x\end{pmatrix}\in\{0,1\}^{2B_0}$.
In particular, for each \(j\in\{1,\dots,B_1\}\),
the \(j\)-th row of \(W(\omega)\) selects a unique index
\(q(j;\omega)\in\{1,\dots,2B_0\}\).

Accordingly, define for each \(j\in\{1,\dots,B_1\}\) the associated raw index
\(i(j;\omega)\in\{1,\dots,B_0\}\) 
by
\[
i(j;\omega)\eqdef
\begin{cases}
q(j;\omega), & q(j;\omega)\le B_0,\\
q(j;\omega)-B_0, & q(j;\omega)>B_0.
\end{cases}
\]
We then record the underlying raw coordinate by adding the directed edge
\[
\bigl(v_{0,i(j;\omega)},\, v_{1,j}\bigr).
\]
Here the edge identifies the raw index \(i(j;\omega)\). 

Note that the values at layer \(V_1\) are still defined by \eqref{eqn:input_layer_value};
the edges from \(V_0\) to \(V_1\) encode the correspondence between lifted bits and
raw input coordinates induced by \(W(\omega)\).

\smallskip
\noindent\textbf{Edges.}
For each $l\in\{1,\dots,\Delta-1\}$ and each node index $j\in\{1,\dots,B_{l+1}\}$, since
$\omega\in\Omega_{\mathrm{edge}}$, the $j$-th row of $\randEdge^{(l:1:e)}(\omega)$ is one-hot,
so it selects a unique index $p_1(l,j;\omega)\in\{1,\dots,B_l\}$; similarly
the $j$-th row of $\randEdge^{(l:2:e)}(\omega)$ selects a unique
$p_2(l,j;\omega)\in\{1,\dots,B_l\}$, 
Add directed edges
\[
\bigl(v_{l,p_1(l,j;\omega)},\, v_{l+1,j}\bigr),\qquad
\bigl(v_{l,p_2(l,j;\omega)},\, v_{l+1,j}\bigr).
\]
Thus, every node in layers \(V_l\) for \(l\ge 2\) has fan-in exactly \(2\), edges occur only between consecutive layers, and the edges from \(V_0\) to \(V_1\) encode the correspondence induced by the lifting map.

\smallskip
\noindent\textbf{Gates.}
For each $l\in\{1,\dots,\Delta-1\}$ and each $j\in\{1,\dots,B_{l+1}\}$, assign to node $v_{l+1,j}$
the Boolean gate
\[
g_{l+1,j}\eqdef \randactiv^{W^{(l:\sigma)}_j}(\omega)\big|_{\{0,1\}^2}\in\Bary,
\]
using $\omega\in\Omega_{\mathrm{gate}}$.

\smallskip
\noindent\textbf{Definition of the realized circuit.}
The realized Boolean circuit associated with $\omega\in\Omega_{\mathrm{all}}$ is the triple
$
\mathcal{C}_\omega \eqdef (V,E,g_\cdot),
$
where:
\begin{itemize}
\item The vertex set is
$V \eqdef \bigsqcup_{l=0}^{\Delta} V_l$ with $V_l$ defined as in \eqref{eqn:DAG_nodes}.

\item The edge set is
\[
E \eqdef
\Big(
\bigcup_{j=1}^{B_1} \{(v_{0,i(j;\omega)},\,v_{1,j})\}
\Big)
\ \cup\
\Big(
\bigcup_{l=1}^{\Delta-1}\;
\bigcup_{j=1}^{B_{l+1}}
\bigl\{
(v_{l,p_1(l,j;\omega)},\,v_{l+1,j}),
(v_{l,p_2(l,j;\omega)},\,v_{l+1,j})
\bigr\}
\Big).
\]

\item The gate assignment is
$
g_\cdot \eqdef
\{\,g_{l+1,j}:\{0,1\}^2\to\{0,1\}\,\}_{l=1,\dots,\Delta-1;\,j=1,\dots,B_{l+1}}.
$
\end{itemize}
This defines a Boolean circuit $\mathcal{C}_\omega=(V,E,g_\cdot)$ of depth $\Delta$ and width at most
$\max_{0\le l\le \Delta} B_l\le \Upsilon$.

\paragraph{Step 3: Prove that $\mathcal{C}_\omega$ computes $\randcirc(\omega,\cdot)$ (induction on depth).}

We prove that for every $l\in\{1,\dots,\Delta\}$, every input
$x\in\{0,1\}^{B_{in}}$, and every node index $j\in\{1,\dots,B_l\}$,
\begin{equation}\label{eq:induction_invariant}
\operatorname{val}_{\mathcal{C}_\omega}(v_{l,j};x)
=
\boldsymbol{x}^{(l)}_j(\omega,x).
\end{equation}

\medskip
\noindent\textbf{Base case ($l=1$).}
By definition of the input layer and the lifted-input channel,
\[
\operatorname{val}_{\mathcal{C}_\omega}(v_{1,j};x)\eqdef \boldsymbol{x}^{(1)}_j(\omega,x)
= \mathbf{X}^{\uparrow:W}(\omega)(x)_j\in\{0,1\} \qquad \forall j \in \{1,\cdots,B_1\}
\]
Hence, \eqref{eq:induction_invariant} holds for $l=1$ (see \eqref{eqn:input_layer_value}).

\medskip
\noindent\textbf{Induction hypothesis.}
Assume that for some $l\in\{1,\dots,\Delta-1\}$,
\eqref{eq:induction_invariant} holds for all nodes in layer $l$, i.e.
\[
\operatorname{val}_{\mathcal{C}_\omega}(v_{l,j};x)
=
\boldsymbol{x}^{(l)}_j(\omega,x)
\quad
\forall j\in\{1,\dots,B_l\},\ \forall x\in\{0,1\}^{B_{in}}.
\]

\medskip
\noindent\textbf{Inductive step.}
Fix an arbitrary node index $j\in\{1,\dots,B_{l+1}\}$ and input
$x\in\{0,1\}^{B_{in}}$.
By construction of $\mathcal{C}_\omega$,
\[
\operatorname{val}_{\mathcal{C}_\omega}(v_{l+1,j};x)
=
g_{l+1,j}\Big(
\operatorname{val}_{\mathcal{C}_\omega}(v_{l,p_1(l,j;\omega)};x),
\operatorname{val}_{\mathcal{C}_\omega}(v_{l,p_2(l,j;\omega)};x)
\Big).
\]
By the induction hypothesis,
\[
\operatorname{val}_{\mathcal{C}_\omega}(v_{l,p_i(l,j;\omega)};x)
=
\boldsymbol{x}^{(l)}_{p_i(l,j;\omega)}(\omega,x),
\qquad i=1,2.
\]
Since
$
g_{l+1,j}
=
\randactiv^{W^{(l:\sigma)}_j}(\omega)\big|_{\{0,1\}^2}
$
and the one-hot rows of
$\randEdge^{(l:1:e)}(\omega)$ and $\randEdge^{(l:2:e)}(\omega)$
select exactly the predecessors $p_1(l,j;\omega)$ and $p_2(l,j;\omega)$,
the recursion \eqref{eq:Model_Def} implies
\[
\boldsymbol{x}^{(l+1)}_j(\omega,x)
=
\randactiv^{W^{(l:\sigma)}_j}(\omega)\Big(
\boldsymbol{x}^{(l)}_{p_1(l,j;\omega)}(\omega,x),
\boldsymbol{x}^{(l)}_{p_2(l,j;\omega)}(\omega,x)
\Big).
\]
Thus
\[
\operatorname{val}_{\mathcal{C}_\omega}(v_{l+1,j};x)
=
\boldsymbol{x}^{(l+1)}_j(\omega,x).
\]

Since $j$ was arbitrary, \eqref{eq:induction_invariant} holds for all nodes
in layer $l+1$.
This completes the induction.

\paragraph{Step 4: Final conclusion.}
Taking $l=\Delta$ in \eqref{eq:induction_invariant} and recalling $B_\Delta=1$, we have for every input
$x\in\{0,1\}^{B_{in}}$,
\[
\randcirc(\omega,x)=\boldsymbol{x}^{(\Delta)}(\omega,x)
=
\operatorname{val}_{\mathcal{C}_\omega}(v_{\Delta,1};x),
\]
so $\randcirc(\omega,\cdot)$ is computed by the Boolean circuit $\mathcal{C}_\omega$ of depth $\Delta$  and width $\le \Upsilon$.
Therefore $\randcirc(\omega,\cdot)\in \operatorname{CIRCUIT}_{\Delta,\Upsilon}^{B_{in}}$ for every $\omega\in\Omega_{\mathrm{all}}$.
Since $\mathbb{P}(\Omega_{\mathrm{all}})=1$, it follows that
\[
\mathbb{P}\big(\randcirc\in \operatorname{CIRCUIT}_{\Delta,\Upsilon}^{B_{in}}\big)=1.
\]
This completes the proof.
\end{proof}

\subsection{Tree Construction}
\label{s:Proof__ss:TreeConstruction}
We begin with the worst-case representation of a Boolean function as a tree whose leaves are literals in the Boolean cube, their negations, and possibly repeated copies of each. Moreover, the circuit is implemented as a full binary tree, so it has the exact desired multipartite structure: every internal node is a fan-in-$2$, fan-out-$1$ gate of type $\operatorname{AND}$, $\operatorname{OR}$, or $\operatorname{PROJ}_A$. This avoids the complications that arise with circuit representations whose computational graph differs from that of our neural networks, or which mix unary and binary gates at internal nodes and therefore yield an incompatible computational structure.

\begin{lemma}[Complete Binary Tree Representation of Boolean Functions]
\label{lem:complete_binary_tree_representation}

Let $B\in \mathbb{N}_+$ and let $f:\{0,1\}^B\to\{0,1\}$ be a Boolean function.
There exist integers $B^{\uparrow},\Delta\in \mathbb{N}_+$ and a Boolean circuit $\mathcal{C}$ computing $f$ such that:
\begin{enumerate}[leftmargin=2em]
    \item[(i)] the underlying DAG of $\mathcal{C}$ is a rooted complete binary tree of depth $\Delta$,
    \item[(ii)] the $B^{\uparrow}$ leaves are labelled by literals in $\{x_1,\dots,x_B,\neg x_1,\dots,\neg x_B\}$, with repetitions allowed; i.e.\ for every $\tilde{b}\in [B^{\uparrow}]_+$ we have
    \[
        x_{\tilde{b}} 
        \in 
        \{x_b\}_{b=1}^B
        \cup
        \{\neg x_b\}_{b=1}^B
    ,
    \]
    \item[(iii)] every gate in $\mathcal{C}$ belongs to $\{\operatorname{OR},\operatorname{AND},\operatorname{PROJ}_A\}=\{g_8,g_2,g_4\}$.
\end{enumerate}
Moreover, 
the number of gates is exactly $B^{\uparrow}-1\le 2 B 2^B-1$
 and 
\begin{equation}
\label{eq:boundsboundybounds}
    B^{\uparrow}  \le  2 B 2^B,
    \qquad
    \Delta  \le  \big\lceil\log_2(B 2^B)\big\rceil  \le  B+\big\lceil\log_2 B\big\rceil
.
\end{equation}
\end{lemma}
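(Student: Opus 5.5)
We will build the tree from the disjunctive normal form of $f$ and then pad it to a complete binary tree using $\operatorname{PROJ}_A$ gates. Let $T\eqdef f^{-1}(1)\subseteq\{0,1\}^B$, so $\#T\le 2^B$. For each $a\in T$ form the minterm $m_a(x)\eqdef \bigwedge_{b=1}^B \ell_{a,b}$, where $\ell_{a,b}=x_b$ if $a_b=1$ and $\ell_{a,b}=\neg x_b$ otherwise. Then $f=\bigvee_{a\in T} m_a$. If $T=\emptyset$, we instead use the single term $x_1\wedge\neg x_1$, which is identically $0$. Either way we have at most $\max\{\#T,1\}\le 2^B$ terms, each of width exactly $B$ (with $B=1$ for the degenerate term). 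The literals appearing at the leaves are thus in $\{x_b,\neg x_b\}_{b=1}^B$, which gives (ii), and the total literal count is at most $B2^B$.

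The padding step is the only point needing care, since a complete binary tree requires every root-to-leaf path to have the same length. Set $k\eqdef\lceil\log_2 B\rceil$. Each minterm is realised as a complete binary tree of depth $k$ with $2^k$ leaves. The first $B$ leaves carry the literals $\ell_{a,1},\dots,\ell_{a,B}$. Each of the remaining $2^k-B$ leaves carries a duplicate of some literal already used, and duplicates are paired with originals in sibling slots so that the parent gate is $\operatorname{AND}$ or $\operatorname{PROJ}_A$. By idempotence $\ell\wedge\ell=\ell$, and $\operatorname{PROJ}_A(\ell,\ell')=\ell$, so the tree still computes $m_a$. The simplest choice is to put $\operatorname{AND}$ at every internal node and rely on idempotence, which keeps the duplicates harmless.

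Similarly, let $r\eqdef\lceil\log_2 \max\{\#T,1\}\rceil$. Place the minterm subtrees at $\max\{\#T,1\}$ of the $2^r$ leaf positions of a complete tree of depth $r$ whose internal gates are $\operatorname{OR}$. Fill the remaining positions with copies of an existing minterm subtree; this is harmless by idempotence of $\operatorname{OR}$. Gluing the two levels gives a complete binary tree of depth $\Delta=r+k$. Only $\operatorname{AND}$, $\operatorname{OR}$ (and optionally $\operatorname{PROJ}_A$) gates are used, which gives (i) and (iii). The resulting leaf layer is exactly the bit-lifting channel $\Pi$, and an induction over layers against the recursion~\eqref{eq:recursive_defenition} shows the circuit computes $f$.

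It remains to count. In a complete binary tree with $B^{\uparrow}$ leaves there are exactly $B^{\uparrow}-1$ internal nodes, hence $B^{\uparrow}-1$ gates. We have $B^{\uparrow}=2^{r+k}$ with $2^r<2\cdot 2^B$ and $2^k<2B$, so the crude bound is $4B2^B$. To obtain the stated $2B2^B$, note that $2^{r+k}\le 2^{\lceil\log_2(B2^B)\rceil}<2B2^B$ provided $\lceil a\rceil+\lceil b\rceil$ is replaced by $\lceil a+b\rceil$. This is the main obstacle. I would handle it by not splitting into two separately padded levels. Instead, lay out all $\max\{\#T,1\}\cdot B\le B2^B$ literals, with the literals of each minterm contiguous, as leaves of a single complete tree of depth $\Delta\eqdef\lceil\log_2(B2^B)\rceil$, padding with duplicates. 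The internal nodes are then labelled $\operatorname{AND}$ when all leaves below them belong to one minterm, and $\operatorname{OR}$ otherwise.

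This labelling is not automatically correct, because a node mixing two minterms may sit below a node that should $\operatorname{AND}$ within one of them. To avoid this, I will align minterms to blocks. Pad each minterm to $2^k$ slots and place the minterms in aligned blocks, so that each block is exactly one subtree. This returns to the $r+k$ count, giving $B^{\uparrow}\le 2^{B+k}\le 2^B\cdot 2^{\lceil\log_2B\rceil}$. I would then verify $\Delta\le B+\lceil\log_2B\rceil$ directly, which is the second stated inequality, and accept that the intermediate $\lceil\log_2(B2^B)\rceil$ bound holds since $\lceil\log_2(B2^B)\rceil=B+\lceil\log_2 B\rceil$ exactly. Hence $B^{\uparrow}=2^\Delta\le 2^B\cdot 2B$, which is the claimed bound.
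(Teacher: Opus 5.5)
Your main line matches the paper's argument: canonical DNF, a balanced $\operatorname{AND}$-tree of depth $k=\lceil\log_2 B\rceil$ for each minterm under a balanced $\operatorname{OR}$-tree of depth $r$, completion to a complete binary tree, and the count that internal nodes equal leaves minus one. The one variation is legitimate. You complete the tree by duplicating literals and minterm subtrees (using idempotence) and take $\Delta=r+k$, whereas the paper inserts $\operatorname{PROJ}_A$ gates with dummy leaves up to the fixed depth $\Delta=B+\lceil\log_2 B\rceil$.

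The counting ``obstacle'' in your last two paragraphs does not exist, so you can drop the single-tree detour. Since $\#T\le 2^B$ and $B$ is an integer, $r\le B$; your estimate $2^r<2\cdot 2^B$ was simply too loose. Hence $B^{\uparrow}=2^{r+k}\le 2^B 2^{\lceil\log_2 B\rceil}<2B2^B$ follows at once. Also $\lceil\log_2(B2^B)\rceil=B+\lceil\log_2 B\rceil$ holds exactly because $B\in\mathbb{N}_+$.

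There is one genuine gap, at $B=1$. There $k=0$, and whenever $\#T\le 1$ also $r=0$, so your construction has $\Delta=0$, which contradicts $\Delta\in\mathbb{N}_+$. For $f\equiv 0$ it is worse. The degenerate term $x_1\wedge\neg x_1$ has two literals, not width $B=1$, so it cannot be placed in the single slot $2^k=1$. As written, the circuit then does not compute $f$.

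Treat these cases by hand: use $\operatorname{AND}(x_1,\neg x_1)$ for $f\equiv 0$ and $\operatorname{AND}(\ell,\ell)$ for $f=\ell\in\{x_1,\neg x_1\}$. Both have $\Delta=1=\lceil\log_2(1\cdot 2^1)\rceil$, $B^{\uparrow}=2\le 4$, and one gate, so every stated bound survives. The paper avoids the issue because its fixed depth $B+\lceil\log_2 B\rceil$ is always at least $1$. It also treats $f\equiv 0$ via $x_1\wedge\neg x_1$ before padding.
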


We now sharpen the previous result to functions which are \textit{possibly} ``sparse'' in a Boolean sense.
\begin{definition}[$S$-Junta]
\label{defn:sJunta}
A Boolean function $f:\{0,1\}^B\to\{0,1\}$ is said to \emph{depend on the $i$th variable} if there exist inputs 
$x,y\in\{0,1\}^B$ which differ only in their $i$th coordinate and satisfy $f(x)\neq f(y)$. 
Equivalently, we say that the $i$th variable is \emph{relevant} to $f$.  
If $f$ has at most $s$ relevant variables, then we call $f$ an \emph{$s$-junta}.  
\end{definition}
Clearly a $B$-Junta on $B$ bits is just an arbitrary Boolean function on $\{0,1\}^B$.  However, at the other extreme of the spectrum we have dictator functions; defined as follows.
\begin{example}[Dictator Functions]
Let $B\in \mathbb{N}_+$.  A Boolean function $f:\{0,1\}^B\to \{0,1\}$ is called a dictator if there exists some $b\in [B]_+$ such that: either $f(x)=x_b$ or $f(x)=\neg x_b$ for every $x\in \{0,1\}^B$.  Dictator functions are precisely $1$-juntas.
\end{example}

The following equivalent formulation of a junta will be more handy for us.
\begin{lemma}[$S$-Juntas as Embedded Permutations]
\label{lem:sJunta_as_embeddedpermutation}
Let $B,S\in\mathbb{N}_+$ with $S\le B$.  
A Boolean function $f:\{0,1\}^B\to\{0,1\}$ is an $S$-junta iff there exist distinct 
$(i_1,\dots,i_S)\in[B]_+$ and some $f_S:\{0,1\}^S\to\{0,1\}$ such that
\[
    f(x) = f_S\bigl(x_{i_1},\dots,x_{i_S}\bigr)
    \quad\text{for all }x\in\{0,1\}^B.
\]
Equivalently, letting $\Pi\in\{0,1\}^{S\times B}$ be the selection matrix with
$ 
(\Pi x)_j = x_{i_j}
$ for each $j\in [S]_+$ and for every $x\in \{0,1\}^B$ we have
\begin{equation}
    f(x) = f_S(\Pi x)
.
\end{equation}
\end{lemma}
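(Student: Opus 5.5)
We prove the two implications separately; the "equivalently" clause is then a reformulation.

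\emph{($\Leftarrow$).} Suppose $f(x)=f_S(x_{i_1},\dots,x_{i_S})$ for all $x$. Take any coordinate $i\notin\{i_1,\dots,i_S\}$ and any $x,y$ differing only in coordinate $i$. Then $x_{i_j}=y_{i_j}$ for every $j$, so $f(x)=f(y)$. Hence $i$ is not relevant. The relevant variables therefore lie in $\{i_1,\dots,i_S\}$, so there are at most $S$ of them and $f$ is an $S$-junta.

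\emph{($\Rightarrow$).} Let $R\subseteq[B]_+$ be the set of relevant variables, so $\#R\le S$. The key claim is that $f(x)=f(y)$ whenever $x$ and $y$ agree on $R$. To see this, walk from $x$ to $y$ by flipping the coordinates outside $R$ where they differ, one at a time. Each step changes a single irrelevant coordinate, and by the definition of irrelevance (the negation of Definition~\ref{defn:sJunta}) it leaves the value of $f$ unchanged. This Hamming-path argument is the only nontrivial step, and it is a finite induction on the number of differing coordinates.

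Next, pad $R$ to a set of exactly $S$ distinct indices $i_1<\dots<i_S$. This is possible because $S\le B$; extra indices are allowed since they need not be relevant. Define $f_S(z)\eqdef f(x^z)$, where $x^z$ has $x^z_{i_j}=z_j$ for each $j$ and $x^z_i=0$ elsewhere. For any $x$, the points $x$ and $x^{(x_{i_1},\dots,x_{i_S})}$ agree on $\{i_1,\dots,i_S\}\supseteq R$. The claim then gives $f(x)=f_S(x_{i_1},\dots,x_{i_S})$.

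Finally, the matrix form follows by setting $\Pi_{j,i_j}=1$ and all other entries $0$. Each row is then one-hot with a single $1$ in column $i_j$, so $(\Pi x)_j=x_{i_j}$ and $f(x)=f_S(\Pi x)$.
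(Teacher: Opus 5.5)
Your proof is correct and follows essentially the same route as the paper's. Both pad the relevant set to $S$ distinct indices, define $f_S$ by evaluating $f$ at a point with those coordinates prescribed, and prove the converse by noting that coordinates outside $\{i_1,\dots,i_S\}$ cannot be relevant. The one difference is that your Hamming-path induction actually proves that $f$ is constant on points agreeing on the relevant set, whereas the paper only asserts this well-definedness.
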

A direct consequence of Lemmata~\ref{lem:complete_binary_tree_representation} and~\ref{lem:sJunta_as_embeddedpermutation} imply the following sharper result.
\begin{lemma}\textbf{\emph{(Construction of Low-Dimensional Complete Boolean Circuit on Lifted Literals)}}
\label{lem:JuntaTreeification}
Let $S,B\in \mathbb{N}_+$ with $S\le B$, and let $f:\{0,1\}^B\to \{0,1\}$ be an $S$-Junta. 
Then, there exists a $B^{\uparrow}\in \mathbb{N}_+$, a stochastic matrix $\Pi\in \{0,1\}^{B^{\uparrow} \times B}$, a Boolean function $f_S:\{0,1\}^{B^{\uparrow}}\to \{0,1\}$, integers $B^{\uparrow},\Delta\in \mathbb{N}_+$, and a Boolean circuit $\mathcal{C}$ such that
\begin{equation}
\label{eq:dim_reduced_compute}
    f = f_S\circ (\Pi\,\cdot)
\end{equation}
for every $x\in \{0,1\}^B$ and the following hold: 
\begin{enumerate}[leftmargin=2em]
    \item[(i)] $\mathcal{C}$ computes $f_S$,
    \item[(ii)] the underlying DAG of $\mathcal{C}$ is a rooted complete binary tree of depth $\Delta$,
    \item[(iii)] the $B^{\uparrow}$ leaves are labelled by literals in $\{x_1,\dots,x_B,\neg x_1,\dots,\neg x_B\}$, with repetitions allowed; i.e.\ for every $\tilde{b}\in [B^{\uparrow}]_+$ we have
    \[
        x_{\tilde{b}} 
        \in 
        \{x_b\}_{b=1}^B
        \cup
        \{\neg x_b\}_{b=1}^B
    ,
    \]
    \item[(iv)] every gate in $\mathcal{C}$ belongs to $\{\operatorname{OR},\operatorname{AND},\operatorname{PROJ}_A\}=\{g_8,g_2,g_4\}$.
\end{enumerate}
The number of gates is exactly $B^{\uparrow}-1\le 2 S 2^S-1$
 and $B^{\uparrow}  \le  2 S 2^S$, and $
\Delta   \le  S+\big\lceil\log_2 S\big\rceil$.
\hfill\\
If, moreover, there exists some $C\in \mathbb{N}$ such that $S\le C \lceil \log(B)\rceil$ such that
\begin{equation}
\label{eq:SJunta__SparistyCondition}
S\le C \lceil \log(B)\rceil
\end{equation}
then there exists some $p,c\in \mathbb{N}_+$ independent of $B$ such that $B^{\uparrow}\le c\, B^p\,\lceil \log_2(B)\rceil$.
\end{lemma}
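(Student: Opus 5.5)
The plan is to compose the two preceding lemmata and then bound parameters under the sparsity hypothesis. First, apply Lemma~\ref{lem:sJunta_as_embeddedpermutation} to $f$ to obtain distinct indices $i_1,\dots,i_S\in[B]_+$, the selection matrix $\Pi_0\in\{0,1\}^{S\times B}$, and $f_S:\{0,1\}^S\to\{0,1\}$ with $f=f_S\circ(\Pi_0\,\cdot)$. Next, apply Lemma~\ref{lem:complete_binary_tree_representation} to $f_S$, viewed as a Boolean function on $S$ bits. This yields a complete binary tree circuit $\mathcal{C}$ of depth $\Delta\le S+\lceil\log_2 S\rceil$, with $B^{\uparrow}\le 2S2^S$ leaves, exactly $B^{\uparrow}-1$ gates in $\{g_8,g_2,g_4\}$, and leaves labelled by literals $y_s$ or $\neg y_s$ for $s\in[S]_+$.

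The only genuinely new step is transporting the leaf labels back to the ambient variables. Each leaf literal $y_s$ (resp.\ $\neg y_s$) is relabelled as $x_{i_s}$ (resp.\ $\neg x_{i_s}$). Since $(\Pi_0x)_s=x_{i_s}$, the relabelled tree, evaluated on $x\in\{0,1\}^B$, computes $f_S(\Pi_0x)=f(x)$; this follows by induction up the tree, exactly as in Step~3 of the proof of Theorem~\ref{thrm:StructureAware}. The leaf-selection map then defines the matrix $\Pi$ with $1$-hot rows. If $\Pi$ must act on $x$ alone rather than on $(x,\mathbf{1}_B-x)$, I will instead route negations through the bit-lifting formalism $x\mapsto\Pi(x,\mathbf{1}_B-x)$ of Section~\ref{s:Prelim_AC}. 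I will also regard $f_S$ in \eqref{eq:dim_reduced_compute} as the function computed by the tree on its $B^{\uparrow}$ leaf inputs, so that $f=f_S\circ(\Pi\,\cdot)$ holds as a composition of the lifting channel and the tree. With this, items (i)--(iv) and the gate count and depth bounds are inherited verbatim from Lemma~\ref{lem:complete_binary_tree_representation} with $B$ replaced by $S$.

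For the final claim, substitute $S\le C\lceil\log B\rceil$ into $B^{\uparrow}\le 2S2^S$. This gives $2^S\le 2^{C\lceil\log B\rceil}\le 2^{C}\,2^{C\log B}=2^C B^{C\log 2}$ for natural logarithm, or $2^CB^C$ if $\log=\log_2$. Hence $B^{\uparrow}\le 2C\lceil\log B\rceil\,2^C B^{p}$ with $p\eqdef\lceil C\rceil$ (or $C$). Converting $\lceil\log B\rceil$ to $\lceil\log_2 B\rceil$ costs at most a constant, so $c\eqdef 2^{C+1}C$ times that constant works and is independent of $B$. The main obstacle is bookkeeping rather than mathematics: reconciling the statement's $\Pi\in\{0,1\}^{B^{\uparrow}\times B}$ with the need for negated literals, and tracking the logarithm base in the ceiling so that $p$ is a genuine integer.
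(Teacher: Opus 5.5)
Your proposal is correct and follows the paper's proof. The paper declares items (i)--(iv) and the counts a direct consequence of Lemmata~\ref{lem:complete_binary_tree_representation} and~\ref{lem:sJunta_as_embeddedpermutation}, then substitutes $S\le C\lceil\log B\rceil$ into the leaf bound to obtain $p=C$ and $c=C2^C$. Your version is slightly more careful in two places: you keep the factor $2$ in $B^{\uparrow}\le 2S2^S$ (the paper's chain writes $S2^S$, which only changes $c$), and you explicitly resolve the mismatch between $\Pi\in\{0,1\}^{B^{\uparrow}\times B}$ and negated leaf literals, which the paper leaves implicit.
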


\subsubsection{Proofs of Results from Subsection~\ref{s:Proof__ss:TreeConstruction}}
\label{s:Proofs__ss:TreeConstruction___sss:Details}
\begin{proof}[{Proof of Lemma~\ref{lem:complete_binary_tree_representation}}]
If $f=  0$ is the constant false function, then we may take the formula $x_1\operatorname{AND} \neg x_1$ and skip directly to the padding step below, with
$N_{\mathrm{lit}}=2\le B 2^B$; cf.~\eqref{eq:N_padding}.
Whence, we henceforth assume that $f\not=  0$.

\noindent 
\textbf{Step 1: DNF with a bounded number of literal occurrences.}
Write $f$ in canonical disjunctive normal form (CDN).
For each $a=(a_1,\dots,a_B)\in\{0,1\}^B$ with $f(a)=1$, define $T_a:\{0,1\}^B\to \{0,1\}$ for every $x\in \{0,1\}^B$ by
\begin{equation}
\label{eq:literally_this}
        T_a(x)
    \eqdef
        \bigwedge_{i=1}^B \ell_i^{(a)}(x),
\mbox{ where }
    \ell_i^{(a)}(x)
    \eqdef
    \begin{cases}
        x_i, & a_i=1,\\[2pt]
        \neg x_i, & a_i=0
    \end{cases}
\end{equation}
for every $i\in [B]_+$.
Then, for every $x\in \{0,1\}^B$, 
$
    f(x)
    =
    \bigvee_{\{a: f(a)=1\}} T_a(x)
$.
Let $T$ be the number of satisfying assignments of $f$.
Then $1\le T\le 2^B$, and each $T_a$ uses exactly $B$ literals, so the total number of literal occurrences
$N_{\mathrm{lit}}$ satisfies
\begin{equation}
\label{eq:N_padding}
        N_{\mathrm{lit}}
    \eqdef
        T B
    \le
        B 2^B
.
\end{equation}
Thus we obtain a formula over $\{\operatorname{AND},\operatorname{OR}\}$ with at most $N_{\mathrm{lit}}$ literal leaves.

\noindent
\textbf{Step 2: Conversion into a Binary Tree.}
\hfill\\
\textbf{Remark:} \textit{We now replace the formula from Step $1$ by an equivalent binary formula whose depth we can bound explicitly.}

\medskip
\noindent
Fix $a$ with $f(a)=1$.
We claim that the conjunction $
    T_a(x)
    =
    \bigwedge_{i=1}^B \ell_i^{(a)}(x)
$ can be computed by a circuit whose underlying DAG is a binary tree of $\operatorname{AND}$-gates of depth at most $\lceil \log_2 B\rceil$, with leaves given by the literals $\ell_1^{(a)}(x),\dots,\ell_B^{(a)}(x)$.  To see this, we define the  circuit for $T_a$ level-by-level:
For the $0^{th}$ level, the nodes are the $B$ input literals $z^{(0)}_1,\dots,z^{(0)}_B$, where for every $i\in [B]_+$, we have set $z^{(0)}_i \eqdef \ell_i^{(a)}(x)$.

\noindent
We proceed iteratively for level $k\in \mathbb{N}_+$ large enough.  Suppose that level $k-1$ has $m_{k-1}$ nodes and denote 
$
    z^{(k-1)}_1,\dots,z^{(k-1)}_{m_{k-1}}
$.  Now, for each $j\in [ \lfloor m_{k-1}/2\rfloor]_+$, we introduce an $\operatorname{AND}$-gate with inputs $z^{(k-1)}_{2j-1}$ and $z^{(k-1)}_{2j}$, and denote its output by $
z^{(k)}_j \eqdef z^{(k-1)}_{2j-1} \land z^{(k-1)}_{2j}
$.  If $m_{k-1}$ is even, we are done, otherwise there is exactly one remaining node $z^{(k-1)}_{m_{k-1}}$ which has not been paired. In this case, set 
$
    z^{(k)}_{m_k} \eqdef z^{(k-1)}_{m_{k-1}}
$; where $m_k \eqdef \lfloor m_{k-1}/2\rfloor+1$. 
Thus, 
\begin{equation}
\label{eq:mk_bound}
    m_k 
\le 
    \Big\lceil \frac{m_{k-1}}{2}\Big\rceil
.
\end{equation}
Since $m_0 = B$, our recursion yielded
\begin{equation}
\label{eq:mk_bound__2MoreBounds_on_stuff}
    m_k 
\le 
    \Big\lceil \frac{B}{2^k}\Big\rceil
.
\end{equation}
Therefore, for $K\eqdef \lceil \log_2 B\rceil$ we have 
$
        m_K 
    \le 
        \Big\lceil \frac{B}{2^{\lceil\log_2 B\rceil}}\Big\rceil 
    = 
        1
$. 
In particular, after at most $K$ levels, there is exactly one remaining node $z^{(K)}_1$.

By construction, every $z^{(k)}_j$ is the conjunction of a subset of the literals $\{\ell_i^{(a)}(x)\}_{i=1}^B$, and these subsets form a partition of $\{1,\dots,B\}$ across all nodes at level $k$. Hence the final node $z^{(K)}_1$ is equal to
\begin{equation}
\label{eq:THEFINAL_NOODLE}
        z^{(K)}_1
    =
        \bigwedge_{i=1}^B \ell_i^{(a)}(x)
    =
        T_a(x)
.
\end{equation}
Moreover, every internal node in this construction has fan-in $2$, so the underlying DAG is a binary tree whose depth is at most $K=\lceil\log_2 B\rceil$.

\medskip
\noindent
Alternatively, if we have a disjunction of the form
\[
    f(x)
    =
    \bigvee_{\{a:\,f(a)=1\}} T_a(x).
\]
There are $T$ such terms. Applying exactly the same construction as above but replacing $\operatorname{AND}$ by $\operatorname{OR}$ and the inputs $z^{(0)}_j$ by the $T_a(x)$, we obtain a binary tree of $\operatorname{OR}$-gates which computes $\bigvee_{\{a:\,f(a)=1\}} T_a(x)$ and has depth at most $\lceil\log_2 T\rceil$.

\medskip
\noindent
Putting it all together; upon composing the $\operatorname{AND}$-trees for each $T_a$ with the top-level $\operatorname{OR}$-tree, we obtain a binary formula tree $T_0$ computing $f$ such that:
\begin{enumerate}[leftmargin=2em]
    \item[(a)] the leaves of $T_0$ are exactly the literal occurrences $\ell_i^{(a)}(x)$ from the DNF in Step $1$, so their total number is 
    \begin{equation}
    \label{eq:Bound_on_literals}
            N_{\mathrm{lit}}
        \le 
            B\,2^B
        ,
    \end{equation}
    \item[(b)] Since $T\le 2^B$ then 
    the depth of $T_0$ is at most
\begin{equation}
\label{eq:Bound_on_depth}
        \lceil\log_2 B\rceil + \lceil\log_2 T\rceil
        \le
        \lceil\log_2 B\rceil + B
    .
\end{equation}
\end{enumerate}
This $T_0$ is the binary tree we will pad in the next step.

\noindent
\textbf{Step 3: Conversion to a complete binary tree via projection gates}
\begin{equation}
\label{eq:who_is_deptho}
        \Delta
    \eqdef
        B+\big\lceil\log_2 B\big\rceil
\mbox{ and }
        B^{\uparrow}
    \eqdef
        2^\Delta
.
\end{equation}
Thus, the depth of $T_0$, denoted by $\operatorname{depth}(T_0)$ satisfies $\operatorname{depth}(T_0)\le \Delta$ and every leaf of $T_0$ lies at some depth $\ell\le \Delta$.
For any leaf $v$ of $T_0$ lying a level/layer of depth $\ell<\Delta$ with $\ell \in \mathbb{N}$.  We iteratively replace $v$ by a new internal gate $u$ decorated as $\operatorname{PROJ}_A$, then, we may make $v$ the left child of $u$, and add a new leaf $w$ as the right child of $u$, labelled by any literal in $\{x_1,\dots,x_B,\neg x_1,\dots,\neg x_B\}$. Then the value at $u$ equals the value at $v$, since $\operatorname{PROJ}_A(x,y)=x$ ignores its second input.
This increases the depth of that branch by $1$ and increases the total number of leaves by $1$.

Iterating this operation on each leaf with depth $<\Delta$ until all leaves lie at depth $\Delta$ yields a new tree $T_1$ such that: 1) every internal node has two children (the original $\operatorname{AND},\operatorname{OR}$ gates and the newly added $\operatorname{PROJ}_A$ gates), and 2) every leaf lies at depth exactly $\Delta$.  
Hence $T_1$ is a complete binary tree of depth $\Delta$.
Any finite full, i.e.\ every internal node has out-degree $2$, binary tree of depth $\Delta$ has exactly $2^\Delta$ leaves.  Consequently, 
$
    B^{\uparrow}
=
    2^\Delta
$ and the root of $T_1$ still computes $f$, all leaves are labelled by literals as in (ii), and all gates belong to $\{\operatorname{OR},\operatorname{AND},\operatorname{PROJ}_A\}$ as in (iii).
Finally, by the choice of $\Delta$,
\[
        B^{\uparrow}
    =
        2^\Delta
    =
        2^{B+\lceil\log_2 B\rceil}
    =
        2^B 
        \,
        2^{\lceil\log_2 B\rceil}
    \le
        2^B (2B)
    =
       2 B 2^B
.
\]
Now, since $\log_2(B 2^B)=B+\log_2 B$, then we also find that
$
    \Delta
=
    B+\big\lceil\log_2 B\big\rceil
\le
    \big\lceil\log_2(B 2^B)\big\rceil
$.

\nonumber
\textbf{Step 4: Final Counts.}
It remains to count the number of gates.  
Let $I$ denote the number of internal nodes of $T_1$, i.e.\ those nodes decorated by gates.
Counting edges in two ways gives
\[
        2I
    =
        \#\{\text{edges}\}
    =
        I + B^{\uparrow} - 1
.
\]
Consequently, $B^{\uparrow} = I+1$.
Whence, $I=B^{\uparrow}-1\le 2 B 2^B-1$, which is the number of gates of $\mathcal{C}$.
\end{proof}

\begin{proof}[{Proof of Lemma~\ref{lem:sJunta_as_embeddedpermutation}}]
If $f$ is an $S$-junta, there is a set $J\subseteq[B]_+$ of relevant variables with $|J|\le S$.  
Enlarge $J$ to $\{i_1,\dots,i_S\}$ if needed.  
Define $f_S:\{0,1\}^S\to\{0,1\}$ by
$f_S(z_1,\dots,z_S)\eqdef f(x)$ for any $x$ with $x_{i_j}=z_j$; this is well-defined
because $f$ does not depend on coordinates outside $\{i_1,\dots,i_S\}$.  
Then $f(x)=f_S(x_{i_1},\dots,x_{i_S})$ for all $x$.
Conversely, if $f(x)=f_S(x_{i_1},\dots,x_{i_S})$, then changing any coordinate
outside $\{i_1,\dots,i_S\}$ leaves $f(x)$ unchanged, so $f$ has at most $S$ relevant variables; whence, $f$ is an $S$-junta.
\end{proof}

\begin{proof}[{Proof of Lemma~\ref{lem:JuntaTreeification}}]
All but the last claims are a direct consequence of Lemmata~\ref{lem:complete_binary_tree_representation} and~\ref{lem:sJunta_as_embeddedpermutation}.
For the final claim, note that if there exists some $C\in \mathbb{N}$ such that $S\le C \lceil \log(B)\rceil$ then~\eqref{eq:boundsboundybounds} implies that
\[ 
    B^{\uparrow}
\le 
    S2^S 
\le 
    C\lceil\log_2 B\rceil\,2^{C\lceil\log_2 B\rceil} 
\le 
    C\,2^C\,B^C\,\lceil\log_2 B\rceil
.
\] 
Setting $p\eqdef C$ and $c\eqdef C2^C$ yields the result.
\end{proof}

We are now in place to prove our universal approximation guarantee.  We prove Theorem~\ref{thrm:UniversalReasoning__Quantitative} as it implies~\ref{thrm:UniversalReasoning} upon taking $S=B$.
\begin{proof}[{Proof of Theorem~\ref{thrm:UniversalReasoning}}]
Fix our failure probability $\delta\in (0,1]$.
\hfill\\
\textbf{Step $0$ - Setup:}
\hfill\\
Fix a temperature hyperparameter $\eta>0$, to be determined retroactively.  
Since $f:\{0,1\}^B\to \{0,1\}$ is an $S$-Juntas then applying Lemma~\ref{lem:JuntaTreeification}, we may represent $f$ as in~\eqref{eq:dim_reduced_compute}.
Now the computation graph $G=(V,E)$ of the circuit $\mathcal{C}$ computing $f_S$, in the notation of Lemma~\ref{lem:JuntaTreeification}, is a rooted complete binary tree of depth $\Delta\in \mathbb{N}_+$.  Therefore we may decompose $V$ as
\begin{equation}
\label{eq:partitioning_V}
    V
=
    \bigsqcup_{l=0}^\Delta
    \,
        V_l
\mbox{ where }
    V_l = \{v_{l:i}\}_{i=1}^{2^{\Delta-l}}
\end{equation}
and we may partition $E$ as
\begin{equation}
\label{eq:partitioning_E}
    E
=
    \bigsqcup_{l=1}^\Delta
    \,
        E_l
\mbox{ where }
    E_l 
= 
    \bigsqcup_{i=1}^{2^{\Delta-l}}
    \,
        \big\{
            (v_{l-1:i_1},v_{l:i})
            ,
            (v_{l-1:i_2},v_{l:i})
        \big\}_{i=1}^{2^{\Delta-l}}
\end{equation}
where, for every $i\in [2^{\Delta-l}]_+$ the indices $i_1,i_2\in [2^{\Delta-l+1}]_+$ in~\eqref{eq:partitioning_E} are distinct and unique.  Moreover, WLOG, we may arrange $v_{l:i}$ the edges $E$ in the graph $G$ do not cross, as trees are planar.

Furthermore, since the gates in $\mathcal{C}$ only decorate the vertices in $\sqcup_{l=1}^\Delta\, V_l$ and not in $V_0$ (those are only variables described by Lemma~\ref{lem:JuntaTreeification}~(iii)).  We denote the gate decorating the vertex $v_{l:i}$ by $g_{l:i}$ and, WLOG, we identify $g_{l:i} \in \{g_k\}_{k=1}^K$ (say $g_{l:i}=g_k$) with the unique standard basis vector in $\{e_k\}_{k=1}^{16}$ with $1$ only on the entry of the index $k$).

\hfill\\
\textbf{Step $1$ - Representation:}
\hfill\\
Consider an 
$\randcirc$
with representation~\eqref{eq:Model_Def} having depths $(2^\Delta,2^{\Delta-1},\dots,2^0)=(2^{\Delta-1})_{l=0}^\Delta\,\in \mathbb{N}_+^{\Delta+1}$; we determine its internal parameters now.
Now for every $l\in [\Delta]_+$ and each $i\in [2^{\Delta-l}]_+$ let
\begin{equation}
\label{eq:parameters}
    W^{(l:\sigma)}
\eqdef 
    \big(\eta\,g_{l:i}\big)_{i=1}^{2^{\Delta-l}}
,\,\,
    W^{(l:1)}
\eqdef 
    \eta
    \oplus_{i=1}^{2^{\Delta-l}}\,
        e_{i_1}
\mbox{,and}\,\,
    W^{(l:2)}
\eqdef 
    \eta
    \oplus_{i=1}^{2^{\Delta-l}}\,
        e_{i_2}
.
\end{equation}
Now applying Propositions~\ref{prop:random_weight_assignment} and~\ref{prop:Good_Properties_Uncrossable} and taking a union bound, the probability that $
\randcirc(x,\omega)=f(x)$ for every element of the Boolean cube $x\in \{0,1\}^{B^{\uparrow}}$ is at-least
\begin{equation}
\label{eq:Success_probability__fS}
1-
\sum_{l=1}^\Delta\,
\sum_{i=1}^{2^{\Delta-l}}\,
\tilde{\delta}(\eta)
\end{equation}
where $\tilde{\delta}(\eta)\in (0,1]$ depends continuously on $\eta>0$; with $\lim\limits_{\uparrow \infty}\, \tilde{\delta}(\eta)=0$.  
Now, applying Proposition~\ref{prop:lifting_channels}, we may choose $W\in \mathbb{R}^{B^{\uparrow}\times 2B}$, and thus, $X^{\uparrow:W}$ as as defined in~\eqref{eq:lifted_guy}, so that $X^{\uparrow:W}$ coincides with $\Pi$
, with probability at-least 
\begin{equation}
\label{eq:Success_probability__Pi}
1-\bar{\delta}(\eta)
\end{equation}
for some $\bar{\delta}(\eta)$ depending continuously on $\eta$ and such that $\lim\limits_{\eta \uparrow \infty}\bar{\delta}(\eta)=0$.  
Consequently, we may choose $\eta$ large enough so that 
\[
    \max\big\{
        \bar{\delta}(\eta)
        ,
        \tilde{\delta}(\eta)
    \big\}
\le 
    \delta\,
    \Big(
        \max\Big\{2,2^\Delta-1\Big\}
    \Big)^{-1}
\]
where we note that $\sum_{l=1}^\Delta\,
\sum_{i=1}^{2^{\Delta-l}}\,1=2^\Delta-1$.  Taking another union bound, our choice of $\eta$ and the lower-bounds in~\eqref{eq:Success_probability__fS} and~\eqref{eq:Success_probability__Pi} yield the conclusion.
\end{proof}

\section{Experimental Setup and Implementation Details}
\label{app:exp_impl}

This appendix documents the experimental protocol and implementation-facing details used to obtain the numerical results reported in the main text. The theoretical results in Sections~\ref{s:Model}--\ref{s:Theory} are agnostic to these choices; the goals here are (i) reproducibility and (ii) a precise description of the evaluation and interpretability probes used for continuous baselines.

\subsection{Task family and dataset generation}
\label{app:dataset}

Each instance is a Boolean function $f:\{0,1\}^B\to\{0,1\}$ specified by a syntactic formula (SOP-style with additional macros; see \texttt{src/ubcircuit/tasks.py}). For every instance we train and evaluate on the full truth table, i.e., all $2^B$ assignments. We report \emph{exact match} (EM), meaning the predicted truth table equals the target truth table on all $2^B$ inputs.

\paragraph{Dataset-provided structural proxies.}
Each dataset row provides generation-time proxies
\[
S_{\mathrm{base}}
\quad\text{and}\quad
L_{\mathrm{base}},
\]
where $S_{\mathrm{base}}$ is a width proxy used during generation (e.g.\ a term-count/arity proxy), and $L_{\mathrm{base}}$ is a heuristic depth estimate. These are treated as \emph{informative but not privileged}: the benchmark formulas are not algebraically simplified, and multiple syntactically distinct formulas may be logically equivalent.

\subsection{Models and matching regimes}
\label{app:models}

\paragraph{\ModelNameShort\ (our model).}
We use the 16-gate version throughout (Table~\ref{tab:all_possible_gates}), with fan-in-$2$ gates and probabilistic routing. After training, we decode a discrete circuit by taking layerwise $\argmax$ decisions over the learned categorical distributions, yielding (i) a decoded circuit expression and (ii) gate-usage histograms (Appendix~\ref{app:gate_hists}).

\paragraph{MLP baselines.}
We use a ReLU MLP trained on the same truth tables. We consider three matching regimes:
\begin{itemize}[leftmargin=1.5em,itemsep=0.25em]
\item \textbf{Neuron-match:} MLP hidden width is set to $S_{\mathrm{model}}$ and depth to $L_{\mathrm{model}}$ (matching the \ModelNameShort\ ``shape'').
\item \textbf{Param-match (soft):} choose MLP width so trainable parameter count is $\le$ \ModelNameShort \emph{trainable} parameter count for the same $(B,S_{\mathrm{model}},L_{\mathrm{model}})$.
\item \textbf{Param-match (total):} choose MLP width so trainable parameter count is $\le$ \ModelNameShort\ trainable parameters plus a fixed primitive-count proxy (treating each primitive-gate basis element as a fixed component).
\end{itemize}

\subsection{Width/depth scaling rules}
\label{app:scaling}

To study the effect of additional ``budget'' beyond dataset proxies, we apply scaling rules:
\[
S_{\mathrm{model}} = \mathrm{Clamp}\big(\mathrm{Op}_S(S_{\mathrm{base}};k_S)\big),
\qquad
L_{\mathrm{model}} = \mathrm{Clamp}\big(\mathrm{Op}_L(L_{\mathrm{base}};k_L)\big),
\]
where $\mathrm{Op}$ is either \texttt{identity}, \texttt{add}, or \texttt{mul}, and \texttt{Clamp} enforces preset bounds (e.g.\ $S_{\min}\le S_{\mathrm{model}}\le S_{\max}$). In the main \ModelNameShort-vs-MLP comparison we use the additive setting
\[
S_{\mathrm{model}} = S_{\mathrm{base}} + S_{\mathrm{add}},
\qquad
L_{\mathrm{model}} = L_{\mathrm{base}} + L_{\mathrm{add}},
\]
with $S_{\mathrm{add}}=10$ and $L_{\mathrm{add}}=0$ unless stated otherwise. Sweep plots in Appendix~\ref{app:sweep} vary $S_{\mathrm{add}}$ and $L_{\mathrm{add}}$.

\subsection{Training system used in experiments}
\label{app:impl_heuristics}

This section describes the concrete training system used in our experiments. These choices are implementation-level stabilizers for optimizing a differentiable circuit model; they are not required for the proof of universality.

\subsubsection{DepthStack architecture.}
Our implementation (\texttt{src/ubcircuit/modules.py}) builds a depth-$L_{\mathrm{model}}$ stack:
(i) optional \texttt{BitLifting} mapping $B\mapsto B_{\mathrm{eff}}$,
(ii) a first \texttt{GeneralLayer} mapping $B_{\mathrm{eff}}\to 2$ wires (with pair selection when $B_{\mathrm{eff}}>2$),
(iii) $L_{\mathrm{model}}-2$ middle \texttt{ReasoningLayer}s mapping $2\to 2$ wires, and
(iv) a final \texttt{ReasoningLayer} mapping $2\to 1$ output.
Each layer has $S_{\mathrm{model}}$ parallel \texttt{BooleanUnit}s (``heads''), and a row-wise softmax mixer $L_\ell$ routes head outputs into the next-layer wires.

\subsubsection{Gate interpolants for the 16-gate head}
\label{app:impl_sigma16}

Our implementation uses a differentiable feature head $\sigma_{16}:[0,1]^2\to[0,1]^{16}$ that simultaneously evaluates the $16$ fan-in-$2$ Boolean gates in Table~\ref{tab:all_possible_gates}.  
For each gate $g$ with truth table vector $Z^{(g)}\in\{0,1\}^4$ ordered by corners
$(0,0),(0,1),(1,0),(1,1)$, we define an interpolant
\[
\sigma(x \mid Z^{(g)}) \;=\; \sum_{c\in\{00,01,10,11\}} Z^{(g)}_c\,\phi_c(x),
\qquad
\sigma_{16}(x)\;=\;\big(\sigma(x\mid Z^{(g)})\big)_{g=1}^{16},
\]
where $\{\phi_c\}$ is a (mode-dependent) set of corner basis functions satisfying
$\phi_c(\bar x_{c'})=\delta_{c,c'}$ at the four Boolean corners $\bar x_{00},\bar x_{01},\bar x_{10},\bar x_{11}$.

\paragraph{Mode \texttt{lagrange} (multilinear).}
Let $\bar x_{00}=(0,0)$, $\bar x_{01}=(0,1)$, $\bar x_{10}=(1,0)$, $\bar x_{11}=(1,1)$.  
Define the bilinear Lagrange basis on $[0,1]^2$:
\[
\phi_{00}(x)=(1-x_1)(1-x_2),\quad
\phi_{01}(x)=(1-x_1)x_2,\quad
\phi_{10}(x)=x_1(1-x_2),\quad
\phi_{11}(x)=x_1x_2.
\]
This basis satisfies $\phi_{ij}(\bar x_{kl})=\delta_{(ij),(kl)}$ and yields an exact multilinear extension of each gate.

\paragraph{Mode \texttt{rbf} (normalized Gaussian kernels).}
Let $w_c(x)=\exp(-\|x-\bar x_c\|_2^2/(2s^2))$ for bandwidth $s>0$ and corners $\bar x_c$.  
Define a partition-of-unity basis by normalization:
\[
\phi_c(x)\;=\;\frac{w_c(x)}{\sum_{c'} w_{c'}(x)}.
\]
This yields a smooth ($C^\infty$) interpolant whose sharpness is controlled by $s$.

\paragraph{Mode \texttt{bump} (compactly supported $C^\infty$ kernels).}
Let $w_c(x)=\psi(\|x-\bar x_c\|_2/r)$ where $r>0$ is a radius and $\psi$ is a standard smooth compactly supported bump with $\psi(t)=0$ for $t\ge 1$.  
We again normalize:
\[
\phi_c(x)\;=\;\frac{w_c(x)}{\sum_{c'} w_{c'}(x)}.
\]
In our code, \texttt{sigma16.radius} controls $r$.  When the support radius is chosen so that at each corner $\bar x_c$ all other kernels vanish
(e.g., $r < 1$, since corner-to-corner distances are $\ge 1$), the normalized basis satisfies
$\phi_c(\bar x_{c'})=\delta_{c,c'}$.

\subsubsection{Layerwise bandwidth schedule.}
For \texttt{rbf} and the internal bandwidth parameter used by \texttt{sigma16}, we optionally anneal the sharpness across layers using a linear schedule
\[
s_\ell \;=\; s_{\mathrm{start}} + \frac{\ell}{L_{\mathrm{model}}-1}\,(s_{\mathrm{end}}-s_{\mathrm{start}}),
\qquad \ell=0,\dots,L_{\mathrm{model}}-1.
\]
matching the implementation parameters \texttt{sigma16.s\_start} and \texttt{sigma16.s\_end}.  
We report an ablation over \texttt{sigma16.mode} in Table~\ref{tab:sigma16_mode_ablation}.

\subsubsection{Pair selection and MI-based priors.}
\label{app:impl_pair}
For $B_{\mathrm{eff}}>2$, the first layer uses a differentiable pair selector with two row-softmax matrices $\mathrm{PL},\mathrm{PR}\in\mathbb{R}^{S_{\mathrm{model}}\times B_{\mathrm{eff}}}$ that define categorical distributions over left/right inputs.
When \texttt{pair.route=mi\_soft}, we compute instance-specific high-mutual-information pairs from the truth table (via \texttt{top\_mi\_pairs} in \texttt{src/ubcircuit/mi.py}) and convert them to priors $\mathrm{PL}_{\mathrm{prior}},\mathrm{PR}_{\mathrm{prior}}\in \Delta_{B_{\mathrm{eff}}}^{S_{\mathrm{model}}\times B_{\mathrm{eff}}}$ (expanded to $B_{\mathrm{eff}}$ if lifting is enabled). These are injected as log-probability biases with strength \texttt{pair.prior\_strength}.
When \texttt{pair.route=mi\_hard}, the MI-selected pairs are used as fixed discrete pairs (\texttt{FixedPairSelector}).
When \texttt{pair.route\allowbreak= learned}, no priors are used.

\subsubsection{Repulsive right-pick (optional).}
To discourage self-pairs within a head, the pair selector optionally couples the right-pick to the left-pick.
If \texttt{pair.repel=true}, the implementation supports:
(i) log-space repulsion (\texttt{pair.mode=log} or \texttt{hard-log}), adding $\eta \log(1-\mathrm{PL})$ to right logits and optionally masking the left argmax coordinate in \texttt{hard-log}; and
(ii) probability-space repulsion (\texttt{pair.mode=mul} or \texttt{hard-mul}), multiplying right probabilities by $(1-\mathrm{PL})$ and optionally hard-zeroing the left argmax coordinate in \texttt{hard-mul}.
Our best-reported configuration uses MI priors (\texttt{pair.route=mi\_soft}) with repulsion disabled.

\subsubsection{Asynchronous temperature annealing.}
Each layer uses a temperature $\tau_\ell(t)$ to control the sharpness of both gate selection and routing.
We use asynchronous schedules with
\texttt{direction} $\in \{\texttt{top\_down}, \texttt{bottom\_up}\}$ and
\mbox{\texttt{schedule}} $\in \{\mbox{\texttt{linear}}, \mbox{\texttt{cosine}}\}$. Early in training, larger $\tau_\ell$ keeps selections soft (smoother gradients); later, $\tau_\ell\downarrow T_{\min}$ makes distributions near one-hot, enabling stable $\argmax$ decoding.

\subsubsection{Sigma16 bandwidth schedule.}
When using the 16-gate head, we optionally anneal the gate-feature bandwidth across layers (\texttt{sigma16.s\_start} $\to$ \texttt{sigma16.s\_end}) by calling \texttt{DepthStack.set\_layer\_taus\_and\_bandwidths}. This is a per-layer bandwidth on the interpolant (not the softmax temperature).

\subsubsection{Objective, optimizer, and early stopping.}
\label{app:impl_training_aids}
All methods are trained on full truth tables using binary cross entropy. \ModelNameShort\ optimizes
\[
\mathcal{J}
=
\mathcal{L}_{\mathrm{BCE}}
+
\mathcal{R}_{\mathrm{ent}}
+
\mathcal{R}_{\mathrm{div\text{-}units}}
+
\mathcal{R}_{\mathrm{div\text{-}rows}}
+
\mathcal{R}_{\mathrm{const16}}.
\]
We use RMSProp in the main runs. Early stopping monitors EM after a warm-up (configurable \texttt{min\_steps}, \texttt{check\_every}, \texttt{patience\_checks}).

\subsubsection{Regularizers \texorpdfstring{$\mathcal{R}_{\bullet}$ }{}used in experiments.}
In the implementation (\texttt{src/ubcircuit/utils.py}, \texttt{regularizers\_bundle}), the regularizers are computed from the per-layer diagnostics returned by the forward pass. 
For each layer $\ell$, the model exposes:
(i) routing rows $L_\ell\in\Delta_{S}^{B_{\ell+1}\times S}$ (row-softmax over heads),
and (ii) head gate logits $\{W^{(\ell:s)}\}_{s=1}^{S}$ (each $W^{(\ell:s)}\in\mathbb{R}^{16}$ when \texttt{gate\_set=16}).
Let $\tau_\ell$ be the layer temperature used at the current optimization step.

\emph{Entropy term.}
The code sums Shannon entropies of each routing row and each head's gate distribution:
\[
\mathcal{R}_{\mathrm{ent}}
=
\lambda_{\mathrm{ent}}
\sum_{\ell}\Bigg(
\sum_{k=1}^{B_{\ell+1}} H\!\left(L_\ell[k,:]\right)
+
\sum_{s=1}^{S} H\!\left(\operatorname{SM}(W^{(\ell:s)}/\tau_\ell)\right)
\Bigg),
\qquad
H(q)\triangleq-\sum_i q_i\log q_i .
\]

\emph{Head-diversity term.}
Let $p^{(\ell:s)}=\operatorname{SM}(W^{(\ell:s)}/\tau_\ell)\in\Delta_{16}$. The head-diversity penalty is the sum of pairwise cosine similarities:
\[
\mathcal{R}_{\mathrm{div\text{-}units}}
=
\lambda_{\mathrm{div\text{-}units}}
\sum_{\ell}\sum_{1\le s<s'\le S}
\cos\!\Big(p^{(\ell:s)},p^{(\ell:s')}\Big),
\qquad
\cos(u,v)=\frac{\langle u,v\rangle}{\|u\|_2\|v\|_2}.
\]

\emph{Row-diversity term.}
The routing diversity penalty is the sum of pairwise cosine similarities between routing rows:
\[
\mathcal{R}_{\mathrm{div\text{-}rows}}
=
\lambda_{\mathrm{div\text{-}rows}}
\sum_{\ell}\sum_{1\le k<k'\le B_{\ell+1}}
\cos\!\Big(L_\ell[k,:],L_\ell[k',:]\Big).
\]

\emph{Constant-gate penalty (16-gate head).}
When enabled (\texttt{lam\_const16}$>0$), the code adds a penalty on selecting \texttt{FALSE} and \texttt{TRUE} on all non-final layers:
\[
\mathcal{R}_{\mathrm{const16}}
=
\lambda_{\mathrm{const16}}
\sum_{\ell<L_{\mathrm{model}}-1}\sum_{s=1}^{S}
\Big(p^{(\ell:s)}_{\texttt{FALSE}}+p^{(\ell:s)}_{\texttt{TRUE}}\Big),
\]
and uses $\mathcal{R}_{\mathrm{const16}}=0$ on the final layer. (In the best-reported setting we set $\lambda_{\mathrm{const16}}=0$.)

\subsubsection{Decoding and evaluation.}
\label{app:impl_decoding}
Decoding is post-hoc: after training we select gates and routing via $\argmax$ at each layer to obtain a discrete circuit and a symbolic expression. EM is computed from the model's truth-table predictions (threshold at $0.5$) and is independent of symbolic string equality.

\subsection{Proof-level construction vs.\ experimental instantiation}
\label{app:proof_vs_impl}

The theory in Sections~\ref{s:Model}--\ref{s:Theory} studies a randomized circuit-valued model and proves universal representability and circuit-structured realizations. The experimental system instantiates the same circuit template with a concrete differentiable parameterization and a deterministic decoding protocol. The main differences that matter for interpreting the numerics are summarized in Table~\ref{tab:proof_vs_impl}. In particular, several training-time heuristics are used in practice (e.g.\ temperature schedules and regularization) to stabilize optimization and promote crisp $\argmax$ decoding; these are not required by the existence proofs.

\begin{table}[t]
\centering
\small
\setlength{\tabcolsep}{6pt}
\renewcommand{\arraystretch}{1.15}
\begin{adjustbox}{max width=\linewidth}
\begin{tabular}{p{0.18\linewidth} p{0.34\linewidth} p{0.44\linewidth}}
\toprule
Component & Proof-level statement & Experimental instantiation (with refs.) \\
\midrule
Input lifting &
Stochastic bit-lifting channel $\mathbf{X}^{\uparrow:W}$ is included (Section~\ref{s:LiftingChannels}). &
Implemented (\texttt{BitLifting}), but best-performing setting disables lifting (\texttt{lifting.use=false}). \\

Gate family &
Randomized selection over all $16$ fan-in-$2$ gates (\cref{tab:all_possible_gates}). &
Uses the 16-gate head (\texttt{gate\_set=16}) with smooth $\sigma_{16}$ interpolation during training and $\argmax$ decoding at evaluation (\cref{app:impl_sigma16,app:impl_decoding}). \\

Edge selection &
Structured selection discouraging degenerate pairing (\cref{s:Model__ss:UncrossableEdges}). &
First layer uses \texttt{PairSelector} with learned picks or MI-driven priors (\texttt{mi\_soft}/\texttt{mi\_hard}) and optional repulsion modes (\texttt{log}/\texttt{mul}, \texttt{hard-*}) (\cref{app:impl_pair}). \\

Decoding &
Circuit realization is defined per outcome $\omega$ (\cref{thrm:StructureAware}). &
Post-hoc $\argmax$ decoding yields a single discrete circuit expression and gate histograms (\cref{app:impl_decoding,app:gate_hists}). \\

Optimization aids &
Not required for universality/existence proofs. &
Async temperature annealing and regularizers used for stable optimization and crisp decoding (\cref{app:impl_training_aids}). \\
\bottomrule
\end{tabular}
\end{adjustbox}
\caption{High-level differences between the proof-level construction and the training system used in experiments.}
\label{tab:proof_vs_impl}
\end{table}

\subsection{Interpretability metrics for continuous baselines}
\label{app:interpretability}

BNR (Appendix~\ref{app:BNR_detail}) is defined for Boolean-valued maps. MLP hidden units are real-valued, so we compute operational BNR diagnostics and additional gate-recoverability probes by enumerating each unit on the full truth table.

\paragraph{BNR diagnostics (rounded and tolerant).}
Let $u:\{0,1\}^B\to\mathbb{R}$ be a hidden unit evaluated on all $2^B$ inputs.
We report two operational variants:
\begin{itemize}[leftmargin=1.5em,itemsep=0.25em]
\item \textbf{BNR$_{\mathrm{exact}}$ (rounded).} Fix a rounding precision $d$ (we use $d=6$). Declare $u$ BNR if
\[
\#\big\{\mathrm{Round}_d(u(x)):\, x\in\{0,1\}^B\big\}\le 2.
\]
\item \textbf{BNR$_{\varepsilon}$ (two-cluster tolerance).} Fix $\varepsilon>0$ (we use $\varepsilon=10^{-3}$). Split the multiset $\{u(x)\}$ at its median, take the median of each half as centers $c_0,c_1$, and declare $u$ $\varepsilon$-BNR if
\[
\max_{x\in\{0,1\}^B}\min\big\{|u(x)-c_0|,\,|u(x)-c_1|\big\}\le \varepsilon.
\]
\end{itemize}
For a layer with $H$ units we report the fraction of units satisfying each criterion. For an $L$-layer MLP we report both the first-layer value (L1) and the average across hidden layers (denoted ``all''). \ModelNameShort\ is Boolean by construction, hence achieves BNR$=1$ for all internal units.

\paragraph{Deterministic binarization for gate probes.}
To probe Boolean decision patterns of MLP units, we map a real unit to a Boolean function by thresholding at the all-zero assignment:
\[
t(u)\triangleq u(\mathbf{0}),
\qquad
\hat u(x)\triangleq \mathbb{I}\{u(x)\ge t(u)\}\in\{0,1\}.
\]
This binarization is used only for the gate-recoverability probes below.

\paragraph{Primitive recoverability on the first hidden layer (input primitives).}
For each first-layer binarized unit $\hat u$, define
\[
\mathrm{bestAcc}_{\mathrm{in}}(u)
\triangleq
\max_{p\in\mathcal{P}_{\mathrm{in}}}
\Pr_{x\sim\mathrm{Unif}(\{0,1\}^B)}\big[\hat u(x)=p(x)\big],
\]
where $\mathcal{P}_{\mathrm{in}}$ contains literals $\{a_i,\neg a_i\}_{i=1}^B$ and the 16 fan-in-$2$ gates in Table~\ref{tab:all_possible_gates} applied to ordered input pairs $(a_i,a_j)$ (with $i\neq j$). We report
\[
\mathrm{primHit}_{\mathrm{in}}
\triangleq
\Pr_u[\mathrm{bestAcc}_{\mathrm{in}}(u)=1],
\qquad
\mathrm{primBest}_{\mathrm{in}}
\triangleq
\mathbb{E}_u[\mathrm{bestAcc}_{\mathrm{in}}(u)].
\]
Here \textbf{primHit} is an exact recovery rate (no tolerance), while \textbf{primBest} averages the best-match accuracies.

\paragraph{Layerwise primitive recoverability (all layers).}
Let $\hat z^{(\ell-1)}_1,\dots,\hat z^{(\ell-1)}_{H_{\ell-1}}$ denote the binarized units in layer $\ell-1$. For a binarized unit $\hat u^{(\ell)}_j$ in layer $\ell$, define
\[
\mathrm{bestAcc}^{(\ell)}_{\mathrm{layer}}(u_j)
\triangleq
\max_{\substack{i\neq k\\ g\in\mathbb{G}^{\star}}}
\Pr_{x\sim\mathrm{Unif}(\{0,1\}^B)}
\Big[
\hat u^{(\ell)}_j(x)=g(\hat z^{(\ell-1)}_{i}(x),\hat z^{(\ell-1)}_{k}(x))
\Big].
\]
We define per-layer summaries
\[
\mathrm{primHit}^{(\ell)}_{\mathrm{layer}}
\triangleq
\Pr_j[\mathrm{bestAcc}^{(\ell)}_{\mathrm{layer}}(u_j)=1],
\qquad
\mathrm{primBest}^{(\ell)}_{\mathrm{layer}}
\triangleq
\mathbb{E}_j[\mathrm{bestAcc}^{(\ell)}_{\mathrm{layer}}(u_j)],
\]
and report their averages across hidden layers (denoted ``all'').

\subsection{Gate histogram diagnostics}
\label{app:gate_hists}

We record gate-type histograms as descriptive diagnostics:
\begin{itemize}[leftmargin=1.5em,itemsep=0.25em]
\item \textbf{\ModelNameShort-all:} gate histogram across all units and layers under $\argmax$ decoding.
\item \textbf{\ModelNameShort-path:} gate histogram restricted to the decoded readout path.
\item \textbf{MLP-all:} histogram of exact gate matches from the layerwise recoverability probe, aggregated across layers.
\end{itemize}
These histograms are not correctness certificates; they help interpret which primitives dominate the decoded computation. (In our runs, MLP exact-match recoveries are often dominated by constant-type behaviors such as \texttt{TRUE}, so we interpret high \textbf{primBest} jointly with \textbf{primHit} and the gate histograms.)

\section{Additional Results and Diagnostics}
\label{app:results_full}

\subsection{EM and BNR}
\label{app:main_tables}

We summarize (i) task correctness via exact match (EM) and (ii) neuronwise Boolean-valuedness via BNR.
EM is computed on the full truth table for each instance (Section~\ref{app:dataset}); BNR is defined in Appendix~\ref{app:interpretability}.
Table~\ref{tab:em_short} reports EM for MLP baselines under each matching regime, while reporting the \ModelNameShort\ EM once under the fixed canonical \ModelNameShort\ setting used throughout.
Table~\ref{tab:bnr_only} reports BNR diagnostics: for \ModelNameShort\, BNR$\equiv 1$ holds by construction (Theorem~\ref{thrm:StructureAware}); for MLPs, BNR is measured post-hoc on hidden activations.

\begin{table}[t]
\centering
\small
\begin{tabular}{l c c c}
\toprule
MLP match & Avg.\ MLP params & MLP EM & \ModelNameShort\ EM \\
\midrule
neuron & 368 & 0.987 $\pm$ 0.002 & \multirow{3}{*}{0.971 $\pm$ 0.003} \\
param\_soft & 683 & 0.992 $\pm$ 0.001 & \\
param\_total & 1154 & 0.999 $\pm$ 0.002 & \\
\bottomrule
\end{tabular}
\caption{Exact-match (EM) on truth-table evaluation (mean $\pm$ std across 5 seeds). \ModelNameShort\ EM is reported once (canonical \ModelNameShort\ setting) and MLP EM is shown for each matching regime.}
\label{tab:em_short}
\end{table}

\begin{table}[t]
\centering
\small
\begin{tabular}{lcccc}
\toprule
Model / match
& BNR$_{\mathrm{exact}}$(L1)
& BNR$_{\mathrm{exact}}$(all)
& BNR$_{\varepsilon}$(L1)
& BNR$_{\varepsilon}$(all)
\\
\midrule
\ModelNameShort\ & 1.000 & 1.000 & 1.000 & 1.000 \\
\midrule
neuron & 0.114 $\pm$ 0.003 & 0.209 $\pm$ 0.002 & 0.115 $\pm$ 0.004 & 0.209 $\pm$ 0.002 \\
param\_soft & 0.107 $\pm$ 0.002 & 0.199 $\pm$ 0.001 & 0.107 $\pm$ 0.002 & 0.199 $\pm$ 0.001 \\
param\_total & 0.105 $\pm$ 0.002 & 0.195 $\pm$ 0.002 & 0.105 $\pm$ 0.002 & 0.196 $\pm$ 0.002 \\
\bottomrule
\end{tabular}
\caption{BNR diagnostics (mean $\pm$ std across 5 seeds). ``all'' denotes the average across hidden layers. \ModelNameShort\ is BNR by construction, hence BNR$\equiv 1$.}
\label{tab:bnr_only}
\end{table}

\noindent\textbf{Takeaway.}
Across matching regimes, MLP achieves higher EM than \ModelNameShort\ (Table~\ref{tab:em_short}), while \ModelNameShort\ guarantees neuronwise Boolean-valuedness (Table~\ref{tab:bnr_only}).
For MLPs, strict two-valuedness is rare, especially at early layers; the tolerance-based BNR$_\varepsilon$ closely tracks the rounded BNR$_{\mathrm{exact}}$ in this regime.

\subsection{Primitive recoverability}
\label{app:prim_table}

BNR measures whether hidden activations are strictly two-valued; it does not ask whether a unit behaves like a \emph{specific} Boolean gate.
To probe gate-like \emph{decision patterns} in MLPs, we report fan-in-$2$ primitive recoverability after deterministic binarization (Appendix~\ref{app:interpretability}).
Table~\ref{tab:prim_only} summarizes recoverability both (i) against primitives over \emph{input bits} (L1) and (ii) layerwise relative to binarized previous-layer units (all).

\begin{table}[t]
\centering
\small
\begin{tabular}{l c c c c}
\toprule
MLP match
& primHit$_{\mathrm{in}}$(L1)
& primBest$_{\mathrm{in}}$(L1)
& primHit$_{\mathrm{layer}}$(all)
& primBest$_{\mathrm{layer}}$(all)
\\
\midrule
neuron & 0.463 $\pm$ 0.004 & 0.945 $\pm$ 0.001 & 0.552 $\pm$ 0.002 & 0.953 $\pm$ 0.000 \\
param\_soft & 0.466 $\pm$ 0.003 & 0.941 $\pm$ 0.000 & 0.559 $\pm$ 0.002 & 0.953 $\pm$ 0.000 \\
param\_total & 0.477 $\pm$ 0.003 & 0.939 $\pm$ 0.001 & 0.570 $\pm$ 0.002 & 0.953 $\pm$ 0.000 \\
\bottomrule
\end{tabular}
\caption{Fan-in-2 primitive recoverability after deterministic binarization (mean $\pm$ std across 5 seeds). primHit is the fraction of units with an exact primitive match; primBest is the average best-match accuracy. ``in'' evaluates first-layer units against primitives over input bits; ``layer'' evaluates gate recoverability relative to binarized previous-layer units, averaged across hidden layers.}
\label{tab:prim_only}
\end{table}

\noindent\textbf{Takeaway.}
Primitive recoverability can be substantially higher than strict BNR (compare Tables~\ref{tab:bnr_only} and~\ref{tab:prim_only}), indicating that many MLP units induce gate-like \emph{Boolean patterns} after binarization even when their real-valued magnitudes remain multi-level.

\subsection{Gate histograms (\ModelNameShort\ vs.\ MLP)}
\label{app:gate_fig}

To contextualize Table~\ref{tab:prim_only}, we visualize which gate types dominate the recovered computations.
Figure~\ref{fig:gate_hist} compares: (i) \ModelNameShort-all (all decoded units), (ii) \ModelNameShort-path (decoded readout path), and (iii) MLP-all (exact recovered gates from the layerwise primitive probe, aggregated across layers) for each matching regime (Appendix~\ref{app:gate_hists}).

\begin{figure}[t]
\centering
\includegraphics[width=0.98\linewidth]{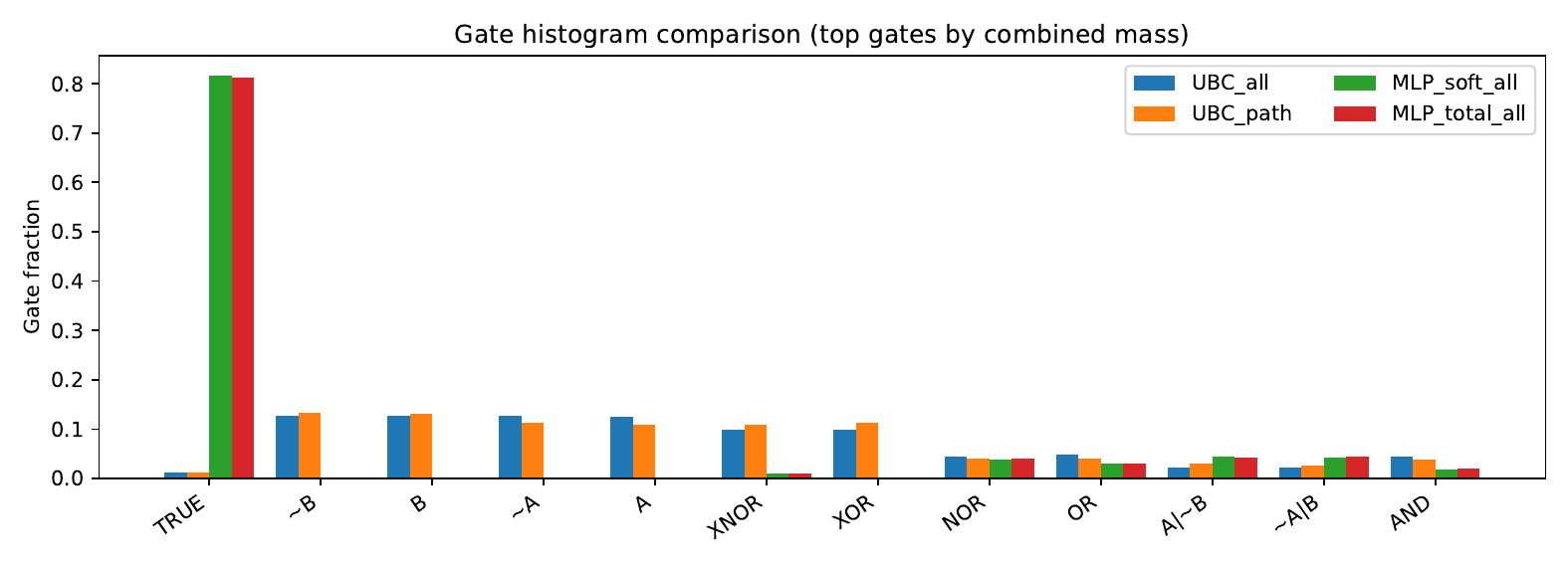}
\caption{Gate histograms: \ModelNameShort-all, \ModelNameShort-path, and MLP-all (exact recovered gates) across matching regimes.}
\label{fig:gate_hist}
\end{figure}

\noindent\textbf{Takeaway.}
Although Table~\ref{tab:prim_only} reports high \texttt{primBest} values, Figure~\ref{fig:gate_hist} reveals that the MLP exact-hit histogram is heavily dominated by the constant \texttt{TRUE} gate in this regime.
Thus, primitive recoverability should be interpreted jointly with the gate histogram: a high best-match score can reflect substantial mass on trivial (constant) gates rather than widespread recovery of non-constant fan-in-$2$ logic.
This motivates treating BNR (Table~\ref{tab:bnr_only}) as the primary neuronwise interpretability certificate, with primitive probes serving as auxiliary diagnostics.

\subsection{Expression complexity}
\label{app:expr_len}

We additionally report a coarse expression-complexity proxy based on token counts
(number of variable occurrences plus Boolean operators) of the \emph{label} formulas and the \emph{decoded \ModelNameShort} expressions.
Since dataset formulas are generated without algebraic simplification, shorter decoded expressions should be interpreted only as a simplicity bias among logically equivalent representations, not as a semantic advantage (Table~\ref{tab:expr_len}).

\begin{table}[t]
\centering
\small
\begin{tabular}{l c c}
\toprule
Setting & label tok & \ModelNameShort\ tok \\
\midrule
\ModelNameShort\ (main) & 11.80 $\pm$ 0.00 & 9.80 $\pm$ 0.15 \\
\bottomrule
\end{tabular}
\caption{Average token complexity of label expressions vs.\ decoded \ModelNameShort\ expressions (mean $\pm$ std across seeds). Label expressions are generated without algebraic simplification; decoded \ModelNameShort\ expressions are often shorter among logically equivalent forms.}
\label{tab:expr_len}
\end{table}

\subsection{How we selected the final hyperparameter setting (ablation path)}
\label{app:ablation_path}

We selected the final experimental configuration via a staged ablation, keeping the dataset and evaluation protocol fixed (full truth-table training and EM evaluation). The stages are:

\paragraph{Stage I: coarse hyperparameter grid (including lifting).}
We first ran a broad grid over routing strategy, repulsion settings, optimizer choice, and the lifting toggle (among other knobs), and ranked configurations by EM. Table~\ref{tab:top8_full_grid} and Table~\ref{tab:top8_lift_full_grid} reports the top configurations (aggregated across seeds), and motivates our choice of the baseline hyperparameter setting used in later stages. In this stage, lifting (\texttt{lifting.use}) is treated as a tunable option.

\begin{table}[t]
\centering
\small
\begin{tabular}{lllll c c}
\toprule
route & repel & mode & eta & lam\_const16 & EM & n \\
\midrule
mi\_soft & False & log & 2 & 0 & 0.8651 $\pm$ 0.0124 & 20 \\
mi\_soft & False & log & 2 & 0.005 & 0.8578 $\pm$ 0.0126 & 20 \\
mi\_soft & False & log & 2 & 0.001 & 0.8557 $\pm$ 0.0126 & 20 \\
mi\_soft & True & hard-log & 2 & 0.001 & 0.8435 $\pm$ 0.0006 & 19 \\
mi\_soft & True & log & 2 & 0.001 & 0.8435 $\pm$ 0.0006 & 20 \\
mi\_soft & True & log & 2 & 0.005 & 0.8435 $\pm$ 0.0004 & 18 \\
mi\_soft & True & hard-log & 2 & 0.005 & 0.8435 $\pm$ 0.0004 & 20 \\
mi\_soft & True & hard-log & 2 & 0 & 0.8432 $\pm$ 0.0012 & 20 \\
\bottomrule
\end{tabular}
\caption{Top-8 configurations on full\_grid (mean $\pm$ std across seeds).}
\label{tab:top8_full_grid}
\end{table}

\begin{table}[t]
\centering
\small
\begin{tabular}{lllll c c}
\toprule
route & repel & mode & eta & lam\_const16 & EM & n \\
\midrule
learned & True & hard-log & 2 & 0.005 & 0.7496 $\pm$ 0.0056 & 20 \\
learned & True & hard-log & 2 & 0 & 0.7457 $\pm$ 0.0055 & 18 \\
learned & True & hard-log & 2 & 0.001 & 0.7408 $\pm$ 0.0078 & 19 \\
learned & True & log & 2 & 0.005 & 0.7391 $\pm$ 0.0045 & 20 \\
learned & True & log & 2 & 0 & 0.7367 $\pm$ 0.0061 & 18 \\
learned & False & log & 2 & 0 & 0.7366 $\pm$ 0.0055 & 18 \\
learned & False & log & 2 & 0.005 & 0.7357 $\pm$ 0.0044 & 15 \\
learned & False & log & 2 & 0.001 & 0.7296 $\pm$ 0.0079 & 18 \\
\bottomrule
\end{tabular}
\caption{Top-8 configurations on lift\_full\_grid (mean $\pm$ std across seeds).}
\label{tab:top8_lift_full_grid}
\end{table}

\paragraph{Stage II: 16-gate interpolant ablation (holding the Stage-I setting fixed).}
Fixing the best-performing Stage-I hyperparameters, we ablated the differentiable 16-gate interpolant used during training:
\texttt{sigma16.mode} $\in\{\texttt{rbf},\texttt{bump},\texttt{lagrange}\}$.
Table~\ref{tab:sigma16_mode_ablation} summarizes the results and shows that \texttt{rbf} performs best in our regime; we therefore fix \texttt{sigma16.mode=rbf} for subsequent experiments.

\begin{table}[t]
\centering
\small
\begin{tabular}{l c c}
\toprule
$\sigma_{16}$ mode & EM & row-acc \\
\midrule
rbf & 0.8721 $\pm$ 0.0024 & 0.9908 $\pm$ 0.0004 \\
bump & 0.8477 $\pm$ 0.0044 & 0.9873 $\pm$ 0.0006 \\
lagrange & 0.5694 $\pm$ 0.0178 & 0.9552 $\pm$ 0.0022 \\
\bottomrule
\end{tabular}
\caption{Ablation over the $\sigma_{16}$ gate-interpolant choice on the \texttt{mode\_cmp} suite (mean $\pm$ std across seeds).}
\label{tab:sigma16_mode_ablation}
\end{table}

\paragraph{Stage III: width/depth budget sweep around dataset proxies.}
Finally, holding all other hyperparameters fixed (including \texttt{sigma16.mode=rbf}), we swept additional width/depth budget relative to dataset-provided proxies:
\[
S_{\mathrm{model}} = S_{\mathrm{base}} + S_{\mathrm{add}},
\qquad
L_{\mathrm{model}} = L_{\mathrm{base}} + L_{\mathrm{add}}.
\]
Figure~\ref{fig:sweep_em} shows that increasing $S_{\mathrm{add}}$ substantially improves EM, while increasing $L_{\mathrm{add}}$ tends to degrade EM in this regime. Based on this sweep we select $S_{\mathrm{add}}=10$ and $L_{\mathrm{add}}=0$ as the final setting used for the \ModelNameShort-vs-MLP comparisons in the main text.

\subsection{Width/depth sweep: effect of extra budget}
\label{app:sweep}

\begin{figure}[htpb]
\centering
\includegraphics[width=0.85\linewidth]{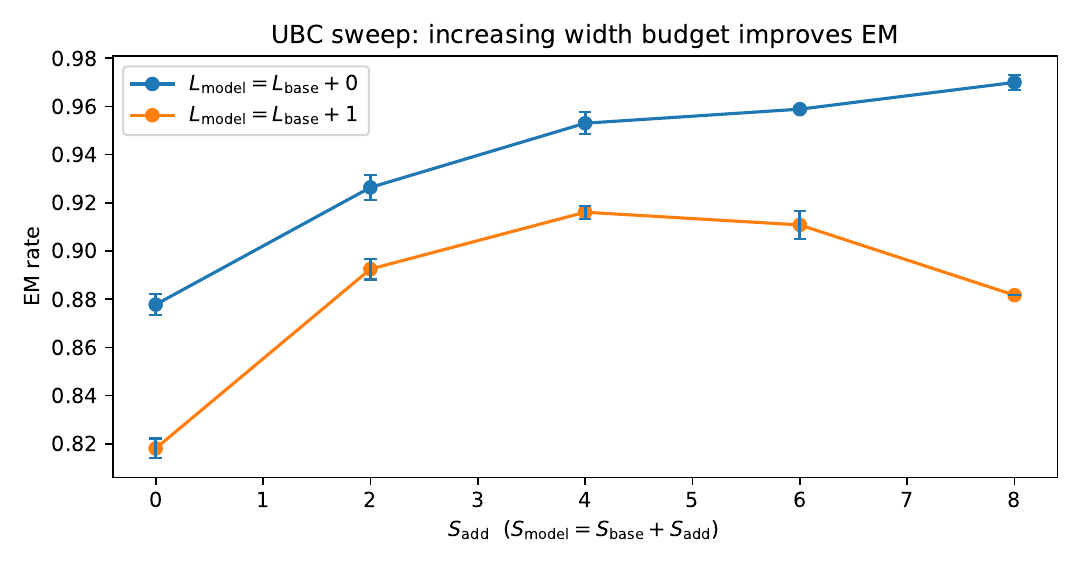}
\caption{\ModelNameShort\ sweep: EM vs.\ $S_{\mathrm{add}}$ for different $L_{\mathrm{add}}$, where $S_{\mathrm{model}}=S_{\mathrm{base}}+S_{\mathrm{add}}$ and $L_{\mathrm{model}}=L_{\mathrm{base}}+L_{\mathrm{add}}$.}
\label{fig:sweep_em}
\end{figure}

\section{Additional Related Work}
\label{app:rw}
\paragraph{Neuro-symbolic reasoning and probabilistic circuits.}
Neuro-symbolic methods integrate neural components with symbolic structure or logic-based inference.  Logic Tensor Networks enforce differentiable satisfaction of first-order constraints during training~\cite{Badreddine2022LTN}, while neural probabilistic logic programming systems such as DeepProbLog embed neural predicates inside probabilistic programs for joint learning and inference~\cite{manhaeve2018deepproblog}.  Probabilistic circuits, including sum-product networks (SPNs), represent distributions as structured compositions that admit efficient exact inference under tractable structural conditions~\cite{PoonDomingos2011SPN,ProbCirc20}.  Our approach is closest in spirit in that it builds in an intrinsic circuit-level representation rather than extracting one post hoc; we specialize to fan-in-$2$, fan-out-$1$ Boolean circuits and obtain an explicit certificate of neuronwise Booleanity.  Extending the construction to richer circuit families (e.g., variable fan-in/fan-out or alternative gate sets) is a natural direction for future work.

\end{document}